\documentclass{amsart}

\usepackage{amssymb,mathtools,hyperref,url,enumitem,complexity,tikz,caption}
\usepackage[linesnumbered,ruled,noend]{algorithm2e}
\hypersetup{colorlinks,linkcolor={blue},citecolor={blue},urlcolor={blue}}
\usepackage[style=alphabetic,maxbibnames=99]{biblatex}

\renewbibmacro{in:}{}
\DeclareFieldFormat*{title}{#1}
\newcommand{\commentR}[1]{\tcp*[r]{\parbox[t]{4.9cm}{\raggedright\normalfont{#1}}}}
\newcommand{\commentF}[1]{\tcp*[f]{\parbox[t]{4.9cm}{\raggedright\normalfont{#1}}}}
\SetKwComment{tcp}{}{}
\let\oldnl\nl% Store \nl in \oldnl
\newcommand{\nonl}{\renewcommand{\nl}{\let\nl\oldnl}}% Remove line number for one line
\makeatother

\usepackage{etoolbox} % For \patchcmd
\makeatletter
\patchcmd{\algocf@makecaption@ruled}{\hsize}{\textwidth}{}{} % Caption to stretch full text width
\patchcmd{\@algocf@start}{-1.5em}{0em}{}{}% For // to right margin
\makeatother

\SetAlCapHSkip{0pt} % Reset left skip of caption
\newtheorem{theorem}{Theorem}[section]

\newtheorem{lemma}[theorem]{Lemma}
\newtheorem{proposition}[theorem]{Proposition}
\newtheorem{corollary}[theorem]{Corollary}
\newtheorem{conjecture}[theorem]{Conjecture}

\theoremstyle{definition}
\newtheorem{definition}[theorem]{Definition}
\newtheorem{remark}[theorem]{Remark}

\numberwithin{equation}{section}

\newcommand{\bR}{\mathbb{R}}
\newcommand{\bQ}{\mathbb{Q}}
\newcommand{\bZ}{\mathbb{Z}}
\newcommand{\bF}{\mathbb{F}}
\newcommand{\bC}{\mathbb{C}}
\newcommand{\cO}{\mathcal{O}}

\newcommand{\ba}{\mathbf{a}}
\newcommand{\bb}{\mathbf{b}}
\newcommand{\bu}{\mathbf{u}}
\newcommand{\bv}{\mathbf{v}}

\newcommand{\bx}{\mathbf{x}}
\newcommand{\fb}{\mathfrak{b}}
\newcommand{\fp}{\mathfrak{p}}
\newcommand{\Mat}{\textup{Mat}}
\newcommand{\bc}{\mathbf{c}}
\newcommand{\ff}{\mathfrak{f}}
\newcommand{\fa}{\mathfrak{a}}
\newcommand{\SVP}{\ComplexityFont{SVP}}
\newcommand{\LWE}{\ComplexityFont{LWE}}
\newcommand{\SIVP}{\ComplexityFont{SIVP}}
\newcommand{\uSVP}{\ComplexityFont{uSVP}}

\allowdisplaybreaks

\begin{document}

\title{NP-Hardness of Ideal Lattice Problems}

\thanks{This material is based upon work supported by the National Science Foundation under Grant Nos. 2336000 and 2602045.}

\author{Daniel E. Martin}
\email{dem6@clemson.edu}
\address{Clemson University, O-110 Martin Hall, 2020 Parkway Drive, Clemson, SC}

\subjclass[2020]{Primary: 68Q17, 11H06, 11R04, 11Y16, 94A60. Secondary: 11Y40, 11R80, 68Q12.}

\keywords{lattice reduction, shortest vector problem, closest vector problem, ideal lattice.}

\date{\today}

\begin{abstract}We establish the worst-case hardness of several ideal lattice problems (including $\SVP$ and $\CVP$) in the $\ell_2$ norm by providing a dimension-preserving, deterministic polynomial time reduction from their generic lattice versions. The reduction constructs an ideal lattice in the canonical embedding of a number field that approximates some input lattice up to scaling and orthogonal transformation. The integers defining the ideal and the ambient number ring, in particular its discriminant, are all polynomial in bit length relative to the generic input lattice. Furthermore, the ideal is invertible, the ring is monogenic, and the number field is totally real. If the number ring is also required to be a full ring of integers, the reduction conjecturally succeeds in bounded-error quantum polynomial time.\end{abstract}

\maketitle

\section{Introduction}\label{sec:1}

The theoretical complexity of ideal-lattice reduction has come under scrutiny due to its cryptographic implications. In this paper, we prove the $\NP$-hardness of two such problems, called $\SVP$ and $\CVP$, under appropriate parameters.

Beginning with the work of Ajtai--Dwork \cite{dwork} and Hoffstein--Pipher--Silverman \cite{hoffstein}, lattices have been a significant source of hard problems in cryptography. Lattice-based encryption is unique in its benefits---presumed quantum security, full homomorphicity \cite{gentry,brakerski}, average-case hardness \cite{ajtai,regev,peikert}---but natively suffers from large key sizes. As a remedy, Hoffstein thought to work over the cyclic rings $\bZ[x]/(x^n-1)$, where the public key becomes a single element defined by only $n$ integers rather than $n^2$ (the number of integers that define a generic lattice). He developed this idea alongside Pipher and Silverman into the NTRU cryptosystem. After NTRU, the theory of cyclotomic ring-based encryption was advanced notably by Micciancio \cite{micc2} and Peikert and Rosen \cite{rosen}. The notion was generalized by Micciancio and Lyubashevsky to the rings $\bZ[x]/(f(x))$ for monic irreducible polynomials $f(x)$ \cite{lyub,micc2}. Further development by Lyubashevsky, Peikert, Regev, Stephens-Davidowitz, Rosca, Stehl{\'e}, Wallet, and others \cite{lyub2,peik,rosca} brought theoretical cryptography to its modern state: lattices are drawn from the canonical embedding of ideals in any ring of integers, or more generally, from any number ring provided the ideal is invertible. These lattices underlie the theoretical hardness of problems like Ring $\LWE$ \cite{regev,lyub2}, Polynomial $\LWE$ \cite{stehle}, and Order $\LWE$ \cite{rosca,bolb}.

While the algebraic structure of ideal lattices provides efficiency gains, it may also open the door to new attacks. Despite hints of vulnerability \cite{bauch,campbell,cramer,mildly,pellet}, cryptosystems that derive security from the worst-case hardness of ideal lattice problems are already implemented. This hardness is poorly understood compared to generic lattices, where, for example, the Shortest and Closest Vector Problems ($\SVP$ and $\CVP$) are long known to be \NP-hard \cite{ajtai2,vanEmde}. We address this disparity with a generic-to-ideal lattice reduction algorithm. 

\begin{theorem}\label{thm:main}In deterministic polynomial time, Algorithm~\ref{alg:1} approximates (to any precision) a scaled, orthogonal transformation of its input lattice with an ideal lattice of the same dimension. The ideal, defined in polynomial bit length, is coprime to the conductor (and thus invertible) in a monogenic, totally real number ring.\end{theorem}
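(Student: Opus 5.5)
We build the ideal lattice through its Gram matrix. Since the statement only asks for equality up to scaling and an orthogonal map, an ideal lattice $\sigma(I)\subset\bR^n$ matches the input lattice $L$ exactly when, for some basis $\alpha_1,\dots,\alpha_n$ of $I$, the trace form $\bigl(\mathrm{Tr}_{K/\bQ}(\alpha_i\alpha_j)\bigr)_{i,j}$ is proportional to the Gram matrix $G=B^TB$ of $L$. This uses that $K$ is totally real, so the canonical embedding is real and $\langle\sigma(x),\sigma(y)\rangle=\mathrm{Tr}(xy)$. So the plan is to approximate $G$ by a rational positive definite matrix $G'$, clear denominators, and then realize an integer multiple $cG'$ as the trace form of an invertible ideal in some $\bZ[\theta]$ with $\theta$ totally real.

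\emph{Choosing the ring.} Pick a monic polynomial $f\in\bZ[x]$ of degree $n$ with $n$ distinct real roots and polynomially bounded coefficients. One candidate is $f(x)=\prod_{k=1}^n(x-a_k)+\epsilon$ with distinct integers $a_k$ and a small perturbation $\epsilon$, chosen so that $f$ is irreducible (for instance Eisenstein at a chosen prime $p$, with the sign changes between consecutive $a_k$ preserved). Then $K=\bQ[x]/(f)$ is totally real, $R=\bZ[\theta]$ is monogenic, and its discriminant is polynomial in bit length. The trace form on the power basis of $R$ is the Hankel matrix of power sums $T=(\mathrm{Tr}\,\theta^{i+j})$, which is positive definite. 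Rescaling coordinates by a fixed matrix $M$ with $M^T T M$ close to the identity lets us work in a frame where $R\otimes\bR$ looks like standard $\bR^n$.

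\emph{Realizing the lattice.} Given a rational basis $B'$ approximating $B$ (to the requested precision), take a sublattice of $R$ of the form $J=\{x : \text{coordinates of }x\text{ in the frame }M\text{ lie in }N B'\bZ^n\}$ for a large integer scale $N$. In general $J$ is only a $\bZ$-module, not an ideal. The fix I would try first is to take ideals of the shape
\[
I=\mathfrak{a}\,\beta+qR,
\]
with $q$ a large prime. Since every additive subgroup of $R$ containing $qR$ is determined by its image in $R/qR$, if $q$ splits completely, $R/qR\cong\bF_q^n$ via $\theta\mapsto$ its roots mod $q$, and ideals of $R$ containing $qR$ correspond to coordinate subspaces. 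That family is too restrictive to shape geometry, so instead I would use the principal ideal $I=\gamma R$ for a well-chosen $\gamma$. Its trace form is $\sum_k \sigma_k(\gamma)^2\,\sigma_k(\theta^i)\sigma_k(\theta^j)$, that is, $V^T D^2 V$, where $V$ is the Vandermonde matrix of the roots and $D=\mathrm{diag}(\sigma_k(\gamma))$. This only reaches Gram matrices equivalent to diagonal scalings, so the remaining freedom has to come from the choice of basis and from $\mathbb Z$-equivalence.

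The cleaner route is to use freedom in $f$. Every lattice is isometric, after scaling and approximation, to $V\bZ^n$ for some real Vandermonde-type matrix up to a unimodular change. So we approximate the lattice $L$ by $\sigma(R')$ itself, where $R'=\bZ[\theta]$ with roots chosen to fit. Lattice reduction (e.g.\ LLL) of $B$ gives $B=QU$, and we need the roots of $f$ to encode the geometry. The main obstacle is exactly this: matching an arbitrary $n(n+1)/2$-dimensional Gram matrix using only $n$ roots plus the choice of ideal. My plan there is to combine $n$ parameters from the roots, $n$ from the diagonal $D$ given by a totally positive $\gamma$ (approximated by elements of $R$ through the density of $\sigma(K)$ in $\bR^n$), and the $\mathrm{GL}_n(\bQ)$ freedom from taking $I=\gamma\,\mathfrak{c}$ with $\mathfrak{c}$ an ideal index-$q^k$ sublattice. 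Combined, these realize any rational $G'$ up to scale: every rational lattice $L'\subset K$ of full rank is a fractional ideal of its multiplier ring, and we take $L'=\sigma^{-1}(\text{scaled }B')$ directly. The conductor issue is then repaired by passing to $I\cap R$ localized away from the conductor. I would choose $q$ and $N$ coprime to $\mathrm{disc}(f)$ and take $I=R\cdot L'$ corrected by $(I:L')$ so that its index is prime to the conductor. This makes $I$ invertible, since ideals coprime to the conductor are invertible. Finally, I would check that all integers involved (coefficients of $f$, the Hermite normal form of $I$, $N$, $q$) have polynomial bit length, which follows from the polynomial size of $\mathrm{disc}(f)$ and of the approximation precision.
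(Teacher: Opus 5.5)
Your proposal has a genuine gap at the step where the lattice is actually turned into an ideal. Several of your preliminary observations are correct: the trace-form description of $\Sigma(\fa)$ in a totally real field, and the fact that ideals containing $qR$, or principal ideals $\gamma R$, only give diagonal rescalings. But you never overcome the obstacle you then name, namely matching a $\tfrac{n(n+1)}{2}$-parameter Gram matrix with $n$ roots plus a choice of ideal. The Vandermonde claim is asserted without proof, and the final step falls back on ``$L'$ is a fractional ideal of its multiplier ring.''

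That statement is true, but it is exactly the reduction the paper dismisses at the end of Section~\ref{ss:numbertheory}. For $n\ge 3$, the multiplier ring of a high-precision approximation is a small order of huge conductor, in general not monogenic. Moreover, $L'$ need be neither invertible in it nor coprime to its conductor. Your repair (pass to $R\cdot L'$ ``corrected by $(I:L')$'', or localize away from the conductor) replaces $L'$ by a different lattice. For example, $RL'\supseteq L'$, typically with large index, and its shape is unrelated to the input, so the approximation is lost.

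In fact no repair inside a ring $R=\bZ[\theta]$ chosen independently of the input can work. Invertible ideals of $R$ lie in finitely many Picard classes $[\fa_1],\dots,[\fa_h]$, so every such $\Sigma(\fa)$ equals $\mathrm{diag}(\Sigma(\gamma))\,\Sigma(\fa_c)$ for some $c$. Modulo scaling, sign changes and units, the attainable shapes therefore lie in the image of $h$ copies of an $(n-1)$-dimensional compact torus. That image is closed and nowhere dense in the $(\tfrac{n(n+1)}{2}-1)$-dimensional space of lattice shapes when $n\ge 2$, so most lattices cannot be approximated to arbitrary precision. The ``$\mathrm{GL}_n(\bQ)$ freedom'' from index-$q^k$ subideals is illusory for the same reason: invertible subideals already belong to this family, and arbitrary sublattices are not ideals.

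The missing idea is to let the input lattice (and the precision) determine the field. The paper takes a symmetric integer basis $A$ of $M\bZ^n$ (e.g.\ $A=USU^T$ from the Smith normal form), scales and perturbs it symmetrically, and sets $f=\det(xI_n-A)$. Let $\bv=(\alpha_1,\dots,\alpha_n)^T$ be a column of $\mathrm{adj}(\lambda I_n-A)$; its entries lie in $\bZ[\lambda]$. The identity $A\bv=\lambda\bv$ says exactly that $\fa=\sum_i\bZ\alpha_i$ is an ideal of the monogenic ring $\bZ[\lambda]$. Symmetry of $A$ makes $\bQ(\lambda)$ totally real. By the spectral theorem it also makes $EC$ orthogonal, where $E$ has the conjugate eigenvectors as columns, $D=\mathrm{diag}(\Sigma(\lambda))$, and $C$ normalizes the columns of $E$. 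Hence $A=(EC)(CD)E^T$, so $A\bZ^n$ is an orthogonal image of $CDE^T\bZ^n$. After scaling, this is approximated by $\Sigma(\beta\fa)$ with $\Sigma(\beta)\approx r\,\mathrm{diag}(CD)$. The $\tfrac{n(n+1)}{2}$ parameters of $A$ are carried by the eigenvalues (the field) and the eigenvectors (the ideal), which is precisely the freedom a fixed $f$ lacks.

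What remains is arithmetic and quantitative, and your proposal does not address it either. Irreducibility is forced by $A\equiv S_2\pmod 2$, where $S_2$ is symmetric with irreducible characteristic polynomial mod $2$ (Lemma~\ref{lem:S2}). Invertibility comes from a sieve choosing small perturbations $a_j,b_j$ so that $(\fa,f'(\lambda))=\bZ[\lambda]$; this gives coprimality to the conductor, since $f'(\lambda)\in\ff(\bZ[\lambda])$. Polynomially bounded search lengths (Lemma~\ref{lem:ajbj}) then give deterministic polynomial time.
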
 

\begin{remark}\label{rem:random}There are two choices of subroutine for Algorithm~\ref{alg:1}: Subroutine~\ref{sub:2} is simpler but randomized, while Subroutine~\ref{sub:3} is deterministic and justifies Theorem~\ref{thm:main}. The randomized algorithm repeats until the output ring is a full ring of integers. Each iteration conjecturally succeeds with probability roughly 0.323.\end{remark}

The informal moral of Theorem \ref{thm:main} is: \emph{The general tools of algebraic number theory do not make ideal lattices inherently more vulnerable than generic lattices.} Conductor coprimality is essential to this claim. Without it, the general tools of algebraic number theory, like unique factorization or the two-generator property (see Section~\ref{ss:numbertheory}), do not apply. 

Theorem \ref{thm:main} does \emph{not} imply that tools specific to certain fields present no new vulnerability. This is because Algorithm~\ref{alg:1} does not allow absolute control over the number field in which the output ideal lives. For example, it cannot target a specific field with a preprocessed unit group. There is, however, some control over the field: The algorithm produces a matrix $A\in\Mat_n(\bZ)$ whose characteristic polynomial, call it $f(x)$, defines the number field and ring. Alternative characteristic polynomials that produce different fields can be obtained by slightly perturbing $A$ or by replacing it with $BAB^T$ for any integer matrix $B\in A\text{GL}_n(\bZ)A^{-1}$. The resulting ideal lattice still approximates the input up to orthogonal transformation and scaling. The author has not investigated whether these freedoms can produce additional properties like Galois or a near-orthonormal $\bZ$-basis for the ring. Thus the exact degree to which our approach can close the gap between generic lattices and those loved by cryptography is a subject for future research.

As a consequence of Theorem~\ref{thm:main}, hardness results for generic lattice reduction problems in the $\ell_2$ norm now apply in the ideal setting. The approximation factors below were proved for generic lattices in the preprints of Wan \cite{wan} (building on Bennett and Peikert \cite{bennett}) and Song \cite{song}. (These factors have steadily improved over the past few decades; see \cite{ajtai2,cai,miccSVP,khot,khot2,haviv} for progress regarding $\SVP_2$ assuming $\RP\neq\NP$, and see \cite{vanEmde,dumer,dinur,openai} for $\CVP_2$.)  

\begin{corollary}\label{cor:main}Restricted to the canonical embedding of ideals coprime to the conductor in totally real, monogenic number rings, $\SVP_2$ and $\CVP_2$ are \NP-hard to approximate within factors of $\sqrt{2}$ and $\smash{n^{\frac{1}{2}-\varepsilon}}$, respectively.\end{corollary}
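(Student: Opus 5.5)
The corollary follows by composing the known generic hardness reductions with the reduction of Theorem~\ref{thm:main}. We argue via gap versions. The starting points are Wan's result (building on Bennett--Peikert) that $\mathrm{GapSVP}_{\gamma}$ in the $\ell_2$ norm is \NP-hard for every constant $\gamma<\sqrt{2}$, and Song's result that $\mathrm{GapCVP}_{\gamma}$ in $\ell_2$ is \NP-hard for $\gamma=n^{\frac12-\varepsilon}$. Both reductions may be assumed to output integral bases of polynomial bit length together with a rational threshold $d$. A YES instance has $\lambda_1(L)\le d$ (respectively $\mathrm{dist}(t,L)\le d$), and a NO instance has the corresponding quantity $>\gamma d$. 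Since the claimed factor $\sqrt2$ is presumably attained only in the limit, I would read the SVP statement as ``within any factor $\gamma<\sqrt2$'', matching the source. Likewise, any slack in $\varepsilon$ can absorb constant-factor losses.

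Given such an instance $(B,d)$, or $(B,t,d)$ for CVP, run Algorithm~\ref{alg:1} on $B$ with precision parameter $\delta$. By Theorem~\ref{thm:main}, in deterministic polynomial time we obtain an invertible ideal $I$ in a totally real monogenic order. Its canonical embedding $\sigma(I)$ has a basis $B'$ satisfying $\|B' - cQB\|\le\delta$ entrywise (or in operator norm) for some scalar $c>0$ and orthogonal matrix $Q$. For CVP we also need to map the target. Set $t'=cQt$, computed approximately. If $Q$ is not explicit, write $t=Bx$ with rational $x$ and use $t'=B'x$, which lies within $\|x\|\delta$ of $cQt$.

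The key step is a perturbation estimate. For the integer coefficient vector $z$ of a shortest or closest vector, $\bigl|\|B'z\| - c\|Bz\|\bigr|\le \delta\sqrt{n}\,\|z\|$ (or the analogous bound for $B'z-t'$). We need to bound $\|z\|$ for the relevant vectors: those of length at most $\gamma d$ or so. Cramer's rule gives $\|z\|\le 2^{\mathrm{poly}}$ in terms of $\|B^{-1}\|$ and $\gamma d$. For the other direction, any $z$ with $\|B'z\|$ small also has $c\|Bz\|$ small, because $B'$ is a small perturbation of the invertible $cQB$. So the coefficients are bounded for lattice vectors in both lattices. Choosing $\delta=2^{-p}$ with $p$ a large polynomial in the input size then makes the relative distortion at most $1+\eta$ for any inverse-exponential $\eta$. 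The new threshold is $d'=c d(1+\eta)$, and we must check that the NO case still exceeds $\gamma' d'$ with $\gamma'=\gamma(1-\eta)/(1+\eta)$. Since $\gamma<\sqrt2$ strictly, $\gamma'$ can be kept above any prescribed $\gamma_0<\sqrt2$. For CVP, $\gamma'$ stays at least $n^{\frac12-\varepsilon'}$. The dimension is preserved, so $n$ is unchanged.

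The main obstacle is making this bookkeeping rigorous while keeping the output polynomial. The size of $I$ and the ring (in particular the discriminant) must remain polynomial even though $p$ is polynomial, and Theorem~\ref{thm:main} guarantees exactly this ``to any precision, polynomial bit length''. Another issue is that the problems on ideal lattices should be posed with the ideal given by its integer generators in the monogenic ring, not by real embeddings. Since the canonical embedding is irrational, I would note that the \SVP/\CVP\ instance is defined abstractly via $\|\sigma(\alpha)\|^2=\mathrm{Tr}(\alpha^2)$. This is a rational quadratic form in totally real fields, so distances are exactly computable and the gap argument applies without numerical issues.
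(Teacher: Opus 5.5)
Your reduction skeleton is fine, but it does not prove the corollary as stated. You explicitly downgrade the $\SVP$ claim to ``every constant $\gamma<\sqrt2$'', and your bookkeeping cannot do better. The distortion turns a gap $\gamma$ into $\gamma(1-\eta)/(1+\eta)<\gamma$, so starting from the factor $\sqrt2$ (the paper cites Wan as proving $\sqrt2$-$\SVP$ \NP-hard for integer lattices) you never land on $\sqrt2$ itself.

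The missing idea is integrality, which is exactly Proposition~\ref{prop:exact}. The input lattice lies in $\bZ^n$ and $\gamma^2=2$ is an integer, so $\lambda_1^2$ and all squared lengths are integers. Hence any $\kappa$ with $\kappa^2<1+(2\lambda_1^2)^{-1}$ makes a $\kappa\sqrt2$-approximate solution an honest $\sqrt2$-approximate one. Bounding $\lambda_1$ by the length of a basis vector shows that $\log\bigl((\sqrt\kappa+1)/(\sqrt\kappa-1)\bigr)$, and hence the cost of Algorithm~\ref{alg:1}, stays polynomial. In your gap language the same fix reads as follows. If $d^2$ has denominator $D$, a NO instance satisfies $\lambda_1^2\geq 2d^2+1/D$. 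This multiplicative slack of $1+2^{-\mathrm{poly}}$ swallows the factor $(1+\eta)/(1-\eta)$ once $\eta$ is taken exponentially small, so the gap $\sqrt2$ survives exactly. More generally, whatever factor $\gamma$ with $\gamma^2\in\bQ$ the generic source provides is inherited without loss, so there is no need to reinterpret the statement.

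Apart from this, your argument is sound and differs from the paper's mainly in packaging. The paper stays with the search problems. Theorem~\ref{thm:sub3} gives $\kappa_2(AB^{-1})<\kappa$ with $AB^{-1}\Sigma(\beta\fa)=M\bZ^n$. Proposition~\ref{prop:condition} then converts a $\gamma$-solution in the ideal lattice into a $\kappa\gamma$-solution in $M\bZ^n$ by multiplying by $AB^{-1}$. For $\CVP$ the target is sent to $BA^{-1}\bx$, which is your $B'x$, and simply $\kappa=2$ is used. Since the condition number controls length ratios multiplicatively for all lattice vectors at once, your additive $\delta$-estimate, the Cramer-rule bounds on $\|z\|$, and the ``other direction'' argument are unnecessary.

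Your Karp-style gap reduction is a legitimate (slightly stronger) alternative. However, it forces you to output the threshold $d'=cd(1+\eta)$, and the algorithm does not hand you $c$. You would have to estimate it, e.g.\ by comparing $\mathrm{Tr}((\alpha_1\beta)^2)$ with the squared length of the first column of $A$, at the cost of one more harmless factor of $\kappa$. Conversely, your remark that $|\Sigma(\alpha)|_2^2=\mathrm{Tr}(\alpha^2)$ in a totally real field is a genuine addition. It means the ideal instance and a target in $\Sigma(K)$ can be specified exactly with rational squared distances, which fills a representation issue the paper's proof passes over.
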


Related results have come from Liu, Feng, and Pan. In their recent preprints \cite{liuSVP} and \cite{liuCVP}, they prove that exact $\SVP_2$ is $\NP$-hard in lattices of rank-$2$ modules over prime cyclotomic rings of integers (ideals are rank-$1$ modules), and that exact $\CVP_2$ is $\NP$-hard in lattices of principal ideals in power-of-2 cyclotomic rings of integers. 

After we fix notation and definitions in Section~\ref{sec:2}, we present an intuitive overview of Algorithm~\ref{alg:1} and prove its basic properties in Section~\ref{sec:3}. The main technical hurdle---producing invertible ideals---is avoided until Section~\ref{sec:4}, which contains our randomized solution. It is expected to produce ideals in a ring of integers, but this experimentally observed high probability of success is essentially equivalent to an unresolved conjecture of Wang and Yu \cite{wang}. Our deterministic solution is presented in Section~\ref{sec:5}. This allows us to prove Theorem~\ref{thm:main} and Corollary~\ref{cor:main}. 

\section{Background}\label{sec:2}

\subsection{Lattice problems}We state the search versions of several problems that are amenable to Algorithm~\ref{alg:1}. The algorithm could just as easily be used to reduce decision versions from the generic to the ideal setting.

For a lattice $\Lambda$, recall that $\lambda_i(\Lambda)$ is the minimal radius of a closed ball (in some fixed $\ell_p$ norm) around the origin that contains $i$ independent vectors in $\Lambda$. We use $|\bv|_p$ to denote the $\ell_p$ norm of $\bv\in\bR^n$.

\begin{definition}\label{def:svp}Given $\gamma\in[1,\infty)$ and a basis for a lattice $\Lambda$, the \emph{approximate Shortest Vector Problem}, denoted $\SVP_p$ or $\gamma$-$\SVP_p$, asks for a nonzero lattice vector of length at most $\gamma\lambda_1(\Lambda)$.\end{definition}

\begin{definition}\label{def:sivp}Given $\gamma\in[1,\infty)$ and a basis for an $n$-dimensional lattice $\Lambda$, the \emph{approximate Shortest Independent Vectors Problem}, denoted $\SIVP_p$ or $\gamma$-$\SIVP_p$, asks for $n$ independent lattice vectors of length at most $\gamma\lambda_n(\Lambda)$.\end{definition}

\begin{definition}\label{def:usvp}Given $\gamma\in[1,\infty)$ and a basis for a lattice $\Lambda$ with $\gamma\lambda_1(\Lambda)\leq\lambda_2(\Lambda)$, the \emph{unique Shortest Vector Problem}, denoted $\uSVP_p$ or $\gamma$-$\uSVP_p$, is $1$-$\SVP_p$ in $\Lambda$.\end{definition}

\begin{definition}\label{def:cvp}Given $\gamma\in[1,\infty)$, a basis for a lattice $\Lambda\subset\bR^n$, and a target $\bx\in\bR^n$, the \emph{approximate Closest Vector Problem}, denoted $\CVP_p$ or $\gamma$-$\CVP_p$, asks for $\bv\in\Lambda$ with $\displaystyle|\bv-\bx|_p\leq\gamma\min_{\bu\in\Lambda}|\bu-\bx|_p$.\end{definition}

See \cite{goldwasser} for a comprehensive overview of these problems and \cite{decade} for a focus on cryptographic applications. In \cite{bennett2}, Bennett provides a complete survey of complexity results on $\SVP$ prior to 2023. Notable subsequent progress has been made by Hair and Sahai \cite{hair} and Wan \cite{wan}.   

\subsection{Number theory}\label{ss:numbertheory}Let $\sigma_1,\dots,\sigma_r:K\hookrightarrow\bR$ be the real embeddings of a number field $K$, and let $\tau_1,\overline{\tau}_1,\dots,\tau_s,\overline{\tau}_s:K\hookrightarrow \bC$ be the complex conjugate pairs of non-real embeddings. 

\begin{definition}The \emph{canonical} (or \emph{Minkowski}) \emph{embedding} of $K$ is $\Sigma:K\hookrightarrow \bR^r\times\bC^s$ defined by $\Sigma(\alpha)=(\sigma_1(\alpha),\dots,\tau_s(\alpha))$.\end{definition} 

In contemporary literature, ``ideal lattice'' typically refers to the canonical embedding of an ideal. We maintain this convention. (Prior to \cite{lyub2}, fixing a $\bZ$-basis for $\cO$---typically $1,\dots, x^{n-1}$ when $\cO\simeq\bZ[x]/(f(x))$---and using the coefficient embedding was more popular.) Such a lattice has full rank $n\coloneqq r+2s$.

We call $K$ \emph{totally real} if $s=0$. Ideal lattices in totally real fields may be more vulnerable to attacks that leverage unit approximations because they maximize the unit group rank $r+s-1$ in degree $n$.

\begin{definition}The \emph{ring of integers} (or \emph{maximal order}), denoted $\cO_K$, consists of elements in $K$ that are roots of monic polynomials in $\bZ[x]$. An \emph{order} is a subring of $\cO_K$ that is not contained in a proper subfield of $K$. An order is \emph{monogenic} if it takes the form $\bZ[\lambda]$ for some $\lambda\in\cO_K$.\end{definition}

There are two properties of ideals $\fa\subseteq\cO_K$ that are especially useful in cryptography: they require at most two generators, meaning $\fa=\alpha\cO_K+\beta\cO_K$, reducing the $n^2$ integers needed to define a generic lattice in $n$ dimensions to only $2n$; and when $\cO_K$ is monogenic, its elements can be multiplied in $O(n\log n)$ operations with the Fast Fourier Transform. These properties also hold for an ideal $\fa$ in a non-maximal order $\cO$ provided $\fa$ is \emph{invertible}, meaning $\fa\fb$ is principal for some ideal $\fb\subseteq\cO$. We will target the stronger condition that $\fa$ is coprime to the conductor of $\cO$.

\begin{definition}The \emph{conductor} of an order $\cO$ in $K$ is the ideal \[\ff=\ff(\cO)\coloneqq\{\alpha\in\cO\mid \alpha\cO_K\subseteq\cO\}.\]\end{definition} 

Ideals in $\cO$ coprime to $\ff$ retain many other desirable number theoretic properties. For example, they represent elements of the class group (or Picard group), factor uniquely into products of primes, and are preserved by a rank $r+s-1$ unit group. (Note that $\ff(\cO_K)=\cO_K$, so all ideals in $\cO_K$ are coprime to the conductor and invertible.)

Invertibility is an indispensable hypotheses in hardness proofs for ideal variants of Regev's Learning with Errors problem \cite{lyub2,rosca,bolb,pepin}. If we ignored invertibility, we could easily reduce any generic $n$-dimensional lattice problem to the ideal setting in our favorite degree-$n$ number field, whatever that may be. Indeed, if $[K:\bQ]=n$, a scaled copy of the input lattice can be approximated by some sublattice $\Lambda\subseteq\Sigma(\cO_K)$. Since $\Lambda$ is an ideal in the order $\bZ+\text{cov}(\Lambda)\cO_K$, we're done! Unfortunately, such a reduction is useless because the ideal is almost never invertible. It neither supports the belief that ring variants of $\LWE$ are hard (because the underlying ideals in every $\SVP$-to-$\LWE$-style security proof are invertible), nor does it make the aforementioned properties of invertible ideals available to generic lattice reduction algorithms. 

Our strategy in the next section is completely different. In contrast to the appoach above, it often produces ideals in rings of integers. Even when it doesn't, the ideal can always be made invertible.

\section{Approximating with ideal lattices}\label{sec:3}

\subsection{Intuition and an example}Given an $n\times n$ integer matrix $M$, the goal is to approximate $M\bZ^n$ with a lattice $\Sigma(\fa)$ for some invertible ideal $\fa$ in an order $\cO$. 

We start by finding a symmetric matrix $A$ representing the same lattice $M\bZ^n$. Any symmetric basis will do. The simplest way to find one is perhaps computing the Smith Normal Form $S=U^{-1}MV$, where $S$ is diagonal and $U,V\in\text{GL}_n(\bZ)$. Since \[USU^{T}\bZ^n = (USV^{-1})(VU^T\bZ^n) = M\bZ^n,\] the matrix $A\coloneqq USU^T$ provides a symmetric basis. As a minimal working example, \[M=\begin{bmatrix}1 & 1 \\ 1 & -1\end{bmatrix}=\underbrace{\begin{bmatrix}1 & 0 \\ 3 & -1\end{bmatrix}\begin{bmatrix}1 & 0 \\ 0 & 2\end{bmatrix}\begin{bmatrix}1 & 1 \\ 1 & 2\end{bmatrix}}_{USV^{-1}}\] yields the symmetric matrix \[A=\begin{bmatrix}1 & 0 \\ 3 & -1\end{bmatrix}\begin{bmatrix}1 & 0 \\ 0 & 2\end{bmatrix}\begin{bmatrix}1 & 3 \\ 0 & -1\end{bmatrix}=\begin{bmatrix}1 & 3 \\ 3 & 11\end{bmatrix}.\]

An eigenvalue of $A$, say $\lambda\in\overline{\bQ}$, has a corresponding eigenvector $\bv\in\bZ[\lambda]^n$, found by taking a column of the adjugate of $M-\lambda I_n$. In the example, $\lambda=6+\smash{\sqrt{34}}$ and $\bv=(3, 5+\smash{\sqrt{34}})^T$. 

Assuming the characteristic polynomial of $A$ is irreducible, its remaining eigenvalues and eigenvectors are the algebraic conjugates of $\lambda$ and $\bv$. In particular, if $D$ is the diagonal matrix with diagonal entries $\Sigma(\lambda)$ and $E$ is the matrix of eigenvectors satisfying $AE=ED$, then the columns of $E^T$ are precisely the canonical embeddings of the entries of $\bv$, as in \[E^T=\begin{bmatrix}3 & 5+\sqrt{34} \\ 3 & 5-\sqrt{34}\end{bmatrix}.\] Furthermore, the $\bZ$-module generated by the entries of $\bv$, call it $\fa$, is an ideal in $\bZ[\lambda]$. This equates to saying that each entry of $\lambda\bv$ is a $\bZ$-linear combination of the entries of $\bv$, which is verified by comparing columns on each side of $AE=ED$. In our example, $\fa=3\bZ+(5+\smash{\sqrt{34}})\bZ$ is a nonprincipal prime ideal of norm 3 in $\bZ[\smash{\sqrt{34}}]$.

By the spectral theorem for symmetric matrices, the normalized eigenvectors of $A$ form an orthonormal basis for $\bR^n$. In other words, if $C$ is the diagonal matrix whose $i^\text{th}$ diagonal entry is the reciprocal of the length of the $i^\text{th}$ column of $E$, then $EC$ is an orthogonal matrix. In the example, \[C=\begin{bmatrix}(68+10\sqrt{34})^{-\frac{1}{2}} & 0 \\ 0 & (68-10\sqrt{34})^{-\frac{1}{2}}\end{bmatrix}.\]

The final step scales the vector formed from diagonal entries of $CD$, call it $\bc$, by a sufficiently large real number $r$ and approximates the result with a lattice point $\Sigma(\beta) \in \Sigma(\bZ[\lambda])$. Crucially, $\Sigma(\beta)$ need not be the \emph{closest} lattice point to $r\bc$; that requirement would destroy our algorithm's running time. Provided $r$ dwarfs the entries of $\bc$ in magnitude, the crudest of approximations will do (as in lines 7-12 in Algorithm~\ref{alg:1}).

\begin{figure}[htbp]
    \centering
    % Subfigure 1: The (2,0), (1,1) Lattice
    \begin{tikzpicture}[scale=0.55]
        % Clip box expanded slightly so outer labels don't get cut off
        \clip (-3.5, -3.5) rectangle (3.5, 3.5);
        
        % Draw lattice points
        \foreach \x in {-5,...,5} {
            \foreach \y in {-5,...,5} {
                \fill[black] ({\x*2 + \y*1}, {\x*0 + \y*1}) circle (0.05);
            }
        }
        
        % Draw axes (x and y labels removed)
        \draw[->, thick] (-3.5, 0) -- (3.5, 0);
        \draw[->, thick] (0, -3.5) -- (0, 3.5);
        
        % Unlabeled tick marks at 1, 2, -1, -2
        \foreach \i in {-2, -1, 1, 2} {
            \draw (\i, 0.1) -- (\i, -0.1);
            \draw (0.1, \i) -- (-0.1, \i);
        }

        \draw[red, thick, dashed] (0,0) -- (1,-1) -- (2,0) -- (1,1) -- cycle;
        
        % Labeled tick marks at 3, -3
        \foreach \i in {-3, 3} {
            \draw (\i, 0.1) -- (\i, -0.1) node[below] {\scriptsize $\i$};
            \draw (0.1, \i) -- (-0.1, \i) node[left] {\scriptsize $\i$};
        }
    \end{tikzpicture}\hfill
    % Subfigure 2: The New Middle Lattice
    \begin{tikzpicture}[scale=0.55]
        \clip (-3.5, -3.5) rectangle (3.5, 3.5);
        
        % Scale factor = 0.02
        % w1 = (36, 36), w2 = (-22.9857114, 46.9857114)
        \foreach \x in {-5,...,5} {
            \foreach \y in {-5,...,5} {
                \fill[black] ({\x*0.05*15 - \y*0.05*11.4928556845}, {\x*0.05*15 + \y*0.05*23.4928556845}) circle (0.05);
            }
        }
        
        \draw[->, thick] (-3.5, 0) -- (3.5, 0);
        \draw[->, thick] (0, -3.5) -- (0, 3.5);

        \draw[red, thick, dashed] (0,0) -- ({-0.05*11.4928556845},{0.05*23.4928556845}) -- ({0.05*(-11.4928556845+15)},{0.05*(23.4928556845+15)}) -- ({0.05*15},{0.05*15}) -- cycle;
        
        % Unlabeled tick marks (50 and 100 mapped via scale factor 0.02 to 1 and 2)
        \foreach \pos in {-2, -1, 1, 2} {
            \draw (\pos, 0.1) -- (\pos, -0.1);
            \draw (0.1, \pos) -- (-0.1, \pos);
        }
        
        % Labeled tick marks at 150, -150 (150 * 0.02 = 3)
        \foreach \pos/\label in {-3/-60, 3/60} {
            \draw (\pos, 0.1) -- (\pos, -0.1) node[below] {\scriptsize $\label$};
            \draw (0.1, \pos) -- (-0.1, \pos) node[left] {\scriptsize $\label$};
        }
    \end{tikzpicture}\hfill
    % Subfigure 3: The Original Scaled Lattice (Scalar changed to 0.01)
    \begin{tikzpicture}[scale=0.55]
        % Clip box expanded slightly so outer labels don't get cut off
        \clip (-3.5, -3.5) rectangle (3.5, 3.5);
        
        % Scale factor = 0.01
        \foreach \x in {-5,...,5} {
            \foreach \y in {-5,...,5} {
                \fill[black] ({\x*0.005*245.647615159 - \y*0.005*139.08331632}, {\x*0.005*152.352384841 + \y*0.005*269.08331632}) circle (0.05);
            }
        }
        
        % Draw axes
        \draw[->, thick] (-3.5, 0) -- (3.5, 0);
        \draw[->, thick] (0, -3.5) -- (0, 3.5);

        \draw[red, thick, dashed] (0,0) -- ({-0.005*139.08331632},{0.005*269.08331632}) -- ({0.005*(-139.08331632+245.647615159)},{0.005*(269.08331632+152.352384841)}) -- ({0.005*245.647615159},{0.005*152.352384841}) -- cycle;
        
        % Unlabeled tick marks (100 and 200 mapped via scale factor 0.01 to 1 and 2)
        \foreach \pos in {-2, -1, 1, 2} {
            \draw (\pos, 0.1) -- (\pos, -0.1);
            \draw (0.1, \pos) -- (-0.1, \pos);
        }
        
        % Labeled tick marks at 300, -300 (300 * 0.01 = 3)
        \foreach \pos/\label in {-3/-600, 3/600} {
            \draw (\pos, 0.1) -- (\pos, -0.1) node[below] {\scriptsize $\label$};
            \draw (0.1, \pos) -- (-0.1, \pos) node[left] {\scriptsize $\label$};
        }
    \end{tikzpicture}
    
    \caption{$M\bZ^2$ (left) and two ``approximations'' $\Sigma(\beta\fa)$ using $r=20$ and $200$.}\label{fig:1}
\end{figure}

Figure~\ref{fig:1} illustrates the improvement from $r=20$ to $r=200$. The input lattice is the left-side image. In the middle image, $20\bc\approx(21.1,1.1)^T$ is approximated by $\Sigma(11+2\sqrt{34})\approx (22.7,-0.7)^T$, and the resulting transformation can be seen by how it affects the outlined fundamental region. The transformation is much closer to orthogonal in the right-side image, where $200\bc\approx(210.5,10.9)^T$ is approximated by $\Sigma(111+17\sqrt{34})\approx (210.1,11.9)^T$.

The reduction is complete: solving a lattice problem in the $\ell_2$ norm in $M\bZ^n$ is essentially equivalent to solving it in the canonical embedding of $\beta\fa$, an ideal in the monogenic order $\bZ[\lambda]$. To see why, observe that applying the orthogonal transformation $(EC)^{-1}$ and scaling by $r$ does not affect relative vector lengths, so $M\bZ^n$ may be replaced by $r(EC)^{-1}M\bZ^n=r(EC)^{-1}A\bZ^n$. But \begin{align}\label{eq:main}\nonumber&& r(EC)^{-1}A&=r(EC)^{-1}(AE)E^{-1} && \\\nonumber&& &= r(EC)^{-1}(ED)E^{-1}&& \text{by definition of }E\text{ and }D \\\nonumber && &=rC^{-1}DE^{-1} && \\\nonumber && &=rDC^{-1}E^{-1} && \text{since }D\text{ and }C^{-1}\text{ are diagonal}\\\nonumber && & = rD(EC)^T && \text{since }EC\text{ is orthogonal}\\ && &=rCDE^T && \text{since }C\text{ is diagonal.}\end{align} Now, $E^T\bZ^n$ is $\Sigma(\fa)$ by definition of $\fa$, and the diagonal of $rCD$ is approximately $\Sigma(\beta)$. Thus $rCDE^T\bZ^n$ approximates $\Sigma(\beta\fa)$, as desired.

In the example, $\bZ[\lambda]$ turned out to be the ring of integers in a degree-$n$ number field, namely the quadratic field $\bQ(\sqrt{34})$. In particular, $\fa$ was invertible and coprime to the trivial conductor. This is not uncommon experimentally (which distinguishes our approach from the reduction mentioned at the end of Section~\ref{ss:numbertheory}), but it is also not guaranteed. The primary remaining obstacle is to ensure an invertible ideal in a degree-$n$ number field. We offer two methods, one probabilistic and one deterministic, in Sections \ref{sec:4} and \ref{sec:5}.

\subsection{The algorithm} Forcing $[\bQ(\lambda):\bQ]=n$ and $\fa$ invertible is the black box in line~3. Since lines~7 and~10 below only make sense when $\bQ(\lambda)$ has $n$ real embeddings, the reader is asked to take this for granted throughout this section. Proofs can be found under Theorems~\ref{thm:sub2} and~\ref{thm:sub3}.

\begin{algorithm}
    \caption{Approximate an input lattice up to orthogonal transformation with some lattice $\Sigma(\fa)$ of an invertible ideal $\fa\subseteq\bZ[\lambda]$.}\label{alg:1}
    \DontPrintSemicolon
    \KwIn{$M\in\Mat_n(\bZ)$ (lattice basis) and $\kappa\in(1,\infty)$ (condition number bound)}
    \KwOut{a $\bZ$-basis for an ideal lattice}
    $S,U,V\gets$ Smith Normal Form matrices\commentR{$\triangleright\;MV=US$ with $S$ diagonal}
    $A\gets USU^T$\;
    $A,r\gets$ output of Subroutine~\ref{sub:2} or \ref{sub:3}\commentR{$\triangleright\;$input $A,\kappa$; finds a better $A$}
    $f(x)\gets \det(xI_n-A)$\;
    $\lambda\gets$ largest eigenvalue of $A$\;
    $(\alpha_1,\dots,\alpha_n)^T\gets$ last column of $\text{adj}(\lambda I_n-A)$\;
    $c_i\gets$ $i^\text{th}$ embedding length of $(\lambda\alpha_1,\dots,\lambda\alpha_n)$\commentR{$\triangleright\;$in $\ell_2$ norm; see Remark~\ref{rem:float}}
    $r\gets\smash{rn\lambda^{n-1}\!\max_i|c_i|(\sqrt{\kappa}+1)/(\sqrt{\kappa}-1)}$\;
    $V\gets$ matrix with $i^{\text{th}}$ column $\Sigma(\lambda^{i-1})$\commentR{$\triangleright\;$Vandermonde matrix}
    $\mathbf{c}\gets(c_1^{-1},\dots,c_n^{-1})^T$\;
    $\mathbf{b}\gets$ point in $\bZ^n$ with $|rV^{-1}\mathbf{c}-\mathbf{b}|_{\infty}<1$\commentR{$\triangleright\;$see Remark~\ref{rem:float}}
    $\beta\gets(1,\lambda,\dots,\lambda^{n-1})\mathbf{b}$\;
    \While(\commentF{$\triangleright\;$skip if using Subroutine~\ref{sub:2} and}){$1\not\in(\beta,f'(\lambda))$}{
        $\beta\gets\beta+1$\commentR{perhaps even \ref{sub:3}; see Remark~\ref{rem:coprime}}
    }
    \Return{$\Sigma(\alpha_1\beta),\dots,\Sigma(\alpha_n\beta)$}
\end{algorithm}

The input $\kappa$ controls quality of approximation between the initial generic lattice and the output ideal lattice. The manner in which $\kappa$ relates solutions to $\SVP$, $\SIVP$, $\uSVP$, and $\CVP$ in these lattices is the subject of Section~\ref{ss:condition}.

\begin{remark}\label{rem:float}Computing $\bb$ in line 11 requires floating point arithmetic. Entries of $\bb$ need only be within 1 of the corresponding entries of $rV^{-1}\bc$, so if $\mu$ is the maximum magnitude among entries of $\bc$ and $V^{-1}$, it suffices to compute $\bc$ and $V^{-1}$ to within $(2nr\mu)^{-1}$ of their true value. The total bit length of such a rational approximation is $O(\log (nr\mu))$, roughly the same as its integer truncation.\end{remark}

\begin{remark}\label{rem:coprime}Lines 13--14 make $\beta$ coprime to the conductor of $\bZ[\lambda]$, which holds automatically when Subroutine~\ref{sub:2} is used. Even if Subroutine~\ref{sub:3} is used, this \textbf{while} loop can be skipped if we only care that the output ideal is invertible (as a principal ideal is invertible whether of not it is coprime to $\ff(\bZ[\lambda])$). Also, note that $1\in (\beta,f'(\lambda))$ is stronger than $1\in (\beta,\ff(\bZ[\lambda]))$, but computing the conductor is hard.\end{remark}

Proofs of correctness for Algorithm~\ref{alg:1} must be postponed until after Subroutines~\ref{sub:2} and~\ref{sub:3}. For now, note that all steps outside of perhaps lines 3 and 13--14 can be performed in a polynomial (in the input bit length) number of operations. (Storjohann's deterministic algorithm computes $S$, $U$, and $V$ in line 1 in $O(n^4\log^2(n\mu))$ bit operations, where $\mu$ is the maximum entry magnitude of $M$. The resulting matrices $U$ and $V$ have entry magnitudes $(n\mu)^{O(n)}$ \cite[Section 8.1]{storjohann}.)

%\begin{proposition}\label{prop:A_bound}In $O(n^4\log^2(n\|M\|))$ bit operations, Storjohann's algorithm computes $S$, $U$, and $V$ in line 1 such that $A$ in line 2 satisfies $\log \|A\|=O(n\log(n\|M\|))$.\end{proposition}

%\begin{proof}The complexity $O(n^4\log^2(n\|M\|))$ is a simplified overestimate for what Storjohann proves in \cite{storjohann}. He also proves that his Smith multiplier $V$ satisfies $\|V\|=(n\|M\|)^{O(n)}$. Thus $\|U\|\leq\|US\|=\|MV\|\leq n\|M\|\|V\|=(n\|M\|)^{O(n)}$, which bounds the entries of $A$ in line 2: \[\|A\|=\|MVU^T\|\leq n\|MV\|\|U\|=(n\|M\|)^{O(n)}.\] Taking the logarithm completes the proof.\end{proof}

%There are two lattice approximations that occur in Algorithm~\ref{alg:1}. One of them is hidden in Subroutine~\ref{sub:2} or \ref{sub:3}. The other is when $r(a_1^{-1},\dots,a_n^{-1})^T$ is replaced by $\Sigma(\beta)$ in lines 10 and 11. Let us check that the condition number associated to the latter approximation is at most $\sqrt{\kappa}$. This leaves an additional $\sqrt{\kappa}$ of wiggle room for the subroutine's associated condition number.

%\begin{proposition}Given input $M$ and $\kappa$, let $A$ denote its value in line 5 and let $B$ be the matrix with columns output by Algorithm~\ref{alg:1}. If , then $\kappa_2(AB^{-1})<\sqrt{\kappa}$.\end{proposition}

\subsection{Approximation quality}\label{ss:condition} The role of $\kappa$ is to bound the condition number of a matrix that transforms the input lattice into the output lattice. Throughout this section, $p$ defines the norm.

\begin{definition}\label{def:condition}The \emph{operator norm} of $T\in\text{GL}_n(\bR)$ in the $\ell_p$ norm is \[\|T\|_p=\max_{|\bu|_p=1}|T\bu|_p,\] and the \emph{condition number} is \[\kappa_p(T)=\|T\|_p\|T^{-1}\|_p=\max_{\bu,\bv\neq 0}\frac{|T\bu|_p/|\bu|_p}{|T\bv|_p/|\bv|_p}.\]\end{definition}

The condition number controls the change in what ``short'' means in the input versus output lattice. We state the propositions below only for the $\ell_2$ norm, although they hold more generally mutatis mutandis. Both are certainly well-known. 

\begin{proposition}\label{prop:condition}If $A\bZ^n$ and $B\bZ^n$ are full-rank lattices, then multiplying by $AB^{-1}$ transforms a solution in $B\bZ^n$ to $\SVP_p$, $\SIVP_p$, $\uSVP_p$, or $\CVP_p$ for the parameter $\gamma$ into a solution in $A\bZ^n$ for $\kappa\gamma$, where $\kappa=\kappa_p(AB^{-1})$.\end{proposition}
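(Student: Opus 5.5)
Write $T = AB^{-1}$, so that $T$ maps $B\bZ^n$ bijectively onto $A\bZ^n$ and $T\bx$ corresponds to $\bx$. The one inequality used throughout is the two-sided bound
\[\|T^{-1}\|_p^{-1}|\bu|_p\le |T\bu|_p\le \|T\|_p|\bu|_p\quad\text{for all }\bu.\]
Every length in $B\bZ^n$ therefore becomes a length in $A\bZ^n$ after multiplying by a factor between $\|T^{-1}\|_p^{-1}$ and $\|T\|_p$. The ratio of these two extremes is exactly $\kappa=\kappa_p(T)$.

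First, the successive minima. I will show $\lambda_i(A\bZ^n)\ge \|T^{-1}\|_p^{-1}\lambda_i(B\bZ^n)$. Take $i$ independent vectors of $A\bZ^n$ achieving $\lambda_i(A\bZ^n)$ and pull them back by $T^{-1}$. This gives $i$ independent vectors of $B\bZ^n$, each of length at most $\|T^{-1}\|_p\lambda_i(A\bZ^n)$, which proves the bound.

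For $\SVP_p$, let $\bu$ be a nonzero vector with $|\bu|_p\le\gamma\lambda_1(B\bZ^n)$. Then $T\bu$ is nonzero and
\[|T\bu|_p\le\|T\|_p\gamma\lambda_1(B\bZ^n)\le \|T\|_p\|T^{-1}\|_p\gamma\lambda_1(A\bZ^n)=\kappa\gamma\lambda_1(A\bZ^n).\]
For $\SIVP_p$ the argument is the same with $\lambda_n$ in place of $\lambda_1$, since $T$ preserves linear independence.

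For $\CVP_p$, the target must be mapped too. Given a target $\bx$ for $A\bZ^n$, we feed $T^{-1}\bx$ to the $B$-solver and obtain $\bv$. Let $\bu\in A\bZ^n$ be a closest vector to $\bx$. Then
\[|T\bv-\bx|_p\le\|T\|_p|\bv-T^{-1}\bx|_p\le\|T\|_p\gamma|T^{-1}\bu-T^{-1}\bx|_p\le\kappa\gamma|\bu-\bx|_p.\]
So $T\bv$ solves the $A$-instance with factor $\kappa\gamma$. The proposition's phrase ``transforms a solution'' should be read as including this transformation of the target.

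The main subtlety is $\uSVP_p$, since the promise gap has to transfer as well as the solution. From the minima bounds we get
\[\lambda_2(B\bZ^n)/\lambda_1(B\bZ^n)\ge \kappa^{-1}\,\lambda_2(A\bZ^n)/\lambda_1(A\bZ^n).\]
Hence an $A$-instance with gap $\kappa\gamma$ yields a $B$-instance with gap $\gamma$, which is the natural reading of ``for $\kappa\gamma$'' in the proposition. It remains to show that the exact shortest vector $\bu$ of $B\bZ^n$ maps to the exact shortest vector of $A\bZ^n$. Let $\bw$ be a shortest vector of $A\bZ^n$, and suppose $T\bu\ne\pm\bw$. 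Then $T\bu$ is independent of $\bw$: the only nonzero lattice vectors parallel to a shortest vector are the multiples $k\bw$, and for $|k|\ge2$ the vector $\pm\bw$ would be a strictly shorter pullback in $B\bZ^n$, contradicting the choice of $\bu$. Independence gives $|T\bu|_p\ge\lambda_2(A\bZ^n)\ge\kappa\gamma\lambda_1(A\bZ^n)$. On the other hand, the $\SVP_p$ bound gives $|T\bu|_p\le\kappa\lambda_1(A\bZ^n)$. These two bounds contradict each other when $\gamma>1$. When $\gamma=1$ the conclusion degenerates, and I would state this boundary case carefully, either by assuming strict inequality in the promise or by accepting approximate uniqueness.
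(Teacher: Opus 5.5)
Your proof is correct. For $\SVP_p$, $\SIVP_p$, and $\ComplexityFont{CVP}_p$ it is the paper's argument in slightly different form. The paper takes the preimage $\bv$ of a shortest vector of $A\bZ^n$ and bounds $\frac{|T\bu|_p/|\bu|_p}{|T\bv|_p/|\bv|_p}$ by $\kappa$ in one step. You split this into $|T\bu|_p\le\|T\|_p|\bu|_p$ and $\lambda_i(B\bZ^n)\le\|T^{-1}\|_p\lambda_i(A\bZ^n)$. Your split makes the $\SIVP_p$ case and the target map $\bx\mapsto T^{-1}\bx$ explicit, where the paper only says ``the exact same argument applies.''

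For $\uSVP_p$ you go genuinely beyond the paper. Its one-line argument, applied to an exact shortest vector $\bu$ of $B\bZ^n$, only yields $|T\bu|_p\le\kappa\lambda_1(A\bZ^n)$, i.e.\ that $T\bu$ solves $\kappa$-$\SVP_p$. It is your promise transfer plus the primitivity of shortest vectors that upgrades this to an exact shortest vector when $\gamma>1$. Your step ``independence gives $|T\bu|_p\ge\lambda_2$'' also quietly uses $|\mathbf{w}|_p<\lambda_2(A\bZ^n)$, which the $\kappa\gamma$-promise supplies.

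Your caution at $\gamma=1$ is warranted, because the statement genuinely fails there for $\uSVP_p$. Take $n=2$, $B=I_2$, and $A=\mathrm{diag}(1,\kappa)$ for some $\kappa>1$, so $\kappa_p(AB^{-1})=\kappa$. The lattice $\bZ^2$ satisfies the $1$-promise and has $\mathbf{e}_2$ as a shortest vector. The lattice $A\bZ^2$ satisfies the $\kappa$-promise, with equality. Yet $AB^{-1}\mathbf{e}_2=\kappa\mathbf{e}_2$ is not a shortest vector of $A\bZ^2$. So the $\uSVP_p$ case needs $\gamma>1$ or a strict promise, as you propose.
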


\begin{proof}Let $T=AB^{-1}$ and $\kappa=\kappa_p(T)$. Consider $\gamma$-$\SVP_p$. Let $T\bv$ be a shortest vector in $A\bZ^n$ and suppose $\bu$ solves $\gamma$-$\SVP_p$ in $B\bZ^n$, implying $|\bu|_p\leq \gamma|\bv|_p$. Thus \begin{equation}\label{eq:condition}\frac{|T\bu|_p}{|T\bv|_p} \leq\frac{\gamma|T\bu|_p/|\bu|_p}{|T\bv|_p/|\bv|_p}\leq \kappa\gamma,\end{equation} meaning $T\bu$ solves $\kappa\gamma$-$\SVP_p$ in $A\bZ^n$. 

The exact same argument applies to the other lattice problems.\end{proof}

Since $\kappa=1$ cannot be achieved regardless of how large $r$ is in Algorithm~\ref{alg:1}, Proposition~\ref{prop:condition} suggests we cannot perfectly preserve the error tolerance $\gamma$ in lattice problems. That is not necessarily the case.

\begin{proposition}\label{prop:exact}Suppose $p\in\bZ\cup\{\infty\}$, $\Lambda$ is a rank $n$ integer lattice, and $\gamma^q=\frac{s}{t}$ for positive integers $q,s,t$ with $p\,|\,q$ if $p<\infty$. In the case of $\CVP_p$, suppose further that the target $\bx$ satisfies $d\bx\in\bZ$ with $0\neq d\in\bZ$. If \[\kappa^q<\begin{cases}1+(s\lambda_1(\Lambda)^q)^{-1} & \SVP_p,\uSVP_p\\1+(s\lambda_n(\Lambda)^q)^{-1} & \SIVP_p\\ 1+(s(dn\lambda_n(\Lambda))^q)^{-1} & \CVP_p,\end{cases}\] then a lattice problem solution in $\Lambda$ for the parameter $\kappa\gamma$ is also a solution for $\gamma$.\end{proposition}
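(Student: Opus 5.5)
The plan is to use an integrality gap. Every quantity compared in the definitions of the lattice problems becomes an integer once it is raised to the $q^\text{th}$ power and multiplied by $t$ or $s$. A multiplicative slack of $\kappa^q$ smaller than $1+1/(\text{the relevant integer})$ is then too small to cross from one integer to the next.

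First I will record the integrality fact. For $\bv\in\Lambda\subseteq\bZ^n$ and finite $p$, $|\bv|_p^p=\sum|v_i|^p$ is a nonnegative integer, because $p\in\bZ$. Since $p\mid q$, $|\bv|_p^q=(|\bv|_p^p)^{q/p}$ is also an integer. For $p=\infty$, $|\bv|_\infty$ is itself an integer, so $|\bv|_\infty^q$ is too. In particular $\lambda_1(\Lambda)^q$ and $\lambda_n(\Lambda)^q$ are integers, since these minima are attained by lattice vectors.

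Next, the case of $\SVP_p$. Let $\bu$ be a $\kappa\gamma$-solution, so $|\bu|_p^q\le\kappa^q\gamma^q\lambda_1^q=\kappa^q\frac{s}{t}\lambda_1^q$. Multiplying by $t$ and using the hypothesis on $\kappa^q$ gives
\[t|\bu|_p^q\le\kappa^q s\lambda_1^q<s\lambda_1^q+1.\]
Both $t|\bu|_p^q$ and $s\lambda_1^q$ are integers, so $t|\bu|_p^q\le s\lambda_1^q$. This says $|\bu|_p\le\gamma\lambda_1$, so $\bu$ is a $\gamma$-solution. The same computation handles $\uSVP_p$, where the approximation parameter is treated exactly as in Proposition~\ref{prop:condition}. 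It also handles $\SIVP_p$, applied to each of the $n$ output vectors with $\lambda_n$ in place of $\lambda_1$.

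For $\CVP_p$ I will scale by $d$. Passing to $d\Lambda$ and the target $d\bx\in\bZ^n$ makes $|d(\bv-\bx)|_p^q$ an integer for every $\bv\in\Lambda$. Let $\delta=\min_{\bu\in\Lambda}|\bu-\bx|_p$, which is attained. Then $(d\delta)^q$ is an integer, and the identical argument goes through provided $\kappa^q<1+(s(d\delta)^q)^{-1}$. By the hypothesis, it therefore suffices to show $\delta\le n\lambda_n(\Lambda)$. To see this, take independent lattice vectors $\bb_1,\dots,\bb_n$ with $|\bb_i|_p\le\lambda_n$. Write $\bx=\sum x_i\bb_i$ and round each coefficient to the nearest integer. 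The triangle inequality then gives $\delta\le\frac{n}{2}\lambda_n\le n\lambda_n$.

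The only step needing a genuine estimate is this bound on the CVP distance; everything else is bookkeeping. I expect the main care to go into checking that the integrality claim really needs $p\mid q$. The exponent $q/p$ must be an integer for $(|\bv|_p^p)^{q/p}$ to lie in $\bZ$.
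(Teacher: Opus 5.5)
Your proof is correct and follows essentially the same route as the paper: raise everything to the $q^{\text{th}}$ power, use that $t|\bu|_p^q$ and $s\lambda_1^q$ (or $s\lambda_n^q$, or $s(d\delta)^q$ after scaling by $d$) are integers, and note that a multiplicative slack below $1+1/(\text{that integer})$ cannot reach the next integer. Your rounding argument giving $\min_{\bu\in\Lambda}|\bu-\bx|_p\le\frac{n}{2}\lambda_n(\Lambda)$, and your use of $p\mid q$ to get $|\bv|_p^q\in\bZ$, spell out the covering-radius bound and the integrality claim that the paper only asserts (the latter with ``recall $\Lambda\subseteq\bZ^n$'').
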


\begin{proof}Let $\lambda_i=\lambda_i(\Lambda)$. Suppose $\kappa^q<1+(sn\lambda_1^q)^{-1}$ and that $\bv\in\Lambda$ solves $\kappa\gamma$-$\SVP_p$. Then \[|\bv|_p^q\leq (\kappa\gamma\lambda_1)^q<\left(1+\frac{1}{s\lambda_1^q}\right)\frac{s\lambda_1^q}{t}\leq \frac{s\lambda_1^q+1}{t}.\] Since $\lambda_1^q$ and $|\bv|_p^q$ are integers (recall $\Lambda\subseteq\bZ^n$) and there are no integers strictly between $\frac{1}{t}(s\lambda_1^q+1)$ and $\frac{1}{t}(s\lambda_1^q)=(\gamma\lambda_1)^q$, we conclude that $\bv$ solves $\gamma$-$\SVP_p$ in $\Lambda$.

The exact same argument applies to $\uSVP_p$ and $\SIVP_p$.

For $\CVP_p$ with target $\bx$ satisfying $d\bx\in\bZ^n$, if $\bv$ solves $\kappa\gamma$-$\CVP_p$, the argument above applies beginning with the integer $|d(\bv-\bx)|_p^q$. Note that $\Lambda$ has covering radius at most $n\lambda_n$.\end{proof}

\begin{lemma}\label{lem:condition}For any $A,B\in\textup{GL}_n(\bR)$, if \[\|A-B\|_p\leq\frac{\kappa-1}{\|A^{-1}\|_p(\kappa+1)}\] for some $\kappa\in\bR$, then $\kappa_p(AB^{-1})\leq\kappa$.\end{lemma}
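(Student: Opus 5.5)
Write $T=AB^{-1}$ and $\varepsilon=\|A-B\|_p$. The plan is to bound $\|T\|_p$ and $\|T^{-1}\|_p$ separately by writing each as a perturbation of the identity, and then multiply the bounds. The natural factorization is
\[T=AB^{-1}=(B+(A-B))B^{-1}=I+(A-B)B^{-1},\qquad T^{-1}=BA^{-1}=I-(A-B)A^{-1}.\]
Since the hypothesis is phrased in terms of $\|A^{-1}\|_p$ rather than $\|B^{-1}\|_p$, we first handle $T^{-1}$ and then control $B^{-1}$ through $A^{-1}$.

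\textbf{Step 1: bound $\|T^{-1}\|_p$.} Set $\delta\coloneqq\varepsilon\|A^{-1}\|_p\le\frac{\kappa-1}{\kappa+1}$. If $\kappa<1$ this bound is negative, which is impossible, so the hypothesis forces $\kappa\ge1$; in particular $\delta<1$. Submultiplicativity of the operator norm then gives
\[\|T^{-1}\|_p\le 1+\|A-B\|_p\|A^{-1}\|_p=1+\delta.\]

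\textbf{Step 2: bound $\|T\|_p$.} Here we use a Neumann-series argument. We have $T^{-1}=I-X$ with $X=(A-B)A^{-1}$ and $\|X\|_p\le\delta<1$. Hence
\[T=(I-X)^{-1}=\sum_{k\ge0}X^k,\]
and therefore
\[\|T\|_p\le\frac{1}{1-\delta}.\]
This step also confirms that $B$ is invertible, since $B=(I-X)A$, although this is already assumed in the statement.

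\textbf{Step 3: combine.} Multiplying the two bounds,
\[\kappa_p(T)\le\frac{1+\delta}{1-\delta}.\]
The function $\delta\mapsto\frac{1+\delta}{1-\delta}$ is increasing on $[0,1)$. At $\delta=\frac{\kappa-1}{\kappa+1}$ it equals
\[\frac{2\kappa/(\kappa+1)}{2/(\kappa+1)}=\kappa,\]
so $\kappa_p(AB^{-1})\le\kappa$.

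The only real subtlety is Step 2, namely bounding $\|B^{-1}\|_p$, or equivalently $\|T\|_p$, using only $\|A^{-1}\|_p$. The Neumann series handles it cleanly, and this is also exactly where the particular constant $\frac{\kappa-1}{\kappa+1}$ in the hypothesis arises. Everything else in the argument is routine submultiplicativity.
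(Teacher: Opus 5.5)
Your proof is correct and is essentially the paper's argument. Both write $BA^{-1}=I-(A-B)A^{-1}$ as a perturbation of the identity of size $\delta\le\frac{\kappa-1}{\kappa+1}$, bound its norm by $1+\delta$ and the norm of its inverse by $\frac{1}{1-\delta}$, and multiply to get $\frac{1+\delta}{1-\delta}\le\kappa$. The only difference is that the paper gets the inverse bound from the reverse triangle inequality $|BA^{-1}\bv|_p\ge(1-\delta)|\bv|_p$, while you use a Neumann series.

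One small correction: the hypothesis does not force $\kappa\ge1$. For $\kappa<-1$ the quantity $\frac{\kappa-1}{\kappa+1}$ exceeds $1$, and the lemma as literally stated is then false (take $A=B$). So $\kappa\ge1$ must simply be assumed, as the paper tacitly does and as holds in every application (where $\sqrt{\kappa}>1$ plays the role of $\kappa$).
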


\begin{proof}Let $T=BA^{-1}$. Observe that \[|T\bv-\bv|_p\leq \|T-I_n\|_p|\bv|_p\leq \|A^{-1}\|_p\|A-B\|_p|\bv|_p\leq \frac{\kappa-1}{\kappa+1}|\bv|_p,\] where the final inequality is the lemma's hypothesis. We may thus apply the triangle inequality to bound $|T\bv|_2$ from above and below: \[\left(1-\frac{\kappa-1}{\kappa+1}\right)\!|\bv|_p\leq|\bv|_p-|T\bv-\bv|_p\leq|T\bv|_p\leq |\bv|_p+|T\bv-\bv|_p\leq\left(1+\frac{\kappa-1}{\kappa+1}\right)\!|\bv|_p\] The ratio of the right-side factor to the left-side factor is exactly $\kappa$. Inverting $T$ does not change its condition number, so $\kappa_p(AB^{-1})\leq\kappa$.\end{proof}

\section{Randomized subroutine to find a ring of integers}\label{sec:4}

If $A$ is perturbed randomly, we expect its characteristic polynomial, call it $f(x)$, to be irreducible with probability approaching 1 as the dimension $n$ grows. This was confirmed by Eberhard for generic matrices \cite{eberhard} and Ferber, Jain, Sah, and Sawhney for symmetric matrices \cite{ferber}, both assuming the extended Riemann hypothesis. These results assert that characteristic polynomials of random matrices behave, at least in terms of irreducibility, like random monic, integer polynomials, which have been long known to be irreducible with high probability \cite{vander}. 

We also care about whether $\bZ[\lambda]\simeq\bZ[x]/(f(x))$ is the full ring of integers in $\bQ(\lambda)$. In this regard, there are no results on characteristic polynomials of randomly selected integer matrices. There is only a conjecture of Wang and Yu asserting that symmetric integer matrices should make $\text{disc}f(x)$ squarefree (which implies $\bZ[\lambda]$ is a ring of integers) a positive proportion of the time \cite[Remark 1]{wang}. There is however, an analogous result for polynomials. Bhargava, Shankar, and Wang prove that in any fixed degree, the proportion of monic, integer polynomials of bounded coefficient height with squarefree discriminant approaches \[\frac{1}{2}\prod_{p\geq 3}\left(1-\frac{3p-1}{p^2(p+1)}\right)\approx 0.307\] as the height bound grows \cite{bhargava}. They also prove that the proportion of polynomials that define a ring of integers approaches $\frac{6}{\pi^2}\approx 0.608$.

Like Wang and Yu, we conjecture that these positive-proportion results extend to the matrix setting. For $A\in\Mat_n(\bZ)$ and $r>0$, we use $B(A,r)$ denote the $\ell_{\infty}$ open ball of radius $r$ in $\bR^{n^2}$ around $A$.

\begin{conjecture}\label{conj}For $i=1,2$, there exists $p_i(n)>0$ such that the proportion $P_i(A,r)$ of symmetric matrices in $B(A,r)\cap\Mat_n(\bZ)$ with characteristic polynomial of squarefree discriminant ($P_1$) or trivial conductor ($P_2$) satisfies \[\left|P_i(A,r)-p_i(n)\right|<2^{n^c}r^{-c'}\] for any $A$ and $r$ and some absolute constants $c,c'>0$. Both sequences $\{p_i(n)\}_n$ are bounded below by positive constants.\end{conjecture}

\begin{figure}[htbp]
    \centering
    % Subfigure 1: First Dot Plot (Zoomed in to 0.10 - 0.40)
    \begin{tikzpicture}[scale=0.53]
        % Clip box adjusted for the new axis position
        \clip (-0.4, -0.7) rectangle (11, 7.5);
        
        % Draw axes (y-axis placed at x=1)
        \draw[->, thick] (0.5, 0) -- (10.7, 0) node[above] {\small $n$};
        \draw[->, thick] (1, -0.5) -- (1, 6.7) node[right] {\small $P_1(n)$};
        
        % X-axis labeled tick marks
        \foreach \x in {2,3,4,5,6,7,8,9,10} {
            \draw (\x, 0.1) -- (\x, -0.1) node[below] {\scriptsize $\x$};
        }
        
        % Y-axis labeled tick marks (0.10 to 0.40, scaled dynamically)
        \foreach \y/\label in {1/0.15, 2/0.20, 3/0.25, 4/0.30, 5/0.35, 6/0.40} {
            \draw (1.1, \y) -- (0.9, \y) node[left] {\scriptsize $\label$};
        }

        % Plotting Data Points for Graph 1
        % Plotted y = (data_value - 0.10) * 20
        % n=2
        \fill[black] (2, 5.14) circle (0.1);
        \fill[black] (2, 5.20) circle (0.1);
        \fill[black] (2, 5.54) circle (0.1);
        \fill[black] (2, 5.76) circle (0.1);
        \fill[black] (2, 5.34) circle (0.1);
        % n=3
        \fill[black] (3, 2.32) circle (0.1);
        \fill[black] (3, 2.04) circle (0.1);
        \fill[black] (3, 1.90) circle (0.1);
        \fill[black] (3, 2.08) circle (0.1);
        \fill[black] (3, 2.02) circle (0.1);
        % n=4
        \fill[black] (4, 0.76) circle (0.1);
        \fill[black] (4, 0.94) circle (0.1);
        \fill[black] (4, 0.90) circle (0.1);
        \fill[black] (4, 0.98) circle (0.1);
        \fill[black] (4, 1.00) circle (0.1);
        % n=5
        \fill[black] (5, 1.70) circle (0.1);
        \fill[black] (5, 1.58) circle (0.1);
        \fill[black] (5, 1.56) circle (0.1);
        \fill[black] (5, 1.26) circle (0.1);
        \fill[black] (5, 1.32) circle (0.1);
        % n=6
        \fill[black] (6, 1.38) circle (0.1);
        \fill[black] (6, 1.24) circle (0.1);
        \fill[black] (6, 2.12) circle (0.1);
        \fill[black] (6, 1.64) circle (0.1);
        \fill[black] (6, 1.42) circle (0.1);
        % n=7
        \fill[black] (7, 1.42) circle (0.1);
        \fill[black] (7, 1.38) circle (0.1);
        \fill[black] (7, 1.52) circle (0.1);
        \fill[black] (7, 1.50) circle (0.1);
        \fill[black] (7, 1.46) circle (0.1);
        % n=8
        \fill[black] (8, 1.34) circle (0.1);
        \fill[black] (8, 1.38) circle (0.1);
        \fill[black] (8, 1.18) circle (0.1);
        \fill[black] (8, 1.66) circle (0.1);
        \fill[black] (8, 1.50) circle (0.1);
        % n=9
        \fill[black] (9, 0.90) circle (0.1);
        \fill[black] (9, 1.24) circle (0.1);
        \fill[black] (9, 1.74) circle (0.1);
        \fill[black] (9, 1.32) circle (0.1);
        \fill[black] (9, 1.42) circle (0.1);
        % n=10
        \fill[black] (10, 1.60) circle (0.1);
        \fill[black] (10, 1.20) circle (0.1);
        \fill[black] (10, 1.64) circle (0.1);
        \fill[black] (10, 1.42) circle (0.1);
        \fill[black] (10, 1.10) circle (0.1);
    \end{tikzpicture}\hfill
    % Subfigure 2: Second Dot Plot (Zoomed in to 0.30 - 0.55)
    \begin{tikzpicture}[scale=0.53]
        % Clip box adjusted for the new axis position
        \clip (-0.4, -1.7) rectangle (11, 6.5);
        
        % Draw axes (y-axis placed at x=1)
        \draw[->, thick] (0.5, -1) -- (10.7, -1) node[above] {\small $n$};
        \draw[->, thick] (1, -1.5) -- (1, 5.7) node[right] {\small $P_2(n)$};
        
        % X-axis labeled tick marks
        \foreach \x in {2,3,4,5,6,7,8,9,10} {
            \draw (\x, -0.9) -- (\x, -1.1) node[below] {\scriptsize $\x$};
        }
        
        % Y-axis labeled tick marks (0.30 to 0.55, scaled dynamically)
        \foreach \y/\label in {0/0.30, 1/0.35, 2/0.40, 3/0.45, 4/0.50, 5/0.55} {
            \draw (1.1, \y) -- (0.9, \y) node[left] {\scriptsize $\label$};
        }

        % Plotting Data Points for Graph 2
        % Plotted y = (data_value - 0.30) * 20
        % Overlaps are highlighted with larger circle radii
        
        % n=2
        \fill[black] (2, 3.20) circle (0.1);
        \fill[black] (2, 3.08) circle (0.1);
        \fill[black] (2, 4.64) circle (0.1);
        \fill[black] (2, 3.50) circle (0.1);
        \fill[black] (2, 3.10) circle (0.1);
        % n=3
        \fill[black] (2.93, 1.38) circle (0.1);
        \fill[black] (3.07, 1.38) circle (0.1); % 2 identical points (0.369)
        \fill[black] (3, 1.28) circle (0.1);
        \fill[black] (3, 1.24) circle (0.1);
        \fill[black] (3, 1.00) circle (0.1);
        % n=4
        \fill[black] (4, 0.66) circle (0.1);
        \fill[black] (4, 1.04) circle (0.1);
        \fill[black] (4, 0.60) circle (0.1);
        \fill[black] (4, 1.38) circle (0.1);
        \fill[black] (4, 1.06) circle (0.1);
        % n=5
        \fill[black] (5, 0.80) circle (0.1);
        \fill[black] (5, 0.70) circle (0.1);
        \fill[black] (5, 1.00) circle (0.1);
        \fill[black] (5, 0.48) circle (0.1);
        \fill[black] (5, 0.50) circle (0.1);
        % n=6
        \fill[black] (6, 0.70) circle (0.1);
        \fill[black] (6, 0.50) circle (0.1);
        \fill[black] (6, 1.00) circle (0.1);
        \fill[black] (6, 0.46) circle (0.1);
        \fill[black] (6, 1.04) circle (0.1);
        % n=7
        \fill[black] (6.93, 0.68) circle (0.1);
        \fill[black] (7.07, 0.68) circle (0.1);% 2 identical points (0.334)
        \fill[black] (7, 0.46) circle (0.1);
        \fill[black] (7, 1.02) circle (0.1);
        \fill[black] (7, 1.18) circle (0.1);
        % n=8
        \fill[black] (8, 0.76) circle (0.1);
        \fill[black] (8, 0.62) circle (0.1);
        \fill[black] (8, 0.98) circle (0.1);
        \fill[black] (8, 0.96) circle (0.1);
        \fill[black] (8, 0.72) circle (0.1);
        % n=9
        \fill[black] (9, 0.12) circle (0.1);
        \fill[black] (9, 0.78) circle (0.1);
        \fill[black] (8.86, 0.92) circle (0.1);
        \fill[black] (9, 0.92) circle (0.1);
        \fill[black] (9.14, 0.92) circle (0.1);% 3 identical points (0.346)
        % n=10
        \fill[black] (10, 1.18) circle (0.1);
        \fill[black] (10, 0.30) circle (0.1);
        \fill[black] (10, 0.74) circle (0.1);
        \fill[black] (10, 0.72) circle (0.1);
        \fill[black] (10, 0.42) circle (0.1);
    \end{tikzpicture}
    
    \caption{Probability of squarefree discriminant (left) and trivial conductor.}\label{fig:2}
\end{figure}

Factoring is the natural bottleneck for any test of Conjecture~\ref{conj}. Even if we considered matrices from an $\ell_{\infty}$ unit ball around the origin in $\bR^{n^2}$, discriminants approach 900 bits when $n=15$. For this reason, we present data only up to dimension $10$ in Figure~\ref{fig:2}. In each dimension $n$, we test 5 centers $A$ for $B(A,r)$ by choosing matrix entries uniformly from $(-2^{100/n^2},2^{100/n^2})\cap \bZ$ (insisting on matrix symmetry). The radius $r=2^{50/n^2}$ defines a ball around each center from which we choose a random sample of $1000$ symmetric integer matrices. The proportion of matrices whose characteristic polynomial has squarefree discriminant is on the left (one dot per sampled ball), and the proportion of those that define a full ring of integers is on the right. 

Remark that when $n=2$, it is straightforward to prove the exact limits of $P_1(A,r)$ and $P_2(A,r)$ as $r$ grows. Both the discriminant formula of a symmetric matrix and the criterion for orders being maximal are far simpler when $n=2$ compared to $n\geq 3$. Multiplying local probabilities gives \[\lim_{r\to\infty}P_1(A,r)=\frac{1}{2}\! \prod_{p\equiv 3\,\text{mod}\,4} \!\!\left(1-\frac{1}{p^2}\right)\!\! \prod_{p\equiv 1\,\text{mod}\,4} \!\!\left(1-\frac{3p-2}{p^3}\right) \approx 0.366\] and \[\lim_{r\to\infty}P_2(A,r)=\frac{5}{4}\lim_{r\to\infty}P_1(A,r)\approx 0.458,\] both uniformly in $A$.

For any $n$, the local factors appearing in the product should be bounded below (and above) by $1-O(\frac{1}{p^2})$. Discriminant divisibility by $p$ defines a codimension-1 subvariety in the space of symmetric matrices over $\bF_p$. That is, roughly $\frac{1}{p}$ points in the subvariety meet this condition. For nonsingular points, lifting from mod$\,p$ to mod\,$p^2$ imposes another $1/p$ density decrease, while the singular locus already has codimension 2 over $\mathbb{F}_p$. Hence the failure rate is $O(1/p^2)$. The author has not verified formally what local factors equal, but in the Euler product for $P_2(A,r)$ they appear to approach $1-\frac{2}{p^2}$. Hence the predicted probability of \[\prod_{p\text{ prime}}\!\!\left(1-\frac{2}{p^2}\right)\approx 0.323\] in Remark \ref{rem:random}.

Finally, a note on the bound $2^{n^{c}}r^{-c'}$ in Conjecture~\ref{conj}: The denominator $r^{c'}$ matches the shape of Bhargava, Shankar, and Wang's error bound, in which $c'=\frac{1}{5}-\varepsilon$ \cite[Page 2 and Theorem 4.4]{bhargava}. (The value of $c'$ balances errors from the sieve's tail and main body.) Regarding the numerator $2^{n^c}$, such expressions hide behind big-$O$ notation in \cite{bhargava}. We care to be explicit, however, because convergence rate impacts algorithm running time. The shape of our bound is justified by identifying the fastest growing hidden constant in \cite{bhargava}, which appears to come from Bhargava's foundational work \cite{bhargava2}. In his proof of Theorem~3.3 (the quantitative Ekedahl sieve \cite{ekedahl}), the largest contributor to the final constant is a product of single-variable degrees of resultants built from an initial set of multivariate polynomials (see the third paragraph and the end of the proof of Lemma~3.1 in \cite{bhargava2}). In our setup (where Bhargava's ``$k$'' is 2 and the dimension is $\frac{1}{2}n(n+1)$), this product turns out to be $\smash{2^{O(n^3)}}$. 

\begin{algorithm}
\SetAlgorithmName{Subroutine}{Subroutine}{}
    \caption{Probabilistic method to adjust $A$ so that $\det(xI_n-A)$ is irreducible and $\bZ[\lambda]$ is the ring of integers in $\bQ(\lambda)$.}\label{sub:2}
    \DontPrintSemicolon
    \KwIn{A symmetric $A\in\Mat_n(\bZ)$ and a real number $\kappa>1$}
    \KwOut{A scaled, perturbed copy of $A$}
    $B\gets A$\commentR{$\triangleright\;$second copy of $A$}
    $r\gets (\sqrt{\kappa}-1)/(n\|A^{-1}\|_2(\sqrt{\kappa}+1))$\;
    $\lambda\gets$ any eigenvalue of $A$\;
    \While(\commentF{$\triangleright\;$see Remark~\ref{rem:conductor}}){$[\bQ(\lambda):\bQ]\neq n$\textup{ or }$1\not\in\ff(\bZ[\lambda])$}{
        $r\gets 2r$\;
        $A\gets 2A$\;
        $B\gets$ symmetric with $\max |(B-A)_{i,j}|< r$\commentR{$\triangleright\;$chosen uniformly at random}
        $\lambda\gets$ any eigenvalue of $B$\;
    }
    \Return{$B,1$}
\end{algorithm}

\begin{remark}\label{rem:conductor}There are no known (non-quantum) polynomial time algorithms for testing whether $\bZ[\lambda]$ is a ring of integers or has squarefree discriminant. By a result of Lenstra \cite[Theorem 4.4]{lenstra}, these problems are polynomial-time equivalent to finding the largest square factor of an integer.\end{remark}

\begin{theorem}If Conjecture~\ref{conj} is true, then Algorithm~\ref{alg:1} with Subroutine~\ref{sub:2} runs in bounded-error quantum polynomial time.\end{theorem}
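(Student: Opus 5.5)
Assuming Conjecture~\ref{conj}, I would bound the expected number of iterations of the \textbf{while} loop in Subroutine~\ref{sub:2} by a constant, and show that each iteration runs in quantum polynomial time. Since the rest of Algorithm~\ref{alg:1} is already known to take polynomially many operations on polynomial bit-length data, this gives the result.

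\textbf{Cost of one iteration.} The loop condition asks whether $[\bQ(\lambda):\bQ]=n$ and whether $\ff(\bZ[\lambda])$ is trivial. Irreducibility of $f(x)=\det(xI_n-B)$ is decidable deterministically in polynomial time by LLL-based factoring over $\bQ$. For maximality of $\bZ[\lambda]$, Remark~\ref{rem:conductor} (Lenstra) reduces the test, in polynomial time, to finding the largest square factor of $\operatorname{disc} f$. That reduces to full factorization, which Shor's algorithm performs in bounded-error quantum polynomial time; repetition drives the error down. Throughout, $r$ and the entries of $A,B$ have bit length $O(k+\log\|A\|+\log(1/r_0))$ after $k$ iterations, so everything stays polynomial while $k$ is polynomial.

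\textbf{Success probability.} Take $P_2$ from Conjecture~\ref{conj} to mean the conductor is trivial; then $f$ is irreducible automatically, since a trivial conductor presupposes a degree-$n$ order. At iteration $k$ the sample is uniform over symmetric integer matrices in $B(2^kA_0,2^kr_0)$. By the conjecture, the success probability is at least $p_2(n)-2^{n^c}(2^kr_0)^{-c'}\ge p/2$ once $k\ge k_0=O(n^c/c'+\log(1/r_0))$. Here $p$ is the uniform positive lower bound on $p_2(n)$, and $k_0$ is polynomial in the input size because $\log(1/r_0)$ is polynomial in $\log\|A^{-1}\|_2$, $\log n$ and $\log\kappa$.

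For $k<k_0$ I would make no claim, beyond noting that these are polynomially many iterations, each of polynomial cost. After $k_0$ the iterations are independent trials with success probability at least $p/2$. So the expected number of further iterations is at most $2/p=O(1)$, and by Markov the probability of exceeding $k_0+m$ iterations decays like $(1-p/2)^m$. Truncating after $k_0+O(1)$ extra steps gives bounded error.

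\textbf{Main obstacle: correctness of the output.} I would also check that the output still satisfies the approximation guarantee. After $k$ doublings, $B$ differs from $2^kA_0$ entrywise by less than $2^kr_0$. Hence $\|B-2^kA_0\|_2<n2^kr_0=\frac{\sqrt\kappa-1}{\|(2^kA_0)^{-1}\|_2(\sqrt\kappa+1)}$, and Lemma~\ref{lem:condition} gives $\kappa_2(2^kA_0B^{-1})\le\sqrt\kappa$, leaving room for the second $\sqrt\kappa$ used in lines 7--12.

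The delicate part is bookkeeping the bit length of $r$ in line 8, and with it the cost of Remark~\ref{rem:float}. This depends on $\lambda^{n-1}$, on $\max_i|c_i|$, and on the entries of $V^{-1}$. I would bound these by $\|B\|^{O(n)}$, using the fact that eigenvalues are bounded by $\|B\|$ and the adjugate entries by $\|B\|^{n-1}$. I would bound the Vandermonde inverse via the discriminant, which is a nonzero integer; this is the step I expect to need the most care.

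Finally, because Subroutine~\ref{sub:2} yields $\ff=\bZ[\lambda]$, lines 13--14 are skipped, so no further loop needs analysis.
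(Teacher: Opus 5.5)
Your proposal is correct and follows the same route as the paper's proof: Conjecture~\ref{conj} gives a per-iteration success probability bounded below by a positive constant once $r$ is large enough, the maximality test in line~4 goes through Lenstra's reduction to computing the largest square divisor of $\text{disc} f$ and then Shor's algorithm, and all remaining steps have polynomial bit length. Your explicit warm-up bound $k_0=O(n^c/c'+\log(1/r_0))$ is more accurate than the paper's claim of a constant number of iterations. Two small points: the geometric tail you cite comes from the independence of the samples, not from Markov's inequality, and the condition-number check you include belongs to Theorem~\ref{thm:sub2} rather than to this running-time statement.
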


\begin{proof}The probability that $B$ in line 7 satisfies the \textbf{while} loop condition is denoted $P_2(A,r)$ in Conjecture~\ref{conj}. The conjecture asserts that  implies that the probability of success exceeds $\frac{1}{2}p_2(n)$ after \textbf{while} loop$\bZ[\lambda]$ being the ring of integers in a degree-$n$ number field is bounded below by $\frac{1}{2}$ after a constant number of iterations of Subroutine~\ref{sub:2}'s \textbf{while} loop. This is exactly the criterion for terminating the \textbf{while} loop. To test whether $1\in\ff(\bZ[\lambda])$ in line 4, we use Lenstra's deterministic polynomial time (non-quantum) algorithm to reduce to the problem of computing the largest squarefree divisor of an integer \cite[Theorem 4.4]{lenstra}, which is solved in quantum polynomial time by Shor's factoring algorithm \cite{shor}.

All other steps in Algorithm~\ref{alg:1} and Subroutine~\ref{sub:2} evidently run in deterministic polynomial time in their input bit lengths, which are bounded by a polynomial in the bit length of $M$ and $\kappa$.\end{proof}

\begin{theorem}\label{thm:sub2}From Algorithm~\ref{alg:1} with Subroutine~\ref{sub:2}, $\alpha_1\beta,\dots,\alpha_n\beta$ is a $\bZ$-basis for an ideal $\fb\subseteq \bZ[\lambda]$, the ring of integers in the totally real field $\bQ(\lambda)$. Furthermore, if $B$ is the matrix with $i^\text{th}$ column $\Sigma(\alpha_i\beta)$ and $A$ is its original value in line 2, then $AB^{-1}\Sigma(\fb)=M\bZ^n$, and $\kappa_2(AB^{-1})<\kappa$.\end{theorem}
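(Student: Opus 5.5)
The proof splits into three parts. First, the algebraic claims: $\fb$ is an ideal with the stated basis, and $\bZ[\lambda]$ is the ring of integers of a totally real field. Second, the lattice identity $AB^{-1}\Sigma(\fb)=M\bZ^n$. Third, the condition number bound, obtained by writing $AB^{-1}$ as a product of two near-isometries, each with condition number below $\sqrt{\kappa}$, and using submultiplicativity $\kappa_2(XY)\leq\kappa_2(X)\kappa_2(Y)$.

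\emph{Algebraic claims.} Let $A'$ denote the matrix returned by Subroutine~\ref{sub:2}, which is the value of $A$ used from line~4 onward. Subroutine~\ref{sub:2} exits only when $[\bQ(\lambda):\bQ]=n$ and $1\in\ff(\bZ[\lambda])$. The first condition makes $f(x)=\det(xI_n-A')$ irreducible. The second says $\bZ[\lambda]=\cO_{\bQ(\lambda)}$. Since $A'$ is real symmetric, all roots of $f$ are real, so $\bQ(\lambda)$ is totally real. The vector $(\alpha_1,\dots,\alpha_n)^T$ is a column of $\text{adj}(\lambda I_n-A')$, so it is an eigenvector with entries in $\bZ[\lambda]$. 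It is nonzero because $\lambda I_n-A'$ has rank $n-1$ when the eigenvalues are distinct.

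As in Section~\ref{sec:3}, comparing columns of $A'E=ED$ shows that $\lambda\alpha_j$ is a $\bZ$-combination of the $\alpha_i$. Hence $\fa=\sum\bZ\alpha_i$ is a $\bZ[\lambda]$-module. The matrix $E^T$ has columns $\Sigma(\alpha_i)$, and its rows are eigenvectors of a symmetric matrix for distinct eigenvalues, so $E^T$ is invertible. Thus the $\alpha_i$ are $\bZ$-independent and $\fa$ is a full-rank ideal. Lines~11--12 give $\beta\neq 0$ (see the size estimate below), so $\alpha_1\beta,\dots,\alpha_n\beta$ is a $\bZ$-basis of $\fb=\beta\fa\subseteq\bZ[\lambda]$.

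\emph{Lattice identity.} By definition $\Sigma(\fb)=B\bZ^n$. Therefore $AB^{-1}\Sigma(\fb)=A\bZ^n$, where $A=USU^T$ is the line~2 matrix, and $A\bZ^n=M\bZ^n$ by the Smith Normal Form computation in Section~\ref{sec:3}.

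\emph{Condition number.} Suppose the \textbf{while} loop runs $k$ times. Then $A'$ is a symmetric matrix with every entry within $2^kr_0$ of $2^kA$, where $r_0=(\sqrt{\kappa}-1)/(n\|A^{-1}\|_2(\sqrt{\kappa}+1))$. The operator norm is at most $n$ times the maximum entry, and $\|(2^kA)^{-1}\|_2=2^{-k}\|A^{-1}\|_2$. So Lemma~\ref{lem:condition}, applied with $\sqrt{\kappa}$ in place of $\kappa$, gives $\kappa_2(2^kA\,A'^{-1})\leq\sqrt{\kappa}$. The scalar $2^k$ does not affect condition numbers.

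Next compare $A'$ with $B$. By \eqref{eq:main}, $(EC)^{-1}A'=CDE^T$, and $EC$ is orthogonal. Also $B=\textup{diag}(\Sigma(\beta))E^T$. Hence, up to an orthogonal factor, $A'B^{-1}$ equals the diagonal matrix $T=CD\,\textup{diag}(\Sigma(\beta))^{-1}$. The condition number of $T$ is the ratio of its largest to smallest diagonal entry in absolute value, so it suffices to show that each $\sigma_i(\beta)$ is within a relative factor $\delta=(\sqrt{\kappa}-1)/(\sqrt{\kappa}+1)$ of $r$ times the $i$th diagonal entry of $CD$, with the sign built into $\bc$. Then each entry of $T$, rescaled by $r$, lies in $(1-\delta,1+\delta)$, and the ratio is less than $(1+\delta)/(1-\delta)=\sqrt{\kappa}$.

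The relative error comes from line~11. We have $\Sigma(\beta)=V\bb$, so
\[|\Sigma(\beta)-r\bc|_\infty=|V(\bb-rV^{-1}\bc)|_\infty<n\max_{i,j}|\sigma_i(\lambda)|^{j-1}\leq n\lambda^{n-1}.\]
Dividing by $r\min_i c_i^{-1}$, with $r$ as defined in line~8, gives relative error less than $\delta$. This also forces $\beta\neq0$. Combining the two steps, $\kappa_2(AB^{-1})\leq\kappa_2(2^kAA'^{-1})\,\kappa_2(A'B^{-1})<\sqrt{\kappa}\cdot\sqrt{\kappa}=\kappa$.

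\emph{Expected obstacle.} The main difficulty is this last bookkeeping step, for three reasons. The diagonal of $CD$ must be matched exactly with the vector $\bc$ of line~10, including the ordering of embeddings, the normalization of the eigenvector $(\lambda\alpha_i)$ versus $(\alpha_i)$, and the signs of negative eigenvalues. The bound $\max_i|\sigma_i(\lambda)|^{j-1}\leq\lambda^{n-1}$ needs the largest eigenvalue $\lambda$ to dominate the absolute values of all eigenvalues and to satisfy $\lambda\geq1$; I would arrange this by noting that the scaling in Subroutine~\ref{sub:2} makes $A'$ large, or otherwise replace $\lambda$ by the spectral radius. Finally, skipping lines~13--14 is justified because $\bZ[\lambda]$ is maximal, so the conductor is trivial.
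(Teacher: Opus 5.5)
This is essentially the paper's proof. It uses the same split of $AB^{-1}$ at the output $A'$ of Subroutine~\ref{sub:2}, Lemma~\ref{lem:condition} with $\sqrt{\kappa}$ for the scaled random perturbation, and \eqref{eq:main} to reduce the rounding step to a diagonal matrix, where the paper applies Lemma~\ref{lem:condition} once more and you use the equivalent ratio of extreme diagonal entries.

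The obstacles you flag are genuine, and the paper's proof passes over them in the same way. Read literally, line~7 makes $c_i^{-1}$ differ from $|(CD)_{ii}|$ by the factor $\sigma_i(\lambda)^2$, so $c_i$ must be taken as the $i$th embedding length of $\lambda^{-1}(\alpha_1,\dots,\alpha_n)$. Of your two fixes for the bound $n\lambda^{n-1}$, only the spectral-radius one works, because $A'$ stays negative definite whenever $A$ is, however it is scaled; that radius is at least $1$ since $|\det A'|\geq 1$.

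One small repair is needed in your own argument. Rank $n-1$ alone only makes $\text{adj}(\lambda I_n-A')$ nonzero. Its last column is nonzero because its last entry $\det(\lambda I_{n-1}-A'_{n-1})$, with $A'_{n-1}$ the top-left $(n-1)\times(n-1)$ block, is a monic integer polynomial of degree $n-1$ evaluated at $\lambda$, which has degree $n$.
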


\begin{proof}There are two transformations in Algorithm~\ref{alg:1} that we handle separately. The first is from (the lattices generated by) $A_1\coloneqq A$ to $B_1$, where $B_1$ denotes the output of Subroutine~\ref{sub:2}. The second is from $A_2\coloneqq B_1$ to $B_2\coloneqq B$. We aim to bound each $\kappa_2(A_iB_i^{-1})$ by $\sqrt{\kappa}$, which will imply the claim: \[\kappa_2(AB^{-1})=\kappa_2((A_1B_1^{-1})(A_2B_2^{-1}))\leq \kappa_2(A_1B_1^{-1})\kappa_2(A_2B_2^{-1})<\kappa.\] 

To bound $\kappa_2(A_1B_1^{-1})$, let $j$ denote the number of iterations that were needed in the \textbf{while} loop in Subroutine~\ref{sub:2}. If $r$ is initial value in line 2, then \begin{equation}\label{eq:A1B1}\|A_1-2^{-j}B_1\|_2\leq n\!\max_{1\leq i,j\leq n}\!|(A_1-2^{-j}B_1)_{i,j}|<nr =\frac{\sqrt{\kappa}-1}{\|A_1^{-1}\|_2(\sqrt{\kappa}+1)},\end{equation} where the second inequality is line 7. By Lemma~\ref{lem:condition}, we have \[\sqrt{\kappa}>\kappa_2(A_1(2^{-j}B_1)^{-1})=\kappa_2(A_1B_1^{-1}).\]

Let $E$ be the matrix whose $i^\text{th}$ column is the $i^\text{th}$ embedding of $(\alpha_1,\dots,\alpha_n)$, which are eigenvectors of $A_2$, and let $D$ be the corresponding diagonal matrix of eigenvalues. By comparing the columns corresponding to $\lambda$ on either side of $A_2E=ED$, we see that the $\alpha_i$ do indeed form a $\bZ$-basis for an ideal in $\bZ[\lambda]$, which is the ring of integers in $\bQ(\lambda)$ by line 4 of Subroutine~\ref{sub:2}. 

Let $C$ be the diagonal matrix with $i^\text{th}$ diagonal entry $a_i^{-1}$. Since $A_2=B_1$ is symmetric by line 7 of Subroutine~\ref{sub:2}, $EC$ is an orthogonal matrix by the spectral theorem. Furthermore, $\bQ(\lambda)$ is totally real. 

Now let $C'$ denote the diagonal matrix with the embeddings of $\beta$ along its diagonal, and reassign $r$ to be its value after line 8 of Algorithm~\ref{alg:1}. Observe that the output matrix is $B_2=C'E^T$. We have \begin{align*}&& \kappa_2(A_2B_2^{-1})&=\kappa_2(r(EC)^{-1}A_2B_2^{-1}) && \text{(scaling/orthogonal transformation)}\\ && &=\kappa_2(rCDE^TB_2^{-1}) && \text{by (\ref{eq:main})}\\ && &=\kappa_2(C(r^{-1}C')^{-1}) && \text{by definition of }B_2.\end{align*} 

The distance between entries of $\bb$ and $rV^{-1}\ba$ is less than 1 by line 11. Hence the distance between entries of $\Sigma(\beta)=V\bb$, which forms the diagonal of $C'$, and $V(rV^{-1}\ba)$, which forms the diagonal of $rC$, is less than $n\|V\|=n\lambda^{n-1}$ (because $\lambda$ is the largest eigenvalue of $A_2$). Thus \begin{equation}\label{eq:CC'}\|C-r^{-1}C'\|_2<r^{-1}n\lambda^{n-1}=\frac{\sqrt{\kappa}-1}{\max\limits_{i\leq n}|a_i|(\sqrt{\kappa}+1)}=\frac{\sqrt{\kappa}-1}{\|C^{-1}\|_2(\sqrt{\kappa}+1)}.\end{equation} By Lemma~\ref{lem:condition} again, $\kappa_2(C(r^{-1}C')^{-1})<\sqrt{\kappa}$.\end{proof}

\section{Deterministic subroutine to find an invertible ideal}\label{sec:5}

In this section, we provide a polynomial time subroutine that guarantees an irreducible characteristic polynomial for $A$ from Algorithm~\ref{alg:1} as well as an invertible output ideal. There is no need to consider general $\ell_p$ norms here, so $p$ is often used to denote a prime number. 

\subsection{Polynomial irreducibility}\label{ss:irreducible} %The simplest way to obtain an irreducible characteristic polynomial is to make it irreducible modulo some fixed integer. An Eisenstein polynomial at some prime $p$ is the easiest target, but these have discriminant divisible by $p$. For sieve theoretic reasons explained in the next section, we wish to avoid small primes in our discriminant. So instead, we use Shoup's deterministic, polynomial time algorithm to find irreducible polynomials mod$\,p$ \cite{shoup}.

%\begin{definition}\label{def:Q}For primes $p<4n^2$, let $C_p\in\text{GL}_n(\bF_p)$ be the companion matrix the irreducible polynomial mod$\,p$ output by Shoup's algorithm. Let $q$ be the product of these primes, and let $C_q\in\text{Mat}_n(\bZ)$ be a lift of all $C_p$ that satisfies $\|C_q\|\leq\frac{q}{2}$.\end{definition} Note that $C_q$ takes the form \[\begin{bmatrix}& 1 & & \\ & & \ddots & \\ & & & 1 \\ * & * & \cdots & *\end{bmatrix}.\]

%\begin{lemma}If $B\in\textup{Mat}_n(\bZ)$ with $B\equiv C_q\,\text{mod}\,q$, then $\det(xI_n-B)$ is irreducible with discriminant coprime to $q$.\end{lemma}

%\begin{proof}The discriminant of $\det(xI_n-B)$ reduced $\text{mod}\,p$ equals that of $\det(xI_n-C_p)$. Finite fields are perfect, so irreducible implies nonzero discriminant.\end{proof}

The simplest way to obtain an irreducible characteristic polynomial is to fix a target irreducible polynomial $f(x)$ modulo some small prime $p$ and find a symmetric matrix $S_p$ with $\text{char}_{S_p}(x)\equiv f(x)\,\text{mod}\,p$. We can enforce that the final output matrix $A$ produced by the subroutine satisfies $A\equiv S_p\,\text{mod}\,p$. We briefly review the known efficient techniques to achieve this with $p=2$. 

Shoup's deterministic algorithm finds an irreducible $f(x)\in\bF_2[x]$ of degree $n$ \cite{shoup}. (The symmetry constraint makes it hard to use polynomials that are Eisenstein at $p$.) Let $C_f$ be its companion matrix satisfying $f(x)=x^n+\sum_i(C_f)_{i+1,n}x^i$.

A standard technique (see \cite{basu}, for example) to produce a symmetric (Hankel) matrix $H$ satisfying $HC_f=C_f^TH$ is letting $H_{i,j}=\text{Tr}(x^{i+j-2})$, the trace from $\bF_2[x]/(f(x))$ to $\bF_2$. In other words, $H$ represents the trace pairing with respect to the basis $1,x,\dots,x^{n-1}$. For example, when $f(x)=x^3+x+1$, the traces of $1$, $x$, $x^2$, $x^3$, and $x^4$ are 1, 0, 0, 1, and 0 (the antidiagonals of $H$ below). And indeed, \[HC_f=\begin{bmatrix}1 & 0 & 0 \\ 0 & 0 & 1 \\ 0 & 1 & 0\end{bmatrix}\begin{bmatrix}0 & 0 & 1 \\ 1 & 0 & 1 \\ 0 & 1 & 0\end{bmatrix}=\begin{bmatrix}0 & 1 & 0 \\ 0 & 0 & 1 \\ 1 & 1 & 0\end{bmatrix}\begin{bmatrix}1 & 0 & 0 \\ 0 & 0 & 1 \\ 0 & 1 & 0\end{bmatrix}=C_f^TH.\] This identity holds because $C_f$ represents multiplication by $x$ in $\bF_2[x]/(f(x))$, which is self-adjoint  with respect to the trace bilinear form. The equation above simply expresses that the trace is self-adjoint with respect to multiplication by $x$ (as detailed in the proof of Lemma~\ref{lem:S2}). 

The trace pairing is a non-alternating form. By Dickson's classification of symmetric bilinear forms (including over $\bF_2$) \cite{dickson}, this implies the trace pairing is diagonalizable, meaning $P^THP=D$ for some diagonal matrix $D$ and change of basis matrix $P$. Over $\bF_2$, $D=I_n$ is forced, so $H=(PP^T)^{-1}$. Thus $HC_f=C_f^TH$ reorganizes to $P^{-1}C_fP=(P^{-1}C_fP)^T$. We have found a symmetric matrix with characteristic polynomial $f(x)$. In the example, $1+x,\,1+x^2,\,1+x+x^2$ is an orthonormal basis with respect to the trace form, and indeed, \[P^{-1}C_f P=\begin{bmatrix}1 & 0 & 1 \\ 1 & 1 & 0 \\ 1 & 1 & 1\end{bmatrix}\begin{bmatrix}0 & 0 & 1 \\ 1 & 0 & 1 \\ 0 & 1 & 0\end{bmatrix}\begin{bmatrix}1 & 1 & 1 \\ 1 & 0 & 1 \\ 0 & 1 & 1\end{bmatrix}=\begin{bmatrix}1 & 1 & 0 \\ 1 & 1 & 1 \\ 0 & 1 & 0\end{bmatrix}\] is a asymmetric matrix with characteristic polynomial $x^3+x+1$.

Over fields of characteristic other than 2, the Gram--Schmidt algorithm diagonalizes $H$. Over fields of characteristic 2, non-degenerate does not imply non-alternating, so Gram--Schmidt alone can fail if it stumbles into a subspace in which the bilinear form is alternating. Still, there are simple algorithms to bypass this obstacle. See, for example, the StackExchange post by GreginGre \cite{stack}, which is summarized in the last paragraph of the proof below.

\begin{lemma}\label{lem:S2}A symmetric matrix in $S_2\in \textup{GL}_n(\bF_2)$ with irreducible characteristic polynomial can be found in deterministic polynomial time.\end{lemma}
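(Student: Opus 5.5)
Following the outline given just before the statement, the construction has four stages, each deterministic polynomial time in $n$. First, run Shoup's algorithm to obtain an irreducible $f(x)\in\bF_2[x]$ of degree $n$, and form its companion matrix $C_f$. Second, compute the Hankel matrix $H$ with $H_{i,j}=\text{Tr}(x^{i+j-2})$. Its entries come from the traces of $1,x,\dots,x^{2n-2}$. These are obtained either as traces of the powers $C_f^{k}$, or via Newton's identities from the coefficients of $f$; both cost polynomially many $\bF_2$ operations.

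Next I would verify the two properties of $H$ that the construction needs.

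\begin{enumerate}[label=(\roman*)]
\item \emph{Symmetry and the intertwining identity.} $H$ is the Gram matrix of the bilinear form $\langle a,b\rangle=\text{Tr}(ab)$ on $\bF_2[x]/(f(x))$ in the basis $1,\dots,x^{n-1}$. Since $\langle xa,b\rangle=\langle a,xb\rangle$ and $C_f$ is the matrix of multiplication by $x$, we get $C_f^TH=HC_f$.
\item \emph{Non-degeneracy.} $\bF_{2^n}/\bF_2$ is separable, so $H$ is invertible.
\item \emph{Non-alternating.} It suffices to exhibit one $a$ with $\text{Tr}(a^2)\neq 0$. Since squaring is Frobenius, $\text{Tr}(a^2)=\text{Tr}(a)$, and the trace map is surjective onto $\bF_2$. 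So some $a$ has $\langle a,a\rangle=1$.
\end{enumerate}

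The main step, and the one I expect to be the real obstacle, is finding $P\in\text{GL}_n(\bF_2)$ with $P^THP=I_n$ deterministically. I would run a modified Gram--Schmidt. At each stage we hold a subspace $W$ on which the form is non-degenerate, and we search a basis of $W$ for a vector $v$ with $\langle v,v\rangle=1$. If one exists, take it as the next orthonormal vector and pass to $v^\perp\cap W$, which is again non-degenerate.

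If no such basis vector exists, then because $\langle u+w,u+w\rangle=\langle u,u\rangle+\langle w,w\rangle$ in characteristic $2$, the form is alternating on $W$. Here we must borrow an already-found orthonormal vector $e$. Pick $u,w\in W$ with $\langle u,w\rangle=1$, a hyperbolic pair. The vectors $e+u+w$, $e+u$, $e+w$ are then pairwise orthogonal and each has norm $1$; this is the standard replacement of $\langle1\rangle\perp H$ by $\langle1\rangle^3$. We swap these three in for $e$ and continue on the orthogonal complement of the hyperbolic pair.

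An orthonormal vector to borrow always exists, because the whole form is non-alternating. Concretely, I would start with the vector $a$ from property (iii) as the first orthonormal vector. Each step is linear algebra over $\bF_2$, so the total cost is $O(n^3)$ operations.

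Finally, set $S_2=P^{-1}C_fP$. From $P^THP=I$ we get $H=(PP^T)^{-1}$. Substituting into $HC_f=C_f^TH$ gives $(P^{-1}C_fP)^T=P^TC_f^T(P^T)^{-1}=P^{-1}C_fP$, so $S_2$ is symmetric. Its characteristic polynomial is that of $C_f$, namely $f$, which is irreducible. Since $f(0)\neq0$, $S_2$ is invertible and so lies in $\text{GL}_n(\bF_2)$.
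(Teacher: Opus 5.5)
Your proposal is correct and follows the paper's route: Shoup's irreducible $f$, the trace-form Hankel matrix $H$ with $HC_f=C_f^TH$, a deterministic orthonormalization of the non-degenerate, non-alternating trace form, and then $S_2=P^{-1}C_fP$. The only variation is in the characteristic-$2$ obstruction. You repair an alternating remainder after the fact by trading a found orthonormal $e$ plus a hyperbolic pair $u,w$ for $e+u+w,\ e+u,\ e+w$. The paper (following \cite{stack}) instead looks ahead and replaces $g$ by $g+h_1+h_2$ before committing. Both rest on the same fact, that a norm-one vector orthogonal to a hyperbolic plane yields three pairwise orthogonal norm-one vectors.
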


\begin{proof}Shoup's algorithm finds an irreducible $f(x)\,\text{mod}\,2$ of degree $n$ in polynomial time. The matrix symmetric $H$ representing the trace pairing from $\bF_2[x]/(f(x))$ to $\bF_2$ with respect to the basis $1,x,\dots,x^{n-1}$ can be computed in $O(n^2)$ bit operations. Indeed, if $c_i$ is the coefficient of $x^{n-i}$ in $f(x)$, Newton's formulas assert \[\text{Tr}(x^k)=kc_k+\sum_{i=1}^{k-1}c_{k-i}\text{Tr}(x^i).\] So the $2n-1$ antidiagonals of $H$ can be computed recursively, each in $O(n)$ bit operations.

Observe that \begin{align*} && (HC_f)_{i,j}&=\sum_{k=1}^n\text{Tr}(x^{i+k-2})(C_f)_{k,j}&& \text{by definition of }H\\ && &=\text{Tr}\!\left(\!x^{i-1}\sum_{k=1}^n(C_f)_{k,j}x^{k-1}\!\right) && \text{by bilinearity of the trace} \\ && &=\text{Tr}(x^{i-1}\cdot (x\cdot x^{j-1}))&&\text{since }C_f\text{ represents multiplication by }x\\ && &=\text{Tr}(x^{j-1}\cdot (x\cdot x^{i-1})) && \\ && &=\text{Tr}\!\left(\!x^{j-1}\sum_{k=1}^n(C_f)_{k,i}x^{k-1}\!\right)&&\text{since }C_f\text{ represents multiplication by }x \\ && &=\sum_{k=1}^n(C_f^T)_{i,k}\text{Tr}(x^{j+k-2}) && \text{by bilinearity of the trace}\\ && &=(C_f^TH)_{i,j} && \text{by definition of }H.\end{align*} In particular, diagonalizing $H$ with some matrix $P$ completes the process as previously described---if $P^THP=I_n$, then $S_2\coloneqq P^{-1}C_fP$ is symmetric.

Since finite field extensions are separable, the trace pairing is a non-degenerate. Furthermore, the traces of $1\cdot 1,x\cdot x,\dots,x^{n-1}\cdot x^{n-1}$ cannot all vanish because these monomials are a basis for $\bF_2[x]/(f(x))$ (apply the Frobenius automorphism to $1,x,\dots,x^{n-1}$). In other words, the trace pairing is nonalternating. 

To find a diagonalizing basis $g_1(x),\dots,g_n(x)$, we follow \cite{stack}: Let $V_n=\bF_2[x]/(f(x))$, and pick any $g(x)\in V$ such that $\text{Tr}(g(x)^2)=1$. If the trace form remains nonalterating on $\langle g(x)\rangle^\perp$, let $g_1(x)\coloneqq g(x)$ and $V_{n-1}\coloneqq\langle g_1(x)\rangle^\perp$. Otherwise, pick any $h_1(x),h_2(x)\in \langle g(x)\rangle^\perp$ such that $\text{Tr}(h_1(x)h_2(x))=1$ (such polynomials exist since the trace is nondegenerate on $\langle g(x)\rangle^\perp$), and set $g_1(x)\coloneqq g(x)+h_1(x)+h_2(x)$ and $V_{n-1}\coloneqq\langle g_1(x)\rangle^\perp$. Note that $g(x)+h_1(x)\in \langle g_1(x)\rangle^\perp$ and $\text{Tr}((g(x)+h_1(x))^2)=1$, allowing the procedure to continue on $V_{n-1}$.\end{proof}

\subsection{Ideal invertibility}\label{ss:invertible}This is the primary challenge of the deterministic subroutine for Algorithm~\ref{alg:1}. The ring at hand is $\cO\coloneqq\bZ[x]/(f(x))$, where $f(x)=\det(xI_n-A)$, and the ideal that must be invertible is generated by the last column entries of $\text{adj}(xI_n-A)$. Our strategy is to make this ideal coprime to $f'(x)$, which is sufficient but not necessary for invertibility (see Theorem \ref{thm:sub3}). This turns out to be possible through a symmetric perturbation of $A$ that is small enough to negligibly affect the shape of the lattice. Specifically, our perturbation takes the form \begin{equation}\label{eq:ajbj}A+\begin{bmatrix} & a_2 & & \\ a_2 & b_2 & \ddots & \\ & \ddots & \ddots & a_n \\ & & a_n & b_n\end{bmatrix}.\end{equation} 

\begin{definition}\label{def:Aj}Let $A_j$ denote the top-left $j\times j$ minor of $A$, let $f_j(x)=\det (xI_j-A_j)$, let $\fa_j\subseteq\bZ[x]$ denote the ideal generated by the last column entries of $\text{adj}(xI_j-A_j)$, and call the top entry in this column $g_{j-1}(x)$.\end{definition} 

The search for $a_j$ and $b_j$ that make $(\fa_n,f'_n(x))=\bZ[x]$ is performed sequentially. Assume $a_2,b_2,\dots,a_{j-1},b_{j-1}$ have already been chosen so that $(\fa_{j-1},f'_{j-1}(x),p)=\bZ[x]$ for any prime $p$. We first choose $a_j$ to ensure that $(\fa_j,f'_{j-1}(x),p)=\bZ[x]$ for all $p$. To see why this is possible, observe that the top $j-1$ generators of $\fa_j$ can be computed as the product \begin{equation}\label{eq:cramer}\text{adj}(xI_{j-1}-A_{j-1})\begin{bmatrix}A_{1,j}\\\vdots\\A_{j-1,j}+a_j\end{bmatrix}.\end{equation} For any prime $\fp\,|\,p$ in $\bZ[x]$ that contains $f'_{j-1}(x)$, we have assumed that the final entry in at least one row of $\text{adj}(xI_{j-1}-A_{j-1})$ (these final entries are the coefficients of $a_j$ above) is not in $\fp$. Therefore, there is at most one bad congruence class $a_j\,\text{mod}\,p$ associated to $\fp$, where $p\in\bZ$ is the prime below $\fp$. We simply increment $a_j$ until it avoids every bad congruence class from the finitely many ramified primes with odd norm in $\bZ[x]/(f_{j-1}(x))$.

Once $a_j$ is chosen, the purpose of $b_j$ is to switch rings; to complete the induction, we need $\fa_j$ to avoid ramified primes with odd norm in $\bZ[x]/(f_j(x))$, not in $\bZ[x]/(f_{j-1}(x))$. By cofactor expansion along the bottom row of $xI_j-A_j$, \[f_j(x)=(x-A_{j,j}-b_j)f_{j-1}(x)+\sum_{i=1}^{j-1}A_{j,i}\text{adj}(xI_j-A_j)_{i,j}.\] This is a linear combination of the generators of $\fa_j$. Hence, if $\fp$ is a prime in $\bZ[x]$ that contains $\fa_j$, then $\fp$ automatically contains $f_j(x)$. The job of $b_j$ is to make sure $\fp$ does not also contain the derivative $f_j'(x)$, lest it be ramified in $\bZ[x]/(f_j(x))$. Since \[f_j'(x)=(x-A_{j,j}-b_j)f_{j-1}'(x)-(A_{j,j}+b_j)f_{j-1}(x)+\sum_{i=1}^{j-1}A_{j,i}\text{adj}(xI_j-A_j)'_{i,j},\] and since $a_j$ was specifically chosen so that $f_{j-1}'(x)$ is not in $\fp$, there is at most one bad congruence class $b_j\,\text{mod}\,p$ associated to $\fp$. As before, we increment $b_j$ until it avoids every bad congruence class from the finitely many odd-norm primes dividing $\fa_j$. This completes the inductive step.

There are two potential issues that can arise in the search for $a_j$ or $b_j$. First, since $f_{j-1}(x)$ can have up to $j-1$ roots mod$\,p$, if $p<j$, it is possible that there is no acceptable congruence class mod$\,p$ for $a_j$ or $b_j$; the search would never terminate. In particular, we must control $A$ modulo every prime less than $n$. In fact, we control $A$ at some larger primes as well.

\begin{definition}\label{def:Sq}For $p>2$, define $S_p\in\textup{Mat}_n(\bF_p)$ to be \[\begin{bmatrix}1 & 1 & & \\ 1 & & \ddots & \\ & \ddots & & 1 \\ & & 1 &\end{bmatrix}\hspace{0.5cm}\text{or}\hspace{0.5cm}\begin{bmatrix} & 1 & & \\ 1 & & \ddots & \\ & \ddots & & 1 \\ & & 1 &\end{bmatrix},\] the former if $p\,|\,n+1$, the latter if $p\nmid n+1$. Let $q$ be the product of all primes less than $4n^2$, and fix some $S_q\in\text{Mat}_n(\bZ)$ with maximum entry magnitude at most $\frac{q}{2}$ and $S_q\equiv S_p\,\text{mod}\,p$ for all $p\,|\,q$ (even $p=2$ using $S_2$ from Lemma~\ref{lem:S2}).\end{definition}

\begin{lemma}\label{lem:disc}The discriminant of characteristic polynomial of $S_q$ is coprime to $q$.\end{lemma}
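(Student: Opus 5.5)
The plan is to check the statement one prime at a time. Since $S_q\equiv S_p \bmod p$, the characteristic polynomial of $S_q$ reduced mod $p$ is that of $S_p$. So it suffices to show, for every prime $p<4n^2$, that $\det(xI_n-S_p)$ is separable over $\overline{\bF}_p$, i.e.\ has $n$ distinct roots. For $p=2$ this is immediate from Lemma~\ref{lem:S2}: the characteristic polynomial of $S_2$ is irreducible over the perfect field $\bF_2$, hence separable. For odd $p$ I would compute the characteristic polynomials of the two tridiagonal matrices of Definition~\ref{def:Sq} in closed form via the substitution $x=y+y^{-1}$.

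First, let $P_k(x)$ be the characteristic polynomial of the $k\times k$ path matrix (zero diagonal, ones on the off-diagonals). Cofactor expansion gives $P_k=xP_{k-1}-P_{k-2}$ with $P_0=1$ and $P_1=x$. By induction, in $\bZ[y,y^{-1}]$,
\[P_k(y+y^{-1})=\frac{y^{k+1}-y^{-(k+1)}}{y-y^{-1}}.\]
Reducing mod $p$ and working in $\overline{\bF}_p$, the roots of $P_n$ are the values $y+y^{-1}$ where $y^{2(n+1)}=1$ and $y^2\ne 1$. When $p\nmid n+1$ (and $p$ is odd), $y^{2(n+1)}-1$ is separable, so there are exactly $2n$ such $y$. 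They fall into $n$ pairs $\{y,y^{-1}\}$ with $y\neq y^{-1}$, because $y^2\neq 1$. Distinct pairs give distinct values of $x$, since $y+y^{-1}=y'+y'^{-1}$ forces $y'\in\{y,y^{-1}\}$. Hence $P_n$ has $n$ distinct roots.

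Second, in the case $p\mid n+1$ the matrix has a $1$ in the $(1,1)$ entry. Expanding from the bottom-right corner gives the same recurrence $Q_k=xQ_{k-1}-Q_{k-2}$, now with $Q_0=1$ and $Q_1=x-1$. I expect an analogous identity after the sign change $x=-(y+y^{-1})$, namely
\[Q_k(x)=\pm\frac{y^{2k+1}-1}{y^{k}(y-1)},\]
which I will verify by induction, fixing the sign from $k=1$. The roots then come from $y^{2n+1}=1$ with $y\ne1$. These $y$ are automatically not $-1$, since $2n+1$ is odd, so they again pair off as $\{y,y^{-1}\}$ with $y\neq y^{-1}$. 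Separability now needs $p\nmid 2n+1$. This holds because $p\mid n+1$ gives $2n+1=2(n+1)-1\equiv -1 \pmod p$. So $Q_n$ has $n$ distinct roots.

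The main obstacle is the bookkeeping in the second case: getting the closed form and its sign exactly right, and making sure no root is lost or duplicated under the map $y\mapsto y+y^{-1}$ in characteristic $p$. This includes checking that the division by $y-1$ or $y-y^{-1}$ is legitimate, i.e.\ that it is a polynomial identity in $\bZ[y,y^{-1}]$ before reduction mod $p$. Once both cases are settled, every prime dividing $q$ fails to divide the discriminant, and the lemma follows.
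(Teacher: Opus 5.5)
Your proposal is correct and follows the paper's proof: reduce modulo each $p\mid q$, use irreducibility (hence separability) of the characteristic polynomial of $S_2$ for $p=2$, and use the substitution $x=y+y^{-1}$ to locate the roots of the two tridiagonal characteristic polynomials for odd $p$. The differences are minor. When $p\mid n+1$ you substitute $x=-(y+y^{-1})$, so the roots come from $y^{2n+1}=1$ rather than the paper's $y^{2n+1}=-1$, and your pairing argument for distinctness is more explicit than the paper's. One small correction: the sign in your closed form for $Q_k$ is $(-1)^k$, since $Q_0=1$ while $Q_1=-(y^3-1)/(y(y-1))$. So the induction cannot use a single sign fixed at $k=1$, though this has no effect on the roots.
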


\begin{proof}The discriminant of $\det(xI_n-S_q)$ reduced $\text{mod}\,p$ equals that of $\det(xI_n-S_p)$. When $p=2$, this polynomial is irreducible by Lemma~\ref{lem:S2}, and it therefore has a nonzero discriminant in $\bF_2$ (a perfect field).

For odd $p$ that do not divide $n+1$, $\det(xI_n-S_p)$ is the Chebyshev polynomial of the second kind, denoted $U_n(x/2)$, which are defined by the $U_0(x/2)=1$, $U_1(x/2)=x$, and $U_n(x/2)=xU_{n-1}(x/2)-U_{n-2}(x/2)$. The well-known substitution $x=y+y^{-1}$ makes \[U_n(x/2)=\frac{y^{n+1}-y^{-n-1}}{y-y^{-1}}=\frac{y^{-n}(y^{2n+2}-1)}{y^2-1}.\] In particular, the roots of $U_n(x/2)$ are $x=\zeta^i+\zeta^{-i}$ for $i=1,\dots,n$ and $\zeta$ a primitive $(2n+2)^\text{th}$ root of unity. These roots are distinct in $\overline{\bF}_p$ when $p\nmid n+1$.

For odd $p$ that divide $n+1$, $\det(xI_n-S_p)=U_n(x/2)-U_{n-1}(x/2)$. Using $x=y+y^{-1}$ again, this becomes \[\frac{(y^{n+1}-y^{-n-1})-(y^n-y^{-n})}{y-y^{-1}}=\frac{y^{-n}(y^{2n+1}+1)}{y+1}.\] Thus the roots are $x=\zeta^{2i-1}+\zeta^{-2i+1}$ for $i=1,\dots,n$ and $\zeta$ a primitive $(4n+2)^{\text{th}}$ root of unity. These roots are distinct in $\overline{\bF}_p$ when $p\nmid 2n+1$, which is true when $p\,|\,n+1$.\end{proof}

\begin{remark}It follows from this proof that odd primes $p<4n^2$ split completely in the ring $\bZ[\lambda]$ produced in Algorithm~\ref{alg:1}. Alternative choices for $S_p$ in Definition~\ref{def:Sq} can produce whatever splitting behavior we prefer. (Insisting that $S_p$ is symmetric is not strictly necessary. It just makes our analysis of Algorithm~\ref{alg:1} cleaner.)\end{remark}

The reason for the additional restriction at primes between $n$ and $4n^2$ stems from the second issue: the search interval required to find $a_j$ and $b_j$ could be too large. While the total number of congruence classes modulo primes that $a_j$ and $b_j$ must avoid is easily bounded by a polynomial in $n\log\|M\|$, sieve theory requires a stronger hypothesis. The sifting density, call it $\rho$ (unfortunately ``$\kappa$'' in the literature, which we reserve for the condition number), is an upper bound on the average number of congruence classes $w(p)$ to be avoided per prime $p$ in any interval: \[\sum_{y<p<z}\frac{w(p)}{p}=\rho\log\!\left(\frac{\log z}{\log y}\right)+O(1)\] for any $z > y\geq 2$. Enforcing $w(p)=0$ when $p<4n^2$ significantly reduces $\rho$ in our setup. Still, as soon as the interval bound $y$ reaches $4n^2$, $w(p)$ can spike up to $j$ if we are unlucky. Since sieves place $\rho$ in the exponent of our search intervals for $a_j$ and $b_j$, this is a big problem for the desired polynomial running time. We might overcome this exponent by distributing the huge search interval over all $j$ entries in the final column of $A_j$ rather than just the bottom two entries, but searching through a $j$-dimensional box is also inherently exponential in running time.

The solution is to impose an additional constraint on $a_j$, asking that it ``prepares'' a nice sieve problem for $b_j$. Similarly, $b_j$ must prepare the sieve for $a_{j+1}$. Recall that $b_j$ avoids congruence classes at primes dividing $[\bZ[x]:\fa_j]$. If $[\bZ[x]:\fa_j]$ were a randomly selected integer from some large interval, the expected sifting density would be $\kappa=0$, meaning \begin{equation}\label{eq:sum_norm}\sum_{p\text{ prime}}\frac{1}{p}v_p([Z[x]:\fa])=O(1),\end{equation} where $v_p$ is the valuation at $p$. This is precisely the additional constraint on $a_j$. Similarly, since $a_{j+1}$ must avoid congruence classes at primes dividing $\text{disc}f(x)$, we ask $b_j$ to enforce \begin{equation}\label{eq:sum_disc}\sum_{p\text{ prime}}\frac{1}{p}v_p(\text{disc}f(x))=O(1).\end{equation} The cost is a small addition to one search interval in exchange for a good bound on the next search interval. 

As a final remark, the sums in (\ref{eq:sum_norm}) and (\ref{eq:sum_disc}) cannot be computed exactly because there is no known polynomial time algorithm for factoring integers. In line of Subroutine~\ref{sub:2}, we define a bound $r$ beyond which the primes contribute negligibly. 

\subsection{The algorithm} We use the following standard notation:

\begin{definition}\label{def:radical}The \emph{radical} of an ideal $\fa$ in a commutative ring $\cO$ is \[\sqrt{\fa}\coloneqq \{\alpha\in\cO:\alpha^k\in\fa\text{ for some }k\geq 0\}.\]\end{definition} An alternative phrasing of $2\in \sqrt{\fa}$ for some finite index ideal $\fa\subseteq\cO$ (as in lines 6 and 10 below, where $\cO=\bZ[x]$) is $[\cO:\fa]=2^k$ for some $k\geq 0$.

\begin{definition}\label{def:valuation}Let $\cO$ be a commutative ring. The \emph{valuation} of an ideal $\fa\subseteq\cO$ at the prime $\fp\subseteq\cO$, denoted $v_{\fp}(\fa)$, is the largest integer $k$ such that $\fa\subseteq\fp^k$.\end{definition}

\begin{algorithm}
\SetAlgorithmName{Subroutine}{Subroutine}{}
    \caption{Deterministic method to adjust $A$ so that $\det(xI_n-A)$ is irreducible and the output ideal of Algorithm~\ref{alg:1} is invertible.}\label{sub:3}
    \DontPrintSemicolon
    \KwIn{A symmetric $A\in\Mat_n(\bZ)$ and a real number $\delta>1$}
    \KwOut{A scaled, perturbed copy of $A$}
    $q'\gets n^2q\lceil \|A^{-1}\|_2(\sqrt{\kappa}+1)/(\sqrt{\kappa}-1)\rceil$\commentR{$\triangleright\;$see Definition \ref{def:Sq}}
    $A\gets 3q'\lceil\log(q'\|A\|_2)\rceil A+S_q$\;
    $r\gets n^2(\log\|A\|_2)$\;
    \For{$j$\textup{ from }$2$\textup{ to }$n$}{
    $s\gets\sum_p\smash{\frac{1}{p}}v_p([\bZ[x]:\fa_j])$ for $4n^2< p<r$\commentR{$\triangleright\;$see Definitions \ref{def:Aj} and \ref{def:valuation}}
    \While(\commentF{$\triangleright\;$see Definition \ref{def:radical}}){$2\not\in\!\smash{\sqrt{(\smash{\fa_j},\smash{f'_{j-1}}(x))}}$\,\textup{ or }$s>\frac{j}{3n}$}{
    $A_{j,j-1},A_{j-1,j}\gets A_{j,j-1}+q$\;
    recompute $\fa_j$ and $s$
    }
    $s\gets\sum_p\smash{\frac{1}{p}}v_p(\text{disc}f_j(x))$ for $4n^2< p< r$\;
    \While{$2\not\in\!\smash{\sqrt{(\smash{\fa_j},\smash{f_j'}(x))}}$\,\textup{ or }$s>\frac{j}{3n}$}{
    $A_{j,j}\gets A_{j,j}+q$\;
    recompute $f_j(x)$ and $s$
    }
    }
    \Return{$A,r$}
\end{algorithm}

The following lemma is known, in that it is an immediate consequence of established results. A proof is included for completeness.

\begin{lemma}\label{lem:1/p^2}For all $x\geq 2$, we have for primes $p$ that \[\sum_{p> x}\frac{1}{p^2}<\frac{1}{x\log x}-\frac{1}{3x\log^2x}.\]\end{lemma}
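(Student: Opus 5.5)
The plan is to treat the sum as a Stieltjes integral against $\pi(t)$ and to use explicit Rosser--Schoenfeld/Dusart type bounds on $\pi(t)$. The main term is $1/(x\log x)$. The point of the lemma is that the secondary term is negative, about $-1/(x\log^2 x)$, which has more room than the required $-1/(3x\log^2 x)$. For small $x$ a finite computation will take over.

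\textbf{Partial summation.} Write
\[\sum_{p>x}\frac1{p^2}=\int_{x}^{\infty}\frac{d\pi(t)}{t^2}=-\frac{\pi(x)}{x^2}+2\int_x^\infty\frac{\pi(t)}{t^3}\,dt.\]
An upper bound then needs a \emph{lower} bound for $\pi(x)$ and an \emph{upper} bound for $\pi(t)$ inside the integral. The crude bounds $\pi(t)<1.26\,t/\log t$ and $\pi(x)>x/\log x$ lose at order $1/(x\log^2x)$ and would break the inequality. So I will use second-order bounds of the form
\[\frac{t}{\log t}\Big(1+\frac1{\log t}\Big)\le\pi(t)\le\frac{t}{\log t}\Big(1+\frac1{\log t}+\frac{c}{\log^2 t}\Big),\]
valid for $t\ge t_0$. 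Dusart's bounds give the lower one for $t\ge 599$ and the upper one with $c=2.51$ for $t\ge 355991$.

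\textbf{The integral.} Integrating by parts repeatedly gives
\[\int_x^\infty\frac{dt}{t^2\log^k t}=\frac{1}{x\log^kx}-k\int_x^\infty\frac{dt}{t^2\log^{k+1}t}.\]
Truncating after two steps, with the appropriate sign, yields rigorous inequalities. Substituting these, the integral term $2\int_x^\infty\pi(t)t^{-3}\,dt$ is at most
\[\frac{2}{x\log x}+\frac{0}{x\log^2x}+\frac{C}{x\log^3 x},\]
since the $-2$ from integrating by parts cancels the $+2$ from the $1/\log t$ correction. Here $C$ is an explicit constant, roughly $2c+2$. The boundary term contributes at most
\[-\frac{1}{x\log x}-\frac{1}{x\log^2x}.\]
The total is therefore at most
\[\frac1{x\log x}-\frac1{x\log^2x}+\frac{C}{x\log^3x}.\]
This is below the claimed bound as soon as $C/\log x<2/3$, that is, for $\log x>3C/2$. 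With Dusart's constants this means $x$ of size roughly $10^{4}$ to $10^{6}$, so in any case $x\ge 355991$ suffices.

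\textbf{Small $x$.} This is the step I expect to be the main nuisance. Below the threshold $X$, I will reduce to a finite check. The right-hand side $R(x)$ is decreasing for $x\ge2$, which is a short derivative computation. The left-hand side is constant on each interval $[p_k,p_{k+1})$. So it suffices to verify
\[\sum_{p\ge p_{k+1}}p^{-2}<R(p_{k+1})\]
for each $k$ with $p_{k+1}\le X$, and in fact $<\lim_{x\to p_{k+1}^-}R(x)=R(p_{k+1})$. These tails are computed as
\[P(2)-\sum_{p<p_{k+1}}p^{-2},\qquad P(2)=\sum_p p^{-2}=0.452247\ldots,\]
with enough precision. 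That is a mechanical computer verification over the primes up to $X$.

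If one wants to avoid a large computation, I would first try to shrink $X$ by using the sharper explicit bounds valid at smaller $t$, together with exact values of $\pi$ on the initial range. At tiny $x$ the margin is comfortable: for example at $x=2$ the tail is $0.2022\ldots$ against $R(2)\approx0.534$.
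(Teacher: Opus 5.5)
Your argument is correct. It shares the paper's skeleton: Stieltjes partial summation, Dusart's lower bound $\pi(x)>\frac{x}{\log x}\bigl(1+\frac{1}{\log x}\bigr)$ for $x\ge 599$ in the boundary term, repeated integration by parts, and a finite check for small $x$. Where you differ is the upper bound for $\pi(t)$.

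You concluded that second-order information cannot suffice and moved to the third-order bound with $c=2.51$. In fact an upper bound $\pi(t)<\frac{t}{\log t}\bigl(1+\frac{a}{\log t}\bigr)$ already works whenever $a\le\frac43$. (A first-order bound like $1.26\,t/\log t$ fails at the leading order $1/(x\log x)$, not merely at order $1/(x\log^2x)$.) The paper takes Dusart's $a=1.2762$, weakened to $a=\frac43$. The integral term is then at most $\frac{2}{x\log x}+\frac23\int_x^\infty\frac{dt}{t^2\log^2t}$. One more integration by parts gives exactly $\frac{1}{x\log x}-\frac{1}{3x\log^2x}-\frac43\int_x^\infty\frac{dt}{t^2\log^3t}$. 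So the constant $\frac13=3-2\cdot\frac43$ in the lemma is tailored to this bound. The analytic part covers every $x\ge 599$, and only the primes below $599$ remain to check.

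Your route gains asymptotic slack the lemma does not need: a secondary term $-\frac{1}{x\log^2x}$ instead of $-\frac{1}{3x\log^2x}$. The cost is a bound valid only for $t\ge 355991$. Your computer verification therefore runs over roughly $3\cdot10^4$ primes. It also needs $P(2)$ and the partial sums to about $10^{-9}$, since the margin near $x=355991$ is below $10^{-8}$. This is legitimate, just heavier. The ``sharper bound at smaller $t$'' you propose for shrinking $X$ is exactly the paper's choice.

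Two small corrections. First, the cancellation you mention is exact: $2\int_x^\infty\bigl(\frac{1}{t^2\log t}+\frac{1}{t^2\log^2t}\bigr)dt=\frac{2}{x\log x}$. Hence $C=2c$ works and the condition is $\log x>3c\approx 7.5$. This removes the tension between a threshold of ``roughly $10^4$ to $10^6$'' and your claim that $355991$ suffices. Second, $R(2)\approx 0.374$, not $0.534$. The comparison that matters on $[2,3)$ is at $x\to 3^-$: the tail $0.2022$ against $R(3)\approx 0.2114$, a margin under five percent rather than a comfortable one.
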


\begin{proof}We evaluate the Riemann--Stieltjes integral with integration by parts: \[\sum_{p>x}\frac{1}{p^2}=\int_x^{\infty}\frac{1}{t^2}d\pi(t)=-\frac{\pi(x)}{x^2}+\int_{x}^{\infty}\frac{2\pi(t)}{t^3}dt.\] (A weaker version of) Dusart's inequality for $x\geq 599$, \cite[Corollary 5.2]{dusart}, \[\frac{x}{\log x}\left(1+\frac{1}{\log x}\right)<\pi(x)<\frac{x}{\log x}\left(1+\frac{4}{3\log x}\right),\] then bounds our initial sum from above. We apply integration by parts twice to the result: \begin{align*}-\frac{1}{x\log x}-\frac{1}{x\log^2 x}+\int_x^{\infty}&\!\left(\frac{2}{t^2\log t}+\frac{8}{3t^2\log^2 t}\right)\!dt\\ &=\frac{1}{x\log x}-\frac{1}{3x\log^2x}-\int_x^{\infty}\!\frac{4}{3t^2\log^3 t}dt.\end{align*} The final integral is positive, so the proof is complete for $x\geq 599$. The bound is easily checked for $2\leq x<599$.\end{proof}

\begin{lemma}\label{lem:ajbj}Let $r$ be as in line 3 and recall notation from (\ref{eq:ajbj}). The perturbations $a_j$ and $b_j$ found in the \textup{\textbf{while}} loops of Subroutine \ref{sub:3} are bounded by $qr$.\end{lemma}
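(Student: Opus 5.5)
The plan is to count how many of the first $r$ candidate values in each \textbf{while} loop can fail, and to show that fewer than $r$ of them do. Each loop moves $a_j$ (resp.\ $b_j$) through an arithmetic progression $a_j\in\{0,q,2q,\dots\}$. So it suffices to show that among the first $r$ terms $a_j=kq$, $0\le k<r$, at least one satisfies both loop conditions. I treat the loop for $a_j$; the loop for $b_j$ is identical, with the roles of $[\bZ[x]:\fa_{j-1}]$-type quantities and $\text{disc} f_{j-1}$ swapped as described before Subroutine~\ref{sub:3}.

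\textbf{Step 1 (the radical condition).} The condition $2\in\sqrt{(\fa_j,f'_{j-1}(x))}$ fails at $a_j$ only if some prime $\fp$ of odd residue characteristic $p$ contains $\fa_j$ and $f'_{j-1}(x)$. By the induction hypothesis, $(\fa_{j-1},f'_{j-1},p)=\bZ[x]$. The discussion around (\ref{eq:cramer}) then shows that each such $\fp$ forbids at most one class of $a_j$ modulo $p$.

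\begin{itemize}[leftmargin=*]
\item For $p<4n^2$, $p$ divides $q$, so incrementing by $q$ never changes $a_j \bmod p$. The initial value is $A\equiv S_q\pmod p$, and Lemma~\ref{lem:disc} says $\text{disc} f_{j-1}$ is a unit mod $p$, so no $\fp$ above $p$ contains $f'_{j-1}$ together with $f_{j-1}$. Hence these primes never obstruct, and for them $w(p)=0$.
\item For $4n^2<p$, the number of bad $\fp$ over $p$ is at most $\min(j, v_p(\text{disc} f_{j-1}))$. Since $p\nmid q$, the progression $kq$ hits each forbidden class modulo $p$ in at most $\lceil r/p\rceil$ of the $r$ steps.
\end{itemize}

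The number of failing $k<r$ is therefore at most
\[\sum_{4n^2<p<r}\Big(\frac{r}{p}+1\Big)v_p(\text{disc} f_{j-1}) + \#\{p\ge r:\ p\mid \text{disc} f_{j-1},\ \text{bad}\}.\]
The first part is at most $r\cdot s_{\text{prev}}+\pi(r)\,j$, where $s_{\text{prev}}\le \frac{j-1}{3n}$ is guaranteed by the previous $b_{j-1}$ loop. The large primes $p\ge r$ hit at most one $k<r$ each, and their number is at most $\log|\text{disc} f_{j-1}|/\log r$. The choice $r=n^2\log\|A\|_2$ in line~3 is designed to make this $O(n\log n\cdot\log\|A\|/\log r)$, a small fraction of $r$.

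\textbf{Step 2 (the sieve-sum condition $s\le j/(3n)$).} I would bound the average of $s(a_j)=\sum_{4n^2<p<r}\frac1p v_p([\bZ[x]:\fa_j])$ over $k<r$. For fixed $p$ and $e\ge1$, the set of $k$ with $p^e\mid[\bZ[x]:\fa_j]$ should be at most $O(j)$ residue classes mod $p^e$ (or mod $p$ with an extra $1/p$ factor). Summing $\frac1p\cdot\frac{j}{p}$ over $p>4n^2$ and applying Lemma~\ref{lem:1/p^2} gives an average of roughly $\frac{j}{4n^2\log(4n^2)}$, which is far below $\frac{j}{3n}$. Markov's inequality then shows that at most a small constant fraction of the $k<r$ violate $s\le j/(3n)$.

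Combining the two steps, fewer than $r$ values of $k$ fail, so the loop stops with $|a_j|<qr$.

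\textbf{Main obstacle.} The hardest step is the local count in Step 2: bounding uniformly in $k$ how many residue classes mod $p^e$ make $p^e$ divide the index of $\fa_j$. This index is a norm of a linear expression in $a_j$ via (\ref{eq:cramer}), so it is a polynomial of degree up to $j$ in $a_j$. I would first try to show that its reduction mod $p$ is a nonzero polynomial of degree at most $j$, using the induction hypothesis $(\fa_{j-1},f'_{j-1},p)=1$. That gives at most $j$ roots mod $p$, and Hensel-type arguments give $p^{e-1}$-fold thinning at higher powers. If this fails, I would fall back on crude bounds, losing only polylogarithmic factors, which the slack in $r$ can absorb.
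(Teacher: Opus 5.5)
Your skeleton matches the paper's: count failures among the first $r$ candidates $a_j=kq$, allow one forbidden class mod $p$ per ramified prime for the radical condition, and use averaging plus Markov for the sieve sum. But Step~2 lacks the idea that makes the argument work. By (\ref{eq:cramer}), the coefficient of $a_j$ in each of the first $j-1$ generators of $\fa_j$ is a generator of $\fa_{j-1}$. So suppose a prime $\fp\mid p$ contains $\fa_{j-1}$; the hypothesis $(\fa_{j-1},f'_{j-1}(x),p)=\bZ[x]$ allows this and only forces $\fp$ to be unramified. Then whether $\fa_j\subseteq\fp$ does not depend on $a_j$, and it can hold for \emph{every} $a_j$. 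Hence your local count of $O(j)$ bad classes mod $p^e$ is false in general. So is the plan to show the index is a nonzero polynomial in $a_j$ mod $p$. The sum $s(a_j)$ has an $a_j$-independent floor that can be as large as the previous threshold $\frac{j-1}{3n}$. Markov applied to $s$ itself then bounds the failing fraction only by about $1-\frac{2}{3j}$.

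The paper splits $d(\fp|p)v_{\fp}(\fa_j)=w_{\fp}+w'_{\fp}(a)$, with $w_{\fp}$ the minimum over the search range. It shows $w_{\fp}\le d(\fp|p)v_{\fp}(\fa_{j-1})$, so the floor contributes at most $\frac{j-1}{3n}$ by line~6 at step $j-1$. Only the excess is averaged. It is positive for at most one class mod $p$ per $\fp$, and $\sum_{\fp\mid p}\sum_{a<p}w'_{\fp}(a)\le j-1$ because $f_{j-1}(x)\in\fa_j$ is monic of degree $j-1$. So the excess averages below $\frac{1}{9n}$ (Lemma~\ref{lem:1/p^2} plus a Mertens-type bound), and Markov leaves two thirds of the candidates with $s<\frac{j}{3n}$. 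This is why the thresholds in lines~6 and~10 grow by $\frac{1}{3n}$ per step. You never use the line-6 bound for $\fa_{j-1}$, and the available slack is only $\frac{1}{3n}$. The $b_j$ loop is also not literally identical to the $a_j$ loop: both generators of $(f_j(x),f'_j(x))$ move with $b_j$, the relevant polynomial has degree $2j-2$, and the excess bound doubles.

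Step~1 has two further errors. First, for $p<4n^2$ you cite Lemma~\ref{lem:disc}, but it concerns the characteristic polynomial of the full $n\times n$ matrix, not of its leading minors. For $n=4$ and $p=5\mid n+1$, the top-left $2\times2$ block of $S_5$ is $\bigl(\begin{smallmatrix}1&1\\1&0\end{smallmatrix}\bigr)$. So $f_2(x)\equiv x^2-x-1\equiv(x-3)^2\bmod 5$ for every choice of perturbations. Since $a_j\bmod p$ never changes, a real obstruction here would make the loop run forever. The correct reason there is none: $\fa_j$ contains $g_{j-1}(x)$, the determinant of a block of $xI_j-A_j$ that is triangular mod $p$ with diagonal entries $-A_{i+1,i}\equiv-1$. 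So $(\fa_j,p)=\bZ[x]$ outright, which also settles the $b_j$ loop.

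Second, bounding the `$+1$' terms by $\pi(r)\,j$ is far too weak. We have $\pi(r)j/r\approx j/\log r$, and $\log r=2\log n+\log\log t$ is only about $4\log n$ for modest inputs. So this term alone exceeds $r$ once $j\gg\log n$. You need $\sum_{p>4n^2}w(p)\le\log|\text{disc}f_{j-1}(x)|/\log(4n^2)$, together with an explicit estimate $\log|\text{disc}f_{j-1}(x)|<\frac49r\log n$. That estimate comes from Hadamard's inequality on the Sylvester matrix, using the inductive bound $a_i,b_i\le qr$ for $i<j$ and the lower bound on $t$ forced by line~2. It caps the radical failures below $\frac{2r}{3}$, which leaves room for exactly the $\frac r3$ Markov failures. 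The accounting is tight, so ``small constant fractions'' are not enough.
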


\begin{proof} Let $t$ denote the value of $\|A\|_2$ used in line 3. From Definition \ref{def:Sq}, $q=e^{\vartheta(4n^2)}$, where $\vartheta$ is Chebyshev's theta function. (A weaker version of) Rosser and Schoenfield's lower bound \cite[Corollary (3.16)]{rosser} gives $e^{\vartheta(4n^2)} > e^{3n^2}$ whenever $n\geq 4$. Combining this with lines 1 and 2 crudely gives \begin{equation}\label{eq:tbound}t\geq 3q'\log(q')-\|S_q\|_2\geq 3n^2q\log(n^2q)-\frac{nq}{2}>9n^4e^{3n^2}.\end{equation} whenever $n\geq 4$. That this lower bound on $t$ also holds when $n=2$ or $3$ is easily checked. The penultimate expression above also shows (crudely again) that \begin{equation}\label{eq:qbound}q<\frac{t}{2n^2\log t}.\end{equation}

To simplify the exposition, we first bound $|\text{disc} f_j(x)|$ assuming $0\leq a_i,b_i\leq qr$ for all $i\leq j$. The spectral norm of any minor is bounded above by that of the matrix, and there are at most two values of $a_i$ and $b_i$ in any row or column (see (\ref{eq:ajbj})), so all minors have spectral norm at most $t+2qr$. A minor's determinant of $A_j$ is bounded by its spectral norms raised to the power of its dimensions. Thus the coefficient of $x^{j-i}$ in $f_j(x)$ is bounded by $\binom{j}{i}(t+2qr)^i$, and the coefficient of $x^{j-1-i}$ in $f_j'(x)$ is bounded by $\binom{j}{i}(j-i)(t+2qr)^i<\binom{j}{i}2^{-3j}(t+2qr)^{i+1}$. Hence the sum of all coefficient magnitudes of $f_j((t+2qr)x)$ is bounded by \[\sum_{i=0}^j\binom{j}{i}(t+2qr)^i(t+2qr)^{j-i}=2^j(t+2qr)^j.\] Similarly, the sum of coefficient magnitudes of $f'_j((t+2qr)x)$ is bounded by $2^{-2j}(t+2qr)^j$. These sums bound the $\ell_2$ lengths of the columns of the Sylvester matrix for $f_j((t+2qr)x)$ and $f'_j((t+2qr)x)$, whose determinant is bounded by Hadamard's inequality in the third line below: \begin{align*}|\text{disc}f_j(x)|&=|\text{res}(f_j(x),f_j'(x))|\\&=\frac{1}{(t+2qr)^{j(j-1)}}\left|\text{res}\left(f_j((t+2qr)x),f_j'((t+2qr)x)\right)\right|\\&\leq \frac{2^{j(j-1)-2j(j)}(t+2qr)^{j(j+j-1)}}{(t+2qr)^{j(j-1)}}\\&< (t+qr)^{j^2}.\end{align*} Now, from (\ref{eq:qbound}) and the definition of $r$, $t+2qr$ is bounded by $t(1+1/\log n)$. In particular, if $j<n$ we have \begin{equation}\label{eq:discbound}\frac{\log|\text{disc}f_j(x)|}{r\log n}<\frac{(n-1)^2\log(t(1+1/\log n))}{n^2\log t\log n}.\end{equation} This is a decreasing function of $t$. Substituting the lower bound on $t$ from  (\ref{eq:tbound}) into (\ref{eq:discbound}) gives a function of $n$ that is less than $\frac{4}{9}$ for all $n\geq 2$. Therefore, \begin{equation}\label{eq:rlogn}\log|\text{disc}f_j(x)|<\frac{4}{9}r\log n.\end{equation}

Now fix some \textbf{for} loop iteration $2\leq j\leq n$, and assume for induction that $a_i,b_i\leq qr$ for all $i<j$. Consider the \textbf{while} loop in line 6. Forcing $2\in\sqrt{(\smash{\fa_j,f_{j-1}'(x)})}$ is equivalent to $(\fa_j,f_{j-1}'(x),p)=\bZ[x]$ for every odd prime $p$. If $p<4n^2$, then $\fa_j$ contains the element $g_{j-1}(x)$ defined in Definition \ref{def:Aj}. Since $A\equiv S_p\,\text{mod}\,p$ by line~2, we see from Definition \ref{def:Sq} that the bottom-left $(j-1)\times(j-1)$ minor determinant of $xI_j-A_j$, called $g_{j-1}(x)$ as per Definition~\ref{def:Aj}, is equivalent to $1\,\text{mod}\,p$. Hence $(\fa_j,f_{j-1}'(x),p)=\bZ[x]$ is automatic when $p<4n^2$. 

Suppose $p > 4n^2$. If $j\geq 3$, line 10 from the previous \textbf{for} loop iteration guarantees \begin{equation}\label{eq:j-1coprime}(\fa_{j-1},f_{j-1}'(x),p)=\bZ[x].\end{equation} We also have \begin{align}\label{eq:4/9}\nonumber&&\sum_{p> 4n^2}\frac{1}{p}v_p(\text{disc} f_{j-1}(x))& =\sum_{4n^2< p<r}\frac{1}{p}v_p(\text{disc} f_{j-1}(x))+\sum_{p\geq r}\frac{1}{p}v_p(\text{disc} f_{j-1}(x))\hspace{-3cm}&&\\\nonumber && &\leq \frac{j-1}{3n}+\sum_{p\geq r}\frac{1}{p}v_p(\text{disc} f_{j-1}(x)) && \text{by line 10 for }j-1\\\nonumber&& &< \frac{1}{3}+\frac{1}{r\log r}\sum_{p\geq r}v_p(\text{disc} f_{j-1}(x))\log p \hspace{-1.9cm} && \\\nonumber && &\leq \frac{1}{3}+\frac{\log|\text{disc}f_{j-1}(x)|}{r\log r} && \\\nonumber && &< \frac{1}{3}+\frac{\log|\text{disc}f_{j-1}(x)|}{r\log (n^2\log e^{3n^2})} && \text{by (\ref{eq:tbound})}\\ \nonumber&& &<\frac{1}{3}+\frac{\log|\text{disc}f_{j-1}(x)|}{4 r\log n} && \\ && & < \frac{4}{9} && \text{by (\ref{eq:rlogn}).}\end{align} Properties (\ref{eq:j-1coprime}) and (\ref{eq:4/9}) also hold when $j=2$. Indeed, $f_1(x)=x-A_{1,1}$, so $(\fa_1,f_1'(x),p)=(\fa_1,1,p)=\bZ[x]$ and $\sum_p \frac{1}{p}v_p(\text{disc}f(x))=\sum_p \frac{1}{p}v_p(1)=0$.

Let $\fp$ be the pullback to $\bZ[x]$ of a ramified in $\bZ[x]/(f_{j-1}(x))$ above $p$, meaning $\fp$ is generated by $p$ and a common irreducible factor of $f_{j-1}(x)\,\text{mod}\,p$ and $f'_{j-1}(x)\,\text{mod}\,p$. As we checked in (\ref{eq:j-1coprime}), such a prime does not contain $\fa_{j-1}$. By computing the first $j-1$ generators of $\fa_j$ as in (\ref{eq:cramer}), we see that $\fa_{j-1}\not\subseteq\fp$ implies there is at most one congruence class $a_j\,\text{mod}\,p$ that makes $\fa_j\subseteq\fp$. Since $v_p(\text{disc}f_{j-1}(x))$ is an upper bound on the number of ramified primes above $p$, we conclude that the total number of congruence classes mod$\,p$ that $a_j$ must avoid, call this number $w(p)$, is at most $v_p(\text{disc}f_{j-1}(x))$. 

As $a_j$ increments through the sequence $0,q,\dots,\lfloor r\rfloor q$, it lands on one of the $w(p)$ congruence classes mod$\,p$ at most $w(p)\big(\frac{r}{p}+1\big)$ times. So altogether, the total number of values of $a_j$ in this sequence of length $\lfloor r\rfloor+1$ for which $(\fa_j,f_{j-1}'(x))$ is contained in a prime above some $p> 4n^2$ is at most \begin{align}\label{eq:2/3r-1}\nonumber&& \sum_{p> 4n^2}w(p)\!\left(\frac{r}{p}+1\right)&\leq r\!\!\sum_{p> 4n^2}\!\frac{v_p(\text{disc}f_{j-1}(x))}{p}+\sum_{p> 4n^2}\!v_p(\text{disc}f_{j-1}(x))&&\\\nonumber&& &\leq \frac{4r}{9}+\sum_{p>4n^2}\!v_p(\text{disc}f_{j-1}(x)) && \text{by (\ref{eq:4/9})}\\\nonumber && &<\frac{4r}{9}+\frac{1}{\log 4n^2}\sum_{p> 4n^2}\!v_p(\text{disc}f_{j-1}(x))\log p&&\\\nonumber && &< \frac{4r}{9}+\frac{\log|\text{disc}f_{j-1}(x)|}{2\log n} &&\\ && &<\frac{2r}{3} && \text{by (\ref{eq:discbound}).}\end{align}

Next we count how many choices of $a_j$ among $0,q,\dots,\lfloor r\rfloor q$ fail line 6 because they make $\sum_p\frac{1}{p}v_p([\bZ[x]:\fa_j])$ too large for primes $4n^2< p < r$. Since $\fa_j$ contains $f_{j-1}(x)$ for any choice of $a_j$, we only ever consider primes $\fp\subseteq\bZ[x]$ generated by $p>4n^2$ together with some irreducible factor of $f_{j-1}(x)\,\text{mod}\,p$. Given $\fp$, let $w_{\fp}$ denote the product of the inertial degree, call it $d(\fp|p)$ (the degree of the corresponding irreducible factor of $f_{j-1}(x)\,\text{mod}\,p$), with the minimal valuation of $\fa_j$ at $\fp$ as $a_j$ runs through the sequence $0,q,\dots,\lfloor r\rfloor q$. Also let $w'_{\fp}(a)$ denote the value of $d(\fp|p)v_{\fp}(\fa_j)-w_{\fp}$ when $\fa_j$ is determined by $a_j=aq$. This way, \begin{equation}\label{eq:w_p(a)}\sum_{\fp\,|\,p}(w_{\fp}+w'_{\fp}(a))=v_p([\bZ[x]:\fa_j]).\end{equation} 

The coefficients of $a_j$ in (\ref{eq:cramer}) all lie in $\fa_{j-1}$, and thus they lie in $\fp^k$ with $k=v_{\fp}(\fa_{j-1})$. In particular, changing $a$ has no effect on the generators of $\fa_j$ reduced mod$\,\fp^k$. However, since $\fa_{j-1}\not\subseteq\fp^{k+1}$, there is at most one choice of $a_j\,\text{mod}\,p$ for which $v_{\fp}(\fa_j)$ could possibly exceed $k$. In other words, $w_{\fp}\leq d(\fp|p)k$. Thus for any choice of $a_j$, \begin{align}\label{eq:a_j-1bound}\nonumber && \sum_{4n^2<p<r}\sum_{\fp\,|\,p}\frac{w_{\fp}}{p}&\leq \sum_{4n^2<p<r}\sum_{\fp\,|\,p}\frac{1}{p}d(\fp|p)v_{\fp}(\fa_{j-1}) && \\ \nonumber && &=\sum_{4n^2<p<r}\!\frac{1}{p}v_p([\bZ[x]:\fa_{j-1}]) && \\ && &\leq\frac{j-1}{3n} && \text{by line 6 for }j-1.\end{align} Furthermore, we know $w'_{\fp}(a)$ is positive at most once for $a=0,\dots,p-1$, and its value cannot exceed $d(\fp|p)v_{\fp}(f_{j-1}(x))$ because $f_{j-1}(x)\in\fa_j$. But this product is simply the degree of the largest power of the irreducible factor defining $\fp$ that divides $f_{j-1}(x)\,\text{mod}\,p$. Since $f_{j-1}(x)$ is monic of degree $j-1$, we have \begin{equation}\label{eq:w_pbound}\sum_{\fp\,|\,p}\sum_{a=0}^{p-1}w'_{\fp}(a)\leq j-1.\end{equation} The number of times $a$ hits any given congruence class mod$\,p$ in the sequence $0,1,\dots,\lfloor r\rfloor$ is at most $\frac{r}{p}+1$, allowing us to bound the average of the double sum denoted $S(a)$ below: \begin{align}\label{eq:badsum}\nonumber && \frac{1}{\lfloor r\rfloor+1}\sum_{a=0}^{\lfloor r\rfloor}\!\smash{\underbrace{\sum_{4n^2<p<r}\sum_{\fp\,|\,p}\frac{w'_{\fp}(a)}{p}}_{S(a)}}&<\sum_{4n^2<p<r}\!\frac{1}{rp}\!\left(\frac{r}{p}+1\right)\!\sum_{\fp\,|\,p}\sum_{a=0}^{p-1}w'_{\fp}(a)\hspace{-1cm} && \\\nonumber && &\leq (j-1)\!\!\!\sum_{4n^2<p<r}\!\!\left(\frac{1}{p^2}+\frac{1}{r p}\right) && \text{by (\ref{eq:w_pbound})}\\&& &<(j-1)\!\left(\sum_{p>4n^2}\frac{1}{p^2}+\frac{1}{r}\!\sum_{4n^2<p<r}\frac{1}{p}\right). && \end{align} Setting $x=4n^2$ in the statement of Lemma~\ref{lem:1/p^2} bounds the first summation above, and the second is bounded according to Rosser and Schoenfeld by $\log\log r-1+(\log r)^{-2}$ \cite[Corollary (3.20)]{rosser}. (Note that the constant $-1$ in our bound is less than in \cite[Corollary (3.20)]{rosser}; this is more than accounted for by our lack of primes less than $4n^2$.) Using $r=n^2\log t \geq 3n^4$ by (\ref{eq:tbound}), we see that the average value of $S(a)$ is at most \begin{equation}\label{eq:finalp}(n-1)\left(\frac{1}{4n^2\log 4n^2}-\frac{1}{12n^2\log^2 4n^2}+\frac{\log\log 3n^4-1+(\log 3n^4)^{-2}}{3n^4}\right).\end{equation} This is less than $\frac{1}{9n}$ for all $n\geq 2$.

No more than one-third of all $a$ can make $S(a)$ at least thrice the average. For the remaining choices of $a$, we have $S(a)<\frac{1}{3n}$, giving \begin{align*}&&\sum_{4n^2<p<r}\frac{1}{p}v_p([\bZ[x]:\fa_j])&=\sum_{4n^2<p<r}\sum_{\fp\,|\,p}\frac{1}{p}(w_{\fp}+w'_{\fp}(a))&& \text{by (\ref{eq:w_p(a)})}\\&& &\leq \frac{j-1}{3n}+S(a)&&\text{by (\ref{eq:a_j-1bound})}\\&& &<\frac{j}{3n}, && \end{align*} just as line 6 demands.  

Combining this with (\ref{eq:2/3r-1}) shows that the total number of failures among $a_j=0,q,\dots,\lfloor r\rfloor q$ for either condition in line 6 is less than $\frac{r}{3}+\frac{2r}{3}<\lfloor r\rfloor+1$. In particular, at least one success is guaranteed, so $a_j<qr$.

The proof that $b_j<qr$ is nearly identical, but with $[\bZ[x]:\fa_j]$ now playing the role of $\text{disc}f_{j-1}(x)$, and $\text{disc}f_j(x)$ now playing the role of $[\bZ[x]:\fa_j]$. Let us highlight the minor adjustments.

To start, (\ref{eq:rlogn}) can be deduced for $[\bZ[x]:\fa_{j+1}]$ in place of $\text{disc}f_j(x)$ by bounding the resultant of $f_j(x)$ and $g_j(x)$ rather than $f_j(x)$ and $f'_j(x)$. This is because \[|\text{res}(f_j(x),g_j(x))|=[\bZ[x]:(f_j(x),g_j(x))]\leq[\bZ[x]:\fa_{j+1}].\] (The first equality requires $\text{res}(f_j(x),g_j(x))\neq 0$, which holds because $f_j(x)$ is monic and $g_j(x)\equiv1\,\text{mod}\,p$ for any $p<4n^2$.) The coefficient of $x^{j-1-i}$ in $g_j(x)$ is bounded using Hadamard's inequality by $\binom{j-1}{i}(\sqrt{i+1})^{i+1}\|A_{j+1}\|^{i+1}<(j\|A_{j+1}\|)^{i+1}$. Thus all coefficients of $g_j(j\|A_{j+1}\|x)$ are bounded by $(j\|A_{j+1}\|)^j$, the same bound that we used for coefficients of $f'_j(x)$, but with $\|A_{j+1}\|$ in place of $\|A_j\|$ (which does not matter because both are replaced with $t+qr$).

Reproducing (\ref{eq:rlogn}) allows us to deduce (\ref{eq:4/9}) for $\sum_p\frac{1}{p}v_p([\bZ[x]:\fa_j])$. Fixing some $p>4n^2$, we just chose $a_j$ so that $\fa_{j-1}$ in (\ref{eq:j-1coprime}) can be replaced with $\fa_j$. This means $-f'_{j-1}(x)$, which is coefficient of $b_j$ in $f'_j(x)$, does not belong to any $\fp\,|\,p$ that contains $\fa_j$. Thus there is at most one congruence class $b_j\,\text{mod}\,p$ per prime $\fp\,|\,p$ that violates the condition $(\fa_j,f'_j(x),p)=\bZ[x]$ implicitly stated in line 10. There are at most $v_p([\bZ[x]:\fa_j])$ such primes $\fp$. As a result, all of the arithmetic in (\ref{eq:2/3r-1}) is still correct with $[\bZ[x]:\fa_j]$ in place of $\text{disc}f_{j-1}(x)$. 

The most significant change comes occurs in (\ref{eq:w_pbound}) in the effort to avoid those $b_j$ that make $\sum_p\frac{1}{p}v_p(\text{disc}f_j(x))$ too large. Let $\fb_j=(f_j(x),f'_j(x))$. Since $|\text{disc}f_j(x)|=|\text{res}(f_j(x),f'_j(x))|=[\bZ[x]:\fb_j]$, we define $w_{\fp}$ as the minimum of $d(\fp|p)v_{\fp}(\fb_j)$ as $b_j$ ranges over $0,q,\dots,\lfloor r\rfloor q$. Also define $w_{\fp}(b)$ as the value of $d(\fp|p)v_{\fp}(\fb_j)-w_{\fp}$ when $f_j(x)$ is determined by $b_j=bq$. Unlike the situation with $a_j$ and $\fa_j$, both generators of $\fb_j$ are affected by $b_j$. Indeed, $f_j(x)=(x-A_{j,j}-b_j)f_{j-1}(x)+h(x)$ and $f_j'(x)=f_{j-1}(x)+(x-A_{j,j}-b_j)f'_{j-1}(x)+h'(x)$ for some $h(x)$ of degree at most $j-2$. This shows two things: First, $w_{\fp}$ is bounded by $d(\fp|p)v_{\fp}(\fb_{j-1})$, where $\fb_{j-1}$ is the ideal generated by the two coefficients of $b_j$, namely $f_{j-1}(x)$ and $f'_{j-1}(x)$. Therefore, \[\sum_{\fp\,|\,p}w_{\fp}\leq v_p([\bZ[x]:\fb_{j-1}])=v_p(\text{disc}f_{j-1}(x)),\] allowing (\ref{eq:a_j-1bound}) to be replicated. Second, the only primes $\fp$ that can possible contain $\fb_j$ are those generated by $p$ and an irreducible factor of \[f_{j-1}(x)(f_{j-1}(x)+(x-A_{j,j})f'_{j-1}(x)+h'(x))-f'_{j-1}(x)((x-A_{j,j})f_{j-1}(x)+h(x))\] \[=f_{j-1}^2(x)+f_{j-1}(x)h'(x)-f'_{j-1}(x)h(x),\] which is monic of degree $2j-2$. Hence the upper bound in (\ref{eq:w_pbound}) gets doubled, thereby doubling (\ref{eq:finalp}). Nevertheless, (\ref{eq:finalp}) is still less than $\frac{1}{9n}$ for all $n\geq 5$. For $n=2$, $3$, and $4$, (\ref{eq:badsum}) can be bounded by $\frac{1}{9n}$ by using sharper bounds on $\sum_p\frac{1}{p^2}$ than Lemma~\ref{lem:1/p^2} provides, namely $0.017$, $0.0062$, and $0.0034$, respectively.\end{proof}

The number of \textbf{while} loop iterations in Algorithm~\ref{alg:1} is handled similarly:

\begin{lemma}\label{lem:beta}With $r$ as defined in Subroutine~\ref{sub:3}, the \textup{\textbf{while}} loop in Algorithm~\ref{alg:1} terminates after at most $r$ iterations.\end{lemma}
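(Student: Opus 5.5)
We need to show that incrementing $\beta$ by 1 at most $r$ times yields $1\in(\beta,f'(\lambda))$ in $\bZ[\lambda]=\bZ[x]/(f(x))$, where $f=f_n$ is the final characteristic polynomial. The plan is to imitate the counting in Lemma~\ref{lem:ajbj}, now with $\beta+k$ for $k=0,1,\dots,\lfloor r\rfloor$ as the varying parameter. The condition fails exactly when some prime $\fp$ of $\bZ[x]$ contains $f(x)$, $f'(x)$ and $\beta+k$. Such a $\fp$ lies above a prime $p\,|\,\text{disc}f(x)$ and corresponds to a repeated irreducible factor of $f$ mod $p$. Since $\beta$ is a fixed element and $k$ an integer, $\beta+k\in\fp$ pins down $k$ mod $p$ to at most one class: $\beta$ is congruent to some residue in $\bZ[x]/\fp$, and it is bad only if that residue lies in $\bF_p$. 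So the number of bad classes mod $p$ is $w(p)\le$ the number of ramified primes above $p$, which is $\le v_p(\text{disc}f(x))$.

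The first step is to remove small primes. For $p<4n^2$, Lemma~\ref{lem:disc} together with $A\equiv S_q\pmod q$ (line~2 of Subroutine~\ref{sub:3}, preserved because all increments are multiples of $q$) gives $p\nmid\text{disc}f(x)$. Hence $(f,f',p)=\bZ[x]$ and these primes never obstruct. Note that $p=2$ is included, since $S_2$ has irreducible characteristic polynomial.

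For $p>4n^2$, the count of failures among $k=0,\dots,\lfloor r\rfloor$ is at most
\[\sum_{p>4n^2}w(p)\Big(\frac{r}{p}+1\Big)\le r\sum_{p>4n^2}\frac{v_p(\text{disc}f)}{p}+\sum_{p>4n^2}v_p(\text{disc}f).\]
I will bound this exactly as in (\ref{eq:2/3r-1}). The first sum splits at $r$. For $4n^2<p<r$, line~10 with $j=n$ gives a bound of $\frac{n}{3n}=\frac13$. For $p\ge r$, the tail is at most $\log|\text{disc}f|/(r\log r)$. The second sum is at most $\log|\text{disc}f|/\log(4n^2)$.

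The main obstacle is that (\ref{eq:rlogn}) and (\ref{eq:discbound}) were derived only for $j<n$, whereas here $j=n$. For $j=n$ the ratio in (\ref{eq:discbound}) is $\log(t(1+1/\log n))/(\log t\log n)$, roughly $1/\log n$ times a factor near 1. This gives $\log|\text{disc}f_n|\lesssim r(1+\epsilon)/1$, i.e.\ about $r$ rather than $\frac49 r\log n$. I would redo the estimate using $|\text{disc}f_n|<(t+2qr)^{n^2}$ and $r=n^2\log t$, so that $\log|\text{disc}f_n|\le r(1+o(1))$. Then the tail sum is $\lesssim 1/\log r$, which is small because $r\ge 3n^4$. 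The second sum is about $r/(2\log n)$.

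Combining the pieces, the failure count is roughly $r\big(\frac13+\frac{1}{\log r}+\frac{1+o(1)}{2\log n}\big)$. This is below $\lfloor r\rfloor+1$ for $n$ moderately large. Small $n$ would have to be checked numerically, or handled by sharpening: using $\log(4n^2)$ in place of $2\log n$, and the lower bound (\ref{eq:tbound}) on $t$.

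If this margin turns out too tight, the fallback is a cruder route. Note that $\sum_{p>4n^2}w(p)\le\log|\text{disc}f|/\log(4n^2)$, and a failing $k$ requires $p\,|\,N(\beta+k)$ for a bad $p$. Each bad class mod $p$ is hit at most $r/p+1$ times, and $\sum_{p>4n^2}1/p$ weighted by $w(p)$ is controlled by line~10. Either way, at least one $k\le r$ succeeds, so the loop terminates after at most $r$ iterations.
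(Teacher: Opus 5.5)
Your proposal is correct and takes essentially the same route as the paper: at most one bad residue class of the increment mod $p$ for each prime $\fp\supseteq(f,f')$, small primes removed by Lemma~\ref{lem:disc}, and the chains (\ref{eq:4/9}) and (\ref{eq:2/3r-1}) replayed with line~10 at $j=n$ supplying the $\frac13$. The only real difference is the discriminant estimate. The paper uses $\log|\text{disc}f|<\frac89 r\log n$, with explicit quadratic and cubic discriminant formulas for $n=2,3$. You use $\log|\text{disc}f_n|\lesssim r$ instead, and your suggested sharpening to $\log(4n^2)$ does close the small cases; for $n=2$ the total is roughly $0.33+0.25+0.37<1$.
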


\begin{proof}Let $f(x)=f_n(x)$ be the minimal polynomial of $\lambda$, let $\bb$ assume its value in line 11 of Algorithm~\ref{alg:1}, and let $b(x)=(1,x,\dots,x^{n-1})\bb$. Observe that $\beta$ is the image of $b(x)$ under $\bZ[x]\to\bZ[x]/(f(x))\xrightarrow{\sim}\bZ[\lambda]$. Since the conductor $\ff(\bZ[\lambda])$ is contained in the ideal generated by $f'(\lambda)$, we are done if we can show that $(b(x)+a,f(x),f'(x))=\bZ[x]$ for some $a=0,\dots,\lfloor r\rfloor$.

For each prime $\fp\,|\,p$ in $\bZ[x]$ that divides $(f(x),f'(x))$, $a$ only needs to avoid at most one congruence class mod$\,p$. Every such prime $\fp$ contributes at least one to $v_p(\text{disc}f(x))$, which we know to be zero when $p<4n^2$ by Lemma~\ref{lem:disc}. Replacing $n-1$ with $n$ in the numerator of (\ref{eq:discbound}) shows that \begin{equation}\label{eq:newdisc}\frac{\log|\text{disc} f(x)|}{r\log n}<\frac{8}{9}\end{equation} when $n\geq 4$. The same bound holds when $n=2$ or $3$, but we have to be slightly more precise to see it: The lower bound on $t$ is $3q'\log q'$, where $q'\geq n^2q$. This gives $t> 4.32\cdot 10^6$ when $n=2$ and $1.57\cdot 10^{14}$ when $n=3$. We use $\binom{n}{i}(t+2qr)^i$ again to bound the coefficient of $x^i$ in $f(x)$, but now we simply plug that expression directly into the formulas for the discriminant of quadratic and cubic polynomials. The result is $|\text{disc}f(x)|<8(t+2qr)^2$ when $n=2$ and $|\text{disc}f(x)|<486(t+2qr)^6$ when $n=3$. Plugging in the exact values of $q$, setting $r=n^2\log t$, and evaluating at our lower bound for $t$ verifies (\ref{eq:newdisc}) for $n=2$ and $3$. 

The exact same chain of inequalities in (\ref{eq:4/9}), but with (\ref{eq:newdisc}) used in the final step, ends with an upper bound of $\frac{1}{3}+\frac{1}{4}(\frac{8}{9})=\frac{5}{9}$ instead of $\frac{4}{9}$. Finally, replicating the inequalities in (\ref{eq:2/3r-1}), but with $\frac{5r}{9}$ used instead of $\frac{4r}{9}$ in the second step, ends with an upper bound of $r$. Thus there is at least one value of $a$ in the sequence $0,\dots,\lfloor r\rfloor$ that passes the \textbf{while} loop condition.\end{proof}

\begin{theorem}\label{thm:time3}Algorithm~\ref{alg:1} with Subroutine~\ref{sub:3} runs in deterministic polynomial time.\end{theorem}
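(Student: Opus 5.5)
The plan is to go through Algorithm~\ref{alg:1} with Subroutine~\ref{sub:3} line by line. For each line I will bound two things: the number of arithmetic operations, and the bit length of every integer or rational it produces, each by a polynomial in the bit length of $M$ and $\kappa$. The loops are the only non-obvious part. Their iteration counts are already controlled: by Lemma~\ref{lem:ajbj}, each \textbf{while} loop in Subroutine~\ref{sub:3} runs at most $r+1$ times, and by Lemma~\ref{lem:beta}, the \textbf{while} loop in Algorithm~\ref{alg:1} runs at most $r$ times. So the real work is showing that $r$ is polynomial and that each loop body costs polynomial time.

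\textbf{Sizes.} Storjohann's algorithm gives $\log\|A\|=O(n\log(n\mu))$ after line 2 of Algorithm~\ref{alg:1}, and hence $\log\|A^{-1}\|_2$ is polynomial too, since $A^{-1}=\text{adj}(A)/\det A$. The product $q$ of the primes below $4n^2$ satisfies $\log q=\vartheta(4n^2)=O(n^2)$, and $S_q$ comes from Lemma~\ref{lem:S2} together with CRT, so it is polynomial. It follows that $q'$ and the rescaled $A$ in line 2 of Subroutine~\ref{sub:3} have polynomial bit length. In particular $r=n^2\log\|A\|_2$ is polynomial in value, not merely in bit length, and this is what makes the loop bounds useful. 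Every perturbation $a_j,b_j$ is at most $qr$, so the final $A$ still has polynomial bit length. Consequently $f(x)$, the adjugate column $(\alpha_i)$, and the discriminant bound in (\ref{eq:discbound}) all stay polynomial.

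\textbf{Loop bodies.} Each body recomputes $\fa_j$, $f_j$, and the sum $s$. The characteristic polynomial and adjugate can be computed over $\bZ[x]$ by fraction-free elimination. Computing $s$ only requires trial division by the primes $p<r$, and there are polynomially many such $p$ because $r$ is polynomial. The condition $2\in\sqrt{(\fa_j,f'_{j-1}(x))}$ can be tested by forming a Hermite normal form of the ideal $(\fa_j,f'_{j-1}(x),f_{j-1}(x))$ inside the finite-rank module $\bZ[x]/(f_{j-1})$, or by reducing modulo $f_j$, which lies in $\fa_j$. This yields the index as an integer, from which we strip powers of $2$. The test $1\in(\beta,f'(\lambda))$ works the same way: compute the HNF of $(b(x)+a,f,f')$ and check whether the index is $1$. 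HNF runs in polynomial time with polynomially bounded entries.

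\textbf{Floating point.} The floating-point step is covered by Remark~\ref{rem:float}. The largest eigenvalue $\lambda$ and the quantities $c_i$ and $V^{-1}$ need only precision $(2nr\mu)^{-1}$, where $\mu$ involves $\lambda^{n-1}$ and $\max|c_i|$. These are polynomially bounded in bit length, so isolating the real roots of $f$ (for instance by Sturm sequences) and applying Newton refinement gives the required approximations in polynomial time. The updated $r$ in line 8 of Algorithm~\ref{alg:1} is then polynomial in bit length, and so is $\bb$.

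\textbf{Main obstacle.} I expect the hardest part to be the care needed to show that the updated $r$ in Algorithm~\ref{alg:1} and the HNF-based radical tests stay polynomial. The loop counts themselves are already handled by Lemmas~\ref{lem:ajbj} and~\ref{lem:beta}.
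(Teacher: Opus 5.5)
Your proposal is correct and follows the same route as the paper. The iteration counts come from Lemmas~\ref{lem:ajbj} and~\ref{lem:beta}, every intermediate quantity has polynomial bit length (and $r$ is polynomial in value), and the membership tests in lines 6 and 10 of Subroutine~\ref{sub:3} and line 13 of Algorithm~\ref{alg:1} are done by computing a $\bZ$-basis (Hermite normal form) of the ideal modulo a monic polynomial it contains.

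The only difference is cosmetic. You test whether the index of the ideal is $1$ or a power of $2$, whereas the paper checks the least positive integer the ideal contains; for finite-index ideals these criteria are equivalent. Your write-up is also more explicit than the paper's about bit sizes, about computing $s$ by trial division up to $r$, and about the floating-point step.
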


\begin{proof}By Lemmas \ref{lem:ajbj} and \ref{lem:beta}, all \textbf{while} loops terminate in a polynomial number of iterations. Except perhaps line 13 in Algorithm~\ref{alg:1} and lines 6 and 10 in Subroutine \ref{sub:3}, all steps rely on well-known algorithms that run in polynomial time in their inputs, which have polynomial bit length in the bit length of $M$ and $\kappa$. Methods for line 13 and line 6 and 10 are also established, though perhaps not as well known outside of algorithmic number theory. Ideal or radical ideal membership can be tested in the ring $\bZ[x]/(f(x))$ by scaling first computing a $\bZ$-basis for the ideal. (Scale generators over the ring by $1,x,\dots,x^{n-1}$ and reduce the results mod$\,f(x)$ to produce generators over $\bZ$, then reduce to a basis.) Once the ideal becomes a sublattice of $\bZ$, we can check the smallest multiple of $1+0x+\cdots+0x^{n-1}$ that it contains. If that multiple is 1, line 13 in Algorithm \ref{alg:1} passes the test; if that multiple is a power of $2$, line 6 or 10 in Subroutine \ref{sub:3} passes the test. \end{proof}

\begin{theorem}\label{thm:sub3}From Algorithm~\ref{alg:1} with Subroutine~\ref{sub:3}, $\beta\alpha_1,\dots,\beta\alpha_n$ is a $\bZ$-basis for an ideal $\beta\fa$ that is coprime to the conductor in the totally real number ring $\bZ[\lambda]$. Furthermore, if $B$ is the matrix with $i^\text{th}$ column $\Sigma(\alpha_i\beta)$ and $A$ is its original value in line 2, then $AB^{-1}\Sigma(\beta\fa)=M\bZ^n$, and $\kappa_2(AB^{-1})<\kappa$.\end{theorem}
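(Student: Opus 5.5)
The plan is to reuse the two-step structure of the proof of Theorem~\ref{thm:sub2}, changing only the algebraic half. The first job is to show that $f(x)=\det(xI_n-A)$ is irreducible, so that $\bZ[\lambda]\simeq\bZ[x]/(f(x))$ is an order in a degree-$n$ field. The final $A$ still satisfies $A\equiv S_q\bmod q$. Lines 2, 7 and 11 of Subroutine~\ref{sub:3} only add multiples of $q$, and the scalar in line 2 is a multiple of $q$ since $q\mid q'$. Hence $A\equiv S_2\bmod 2$, and $f$ is irreducible mod $2$ by Lemma~\ref{lem:S2}. It is also monic, so it is irreducible over $\bQ$.

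Because $A$ is symmetric, all its eigenvalues are real, so $\bQ(\lambda)$ is totally real. As in Theorem~\ref{thm:sub2}, comparing columns of $AE=ED$ shows that the $\alpha_i$ (the last column of $\mathrm{adj}(\lambda I_n-A)$) span an ideal $\fa$ of $\bZ[\lambda]$. They form a $\bZ$-basis because $E^T$ is invertible: its columns are eigenvectors for distinct eigenvalues. Multiplying by $\beta\neq 0$ keeps a $\bZ$-basis of $\beta\fa$.

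For coprimality to the conductor, I would use the standard fact $\ff(\bZ[\lambda])\supseteq f'(\lambda)\,\cO_K\cap\bZ[\lambda]\ni f'(\lambda)$; it follows from $f'(\lambda)\cO_K\subseteq\bZ[\lambda]$ via the dual-basis description of $\bZ[\lambda]^\vee$. So it suffices to show that $\fa+(f'(\lambda))$ and $\beta\bZ[\lambda]+(f'(\lambda))$ are both the unit ideal.

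\begin{itemize}
\item For $\beta$, this is exactly the exit condition of the while loop in lines 13--14 of Algorithm~\ref{alg:1}, which terminates by Lemma~\ref{lem:beta}.
\item For $\fa$, line 10 at $j=n$ gives $2\in\sqrt{(\fa_n,f_n'(x))}$, i.e.\ coprimality at all odd primes.
\item At $p=2$, note that $f\bmod 2$ is irreducible and separable, since $\mathbb{F}_2$ is perfect. Then $f'(x)$ is a unit modulo $(2,f(x))$, the unique prime over $2$. So $(\fa_n,f'(x),2,f(x))=\bZ[x]$ holds automatically.
\end{itemize}

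Passing to $\bZ[x]/(f(x))$, where $f=f_n\in\fa_n$, yields $\fa+(f'(\lambda))=\bZ[\lambda]$. Since the product of two ideals each comaximal with $\ff$ is comaximal with $\ff$, $\beta\fa$ is coprime to the conductor, hence invertible.

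For the lattice statements, the chain (\ref{eq:main}) gives $AB^{-1}\Sigma(\beta\fa)=A\bZ^n=M\bZ^n$ for the original $A$ from line 2 of Algorithm~\ref{alg:1}, because $B\bZ^n=\Sigma(\beta\fa)$. For the condition number, I would factor $AB^{-1}=(A_1B_1^{-1})(A_2B_2^{-1})$ exactly as before and bound each factor by $\sqrt\kappa$ using Lemma~\ref{lem:condition}.

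\begin{itemize}
\item The second factor is handled verbatim by (\ref{eq:CC'}) with line 8 of Algorithm~\ref{alg:1}. The returned $r$ enters only as a multiplier there.
\item The added increments $\beta\mapsto\beta+k$ with $k\le r$ must also be absorbed. I would check that the choice in line 8 leaves room for this, or else enlarge $r$ by the additive $r$ of Lemma~\ref{lem:beta}.
\item For the first factor, let $s=3q'\lceil\log(q'\|A\|_2)\rceil$ be the line-2 scalar, and write $B_1=sA_1+S_q+P$, where $P$ is the tridiagonal perturbation (\ref{eq:ajbj}). Lemma~\ref{lem:ajbj} gives entries $\le qr$, so $\|P\|_2\le 2qr$.
\item Then $\|A_1-s^{-1}B_1\|_2\le s^{-1}(nq/2+2qr)$. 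This must be at most $(\sqrt\kappa-1)/(\|A_1^{-1}\|_2(\sqrt\kappa+1))$, which is the purpose of the factor $n^2q\lceil\cdots\rceil$ inside $q'$ from line 1.
\end{itemize}

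The main obstacle is this last inequality. The term $r=n^2\log\|A\|_2$ is computed after the scaling, so it grows like $\log s$. It therefore has to be checked that $s$, via its $\log(q'\|A\|_2)$ factor, dominates $2qr+nq/2$ with enough room. I expect a routine estimate of the form $\log(s\|A_1\|)\ll \lceil\log(q'\|A_1\|_2)\rceil\cdot n^2$ to suffice, possibly using (\ref{eq:qbound}).
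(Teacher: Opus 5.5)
Your proof follows the paper's route, and the algebraic half and the lattice equality are fine. $A\equiv S_q \bmod q$ gives irreducibility mod $2$ and a totally real field. Line~10 at $j=n$, plus separability of $f \bmod 2$ (the $p=2$ case of Lemma~\ref{lem:disc}), gives $\fa+(f'(\lambda))=\bZ[\lambda]$, and $f'(\lambda)\in\ff$ transfers this to the conductor. The split of $\kappa_2(AB^{-1})$ into two factors, each bounded by Lemma~\ref{lem:condition}, is also the paper's.

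Your worry about the increments of $\beta$ resolves as in the paper. Adding $k\le r$ to $\beta$ shifts every embedding by exactly $k$, so the error in (\ref{eq:CC'}) is below $n\lambda^{n-1}+r\le rn\lambda^{n-1}$, which is what the factor $r$ in line~8 pays for.

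The gap is the inequality you call the main obstacle, and the estimate you propose would not close it. Put $L=\log(q'\|A_1\|_2)$, $X=\|A_1^{-1}\|_2(\sqrt\kappa+1)/(\sqrt\kappa-1)$, and let $t=\|A\|_2$ be the value used in line~3. Then $s=3n^2q\lceil X\rceil\lceil L\rceil$, $r=n^2\log t$, and the target from Lemma~\ref{lem:condition} is $1/X$. The $n^2$ in $r$ exactly cancels the $n^2$ in $q'$:
\[\frac{nq/2+2qr}{s}=\frac{\frac{1}{2n}+2\log t}{3\lceil X\rceil\lceil L\rceil},\]
so you need $2\log t+\frac{1}{2n}\le 3\lceil L\rceil$. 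There is no polynomial room. A bound of your form $\log(s\|A_1\|_2)\ll n^2\lceil L\rceil$ loses precisely the factor $n^2$.

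The real margin is the constant ratio $3:2$. Line~2 gives $t\le 3q'\lceil L\rceil\|A_1\|_2+nq/2$, hence $\log t\le L+\log(3\lceil L\rceil)+\epsilon$ with $\epsilon$ negligible. Since $L\ge\log(n^2q)$ is large, $2\log(3\lceil L\rceil)+\frac{1}{2n}+2\epsilon<L$. The paper packages this as (\ref{eq:tlogt}), $t/\log t>\frac52 q'\|A_1\|_2$, followed by $\frac{2qr+nq/2}{s}<\frac{5\|A_1\|_2 qn^2\log t}{2t}<\frac{qn^2}{q'}$.

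Because the margin is only $3:2$, your bound $\|P\|_2\le 2qr$ is load-bearing, and it does not follow from the entries being at most $qr$. For $n\ge 3$ the middle rows of (\ref{eq:ajbj}) carry three perturbations $a_j,b_j,a_{j+1}$, so row sums give only $3qr$. This is essentially sharp: with every perturbation equal to $c$, the top eigenvalue of $P$ is about $2.25c$ already for $n=3$ and tends to $3c$.

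With $3qr$ the requirement becomes $\log t+\frac{1}{6n}\le\lceil L\rceil$. That is false, since $t\ge 3q'\lceil L\rceil\|A_1\|_2-nq/2$ gives $\log t\ge L+\log(3\lceil L\rceil)-\epsilon'$ with $\epsilon'$ negligible. The paper's displayed estimate rests on the same ``at most two perturbations per row'' count, so following it does not close this step either. Some repair is needed, for instance a constant larger than $3$ in line~2 of Subroutine~\ref{sub:3}.
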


\begin{proof}As in the proof of Theorem~\ref{thm:sub2}, let $A_1=A$ and $B_1$ denote the input and output of Subroutine~\ref{sub:3}, and let $A_2=B_1$ and $B_2=B$.

First observe that $\bQ(\lambda)$ is a totally real number field of degree $n$. Indeed, all perturbations found in the \textbf{for} loop of Subroutine~\ref{sub:3} are even and preserve symmetry. Thus $B_1\equiv S_2\,\text{mod}\,2$ by line 2 of Subroutine~\ref{sub:3}. From Lemma~\ref{lem:S2}, the characteristic polynomial of $S_2$ is irreducible mod$\,2$. Hence the characteristic polynomial of $A_2=B_1$, which is the minimal polynomial of $\lambda$ is irreducible with only real roots.

Next, coprimality of $\beta$ and $\ff(\bZ[\lambda])$ is ensured by the \textbf{while} loop in Algorithm~\ref{alg:1}. Line 10 in Subroutine~\ref{sub:3} shows that $(\fa,f'(\lambda))$ is not contained in any prime ideals above some $p>4n^2$, which implies the same for $(\fa,\ff(\bZ[\lambda]))$. For primes $p<4n^2$, we have Lemma~\ref{lem:disc}.

Finally, we check that $\kappa_2(A_iB_i^{-1})<\sqrt{\kappa}$ for $i=1,2$. The proof for $A_2B_2^{-1}$ is almost identical to that in the proof of Theorem~\ref{thm:sub2}. Let us use the same notation for the diagonal matrices $C$ and $C'$. The only difference now is that the initial upper bound of $1$ in line 10 of Algorithm~\ref{alg:1} can grow throughout the \textbf{while} loop in lines 12--13, but it is capped at $\lfloor r\rfloor$ by Lemma~\ref{lem:beta}. This is perfectly accounted for by the presence of $r$ on the right side of line 7 in Algorithm~\ref{alg:1}. So the analysis in (\ref{eq:CC'}) holds, and we obtain $\kappa_2(A_2B_2^{-1})<\sqrt{\kappa}$. 

To deal with $A_1B_1^{-1}$, we need an analog of (\ref{eq:A1B1}). Let $t$ denote the value of ``$\|A\|_2$'' in line 3 of Subroutine~\ref{sub:3}, just as in the proof of Lemma~\ref{lem:ajbj}. By line 2, \[t\geq 3q'\|A_1\|_2\log(q'\|A_1\|_2)-\|S_q\|\geq 3q'\|A_1\|_2\log(q'\|A_1\|_2)-\frac{q'}{2n},\] which implies \begin{equation}\label{eq:tlogt}\frac{t}{\log t} > \frac{5}{2}q'\|A_1\|_2.\end{equation}

The difference between $3q'\lceil\log (q'\|A_1\|_2)\rceil A_1$ and $B_1$ consists of entries from $S_q$ and $a_j$ and $b_j$. Lemma~\ref{lem:ajbj} bounds $a_j$ and $b_j$ by $qr$, and the entries of $S_q$ are bounded in magnitude by $\frac{q}{2}$. There are at most two perturbations $a_j$ and $b_j$ in each row and column of $B_1$. Since the $1$-norm and $\infty$-norm of a matrix are bounded by the maximum sum of entry magnitudes among its columns and rows, we have  \begin{align*}&& \left\|A_1-\frac{B_1}{3q'\lceil\log (q'\|A\|_2)\rceil}\right\|_2&\leq\max_{p=1,\infty}\left\|A_1-\frac{B_1}{3q'\lceil\log (q'\|A\|_2)\rceil}\right\|_p\hspace{-1.5cm} && \\ && &\leq\frac{2qr+nq/2}{3q'\lceil\log (q'\|A\|_2)\rceil} && \\ && &<\frac{5\|A_1\|_2qr}{2(3q'\|A_1\|_2\lceil\log (q'\|A_1\|_2)\rceil+nq/2)} \hspace{-1.5cm}&&\\ && &\leq \frac{5\|A_1\|_2qn^2\log t}{2t} && \text{by definition of }r\text{ and }t\\ && &<\frac{qn^2}{q'}&& \text{by (\ref{eq:tlogt})}\\ & && \leq \frac{\sqrt{\kappa}-1}{\|A_1^{-1}\|_2(\sqrt{\kappa}+1)} && \text{by definition of }q'.\end{align*} It follows that $\kappa_2(A_1B_1^{-1})<\sqrt{\kappa}$ by Lemma~\ref{lem:condition}.\end{proof}

This justifies Theorem~\ref{thm:main} in the introduction. We can also now prove the claimed hardness result from the introduction.

\begin{corollary}Restricted to the canonical embedding of ideals coprime to the conductor in totally real, monogenic number rings, $\SVP$ and $\CVP$ are \NP-hard to approximate within factors of $\sqrt{2}$ and $\smash{n^{\frac{1}{2}-\varepsilon}}$, respectively.\end{corollary}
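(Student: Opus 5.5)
The proof is a polynomial-time, dimension-preserving Karp reduction. It maps an instance of generic $\SVP_2$ (resp.\ $\CVP_2$) to an instance on an ideal lattice of the stated type, and then invokes the known generic hardness results: Wan for $\sqrt{2}$-$\SVP_2$ and Song for $n^{\frac12-\varepsilon}$-$\CVP_2$. The reduction runs Algorithm~\ref{alg:1} with Subroutine~\ref{sub:3} on the input basis $M$. The algorithm itself comes from Theorems~\ref{thm:time3} and~\ref{thm:sub3}. The only real work is choosing the condition number bound $\kappa$ so that approximation factors are preserved.

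\textbf{Handling the approximation factor.} I would not lose a factor $\kappa$. Instead I would absorb it with Proposition~\ref{prop:exact}.
\begin{itemize}
\item We may assume the generic hard instances are integer lattices $\Lambda=M\bZ^n$ and, for $\CVP$, have targets with $d\bx\in\bZ^n$, where $d$ is of polynomial bit length. We may also assume the hardness factor satisfies $\gamma^q=s/t$ for suitable small integers; for instance $\gamma=\sqrt2$ gives $q=2$, $s=2$, $t=1$.
\item For $\n^{\frac12-\varepsilon}$, first replace $\gamma$ by a slightly smaller rational $\gamma'$ with $\gamma'^2$ rational, which changes $\varepsilon$ negligibly. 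Hardness at $\gamma$ implies hardness at any smaller factor.
\item Next, bound $\lambda_1(\Lambda)\le\lambda_n(\Lambda)\le\sqrt n\,\mu$, where $\mu$ is the maximum entry of $M$. This upper bound is computable from the input.
\item Choose $\kappa$ with
\[\kappa^q<1+\bigl(s(dn\sqrt n\,\mu)^q\bigr)^{-1}.\]
Such a $\kappa$ has polynomial bit length, so Algorithm~\ref{alg:1} still runs in polynomial time.
\end{itemize}

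\textbf{Transferring a solution back.} Let $B$ be the output basis and $T=AB^{-1}$, so that $T\Sigma(\beta\fa)=M\bZ^n$ by Theorem~\ref{thm:sub3}. Suppose an oracle solves $\gamma$-$\SVP_2$ on $\Sigma(\beta\fa)$. For $\CVP_2$, we first map the target by $T^{-1}$; this is exactly the setup of Proposition~\ref{prop:condition}. Multiplying the oracle's answer by $T$ gives a $\kappa\gamma$-solution in $M\bZ^n$, by Proposition~\ref{prop:condition} together with $\kappa_2(T)<\kappa$. Proposition~\ref{prop:exact} then upgrades it to a $\gamma$-solution. Both lattices have dimension $n$, so the factor $n^{\frac12-\varepsilon}$ refers to the same $n$.

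\textbf{Expected obstacle: finite precision.} $B$ has real (algebraic) entries, so the reduction must output a finite description of the instance. I would hand the oracle the ideal itself, that is, the integers defining $f$ and $\beta\alpha_i$. Theorem~\ref{thm:main} says these are of polynomial size. The only place the transformation $T$ is used numerically is in mapping the solution back. There I would compute the coefficient vector of the answer in the basis $\beta\alpha_i$ exactly and apply it to $M$'s basis, which is exact. For $\CVP$, the mapped target $T^{-1}\bx$ must be given approximately. I would split $\kappa$ into two factors, one for the lattice approximation and one for the rounding of the target, via Lemma~\ref{lem:condition}, and rely again on the integrality slack of Proposition~\ref{prop:exact}.
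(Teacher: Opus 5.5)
Your $\SVP$ argument is the paper's: Wan's instances are integer lattices, so Proposition~\ref{prop:exact} (with $q=s=2$, $t=1$) lets a $\kappa$ of polynomial bit length preserve the factor $\sqrt{2}$ exactly. For $\CVP$ you take a different route. The paper never invokes Proposition~\ref{prop:exact} there. It runs Algorithm~\ref{alg:1} with a fixed constant such as $\kappa=2$, so Proposition~\ref{prop:condition} turns an $n^{\frac12-\varepsilon}$-solution in $\Sigma(\beta\fa)$ into a $2n^{\frac12-\varepsilon}$-solution in $M\bZ^n$. The constant is then absorbed by the slack in Song's exponent: $2n^{\frac12-\varepsilon}\le n^{\frac12-\varepsilon'}$ for $\varepsilon'<\varepsilon$ and $n$ large. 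Your version (rationalize the factor, require $d\bx\in\bZ^n$, and pick an instance-dependent $\kappa$ from $d$, $s$, $\mu$) is correct but heavier. What it buys is uniformity, and the ability to transfer $\CVP$ hardness at a \emph{fixed} factor (exact or constant), which the paper's shortcut cannot do.

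The paper's proof says nothing about how instances and answers are represented, so your extra care there is useful, but it contains two slips.

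First, $T=AB^{-1}$ sends $B\mathbf{e}_i$ to $A\mathbf{e}_i$, where $A=USU^T$ is the symmetric basis from line~2 of Algorithm~\ref{alg:1}. It does not send it to $M\mathbf{e}_i$. So if the oracle returns $\bv=B\bc$, you must output $T\bv=A\bc=M(VU^T\bc)$. The vector $M\bc=A\,U^{-T}V^{-1}\bc$ lies in the same lattice, but the unimodular factor $U^{-T}V^{-1}$ can badly distort lengths, so it carries no length guarantee.

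Second, Lemma~\ref{lem:condition} controls perturbations of a basis, not translations of the target. So ``splitting $\kappa$'' does not account for rounding $T^{-1}\bx$. In fact no rounding is needed: $T^{-1}\bx=BA^{-1}\bx=\Sigma\bigl(\beta\sum_i(A^{-1}\bx)_i\alpha_i\bigr)$ is the canonical embedding of an explicit element of $\bQ(\lambda)$ with rational coordinates of polynomial size. If you do want a rounded target, use a direct additive estimate. With $\delta$ the rounding error, $|T\bv-\bx|_2\le\kappa\gamma\min_{\bu\in M\bZ^n}|\bu-\bx|_2+\|T\|_2(\gamma+1)\delta$. Combined with the integrality slack of Proposition~\ref{prop:exact}, this does the job.
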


\begin{proof}This is almost immediate from Theorems \ref{thm:time3} and \ref{thm:sub3}. It remains only to check that the value of $\kappa$ needed to preserve the factors $\sqrt{2}$ and $\smash{n^{c/\log\log n}}$ is not so small that it destroys running time. 

In \cite{wan}, Wan proves that $\sqrt{2}$-$\SVP$ is \NP-hard in the setting of integer lattices. In other words, Proposition~\ref{prop:exact} applies---we may perfectly preserve the factor $\sqrt{2}$ by making $\kappa$ sufficiently small in Algorithm~\ref{alg:1}. Note that the bound Proposition~\ref{prop:exact} places on $\kappa$ gives $\lceil (\sqrt{\kappa}-1)/(\sqrt{\kappa}-1)\rceil$ bit length that is polynomial in the bit length of $A$ from line 2 of Algorithm~\ref{alg:1}, which is polynomial in the bit length of $M$ using Storjohann's Smith Normal Form algorithm \cite{storjohann}.

We need not bother with Proposition~\ref{prop:exact} to prove \NP-hardness of $\CVP$. Song's factor of $\smash{n^{\frac{1}{2}-\varepsilon}}$ is not fixed like $\sqrt{2}$. For example, $\kappa=2$ works. \end{proof}

\printbibliography

@inproceedings{dwork,
  title={A public-key cryptosystem with worst-case/ average-case equivalence},
  author={Ajtai, Mikl{\'o}s and Dwork, Cynthia},
  booktitle={Proceedings of the Twenty-Ninth Annual {ACM} Symposium on Theory of Computing},
  pages={284--293},
  year={1997},
  doi={10.1145/258533.258604}
}

@inproceedings{hoffstein,
  title={{NTRU}: A ring-based public key cryptosystem},
  author={Hoffstein, Jeffrey and Pipher, Jill and Silverman, Joseph H.},
  booktitle={International Algorithmic Number Theory Symposium ({ANTS-III})},
  pages={267--288},
  year={1998},
  publisher={Springer},
  volume={1423},
  series={Lecture Notes in Computer Science},
  doi={10.1007/BFb0054868}
}

@inproceedings{ajtai,
  title={Generating hard instances of lattice problems},
  author={Ajtai, Mikl{\'o}s},
  booktitle={Proceedings of the Twen\-ty-Eighth Annual {ACM} Symposium on Theory of Computing},
  pages={99--108},
  year={1996},
  doi={10.1145/237814.237838}
}

@inproceedings{regev,
  author    = {Regev, Oded},
  title     = {On lattices, learning with errors, random linear codes, and cryptography},
  booktitle = {Proceedings of the Thirty-Seventh Annual ACM Symposium on Theory of Computing},
  year      = {2005},
  publisher = {ACM},
  doi       = {10.1145/1060590.1060603}
}

@inproceedings{peikert,
  author  = {Peikert, Chris},
  title   = {Public-key cryptosystems from the worst-case shortest vector problem},
  booktitle = {Proceedings of the Forty-First Annual ACM Symposium on Theory of Computing},
  year    = {2009},
  doi     = {10.1145/1536414.1536461}
}

@article{gentry,
  author  = {Gentry, Craig},
  title   = {Fully homomorphic encryption using ideal lattices},
  journal = {Proceedings of the Forty-First Annual ACM Symposium on Theory of Computing},
  year    = {2009},
  pages   = {169--178},
  doi     = {10.1145/1536414.1536440}
}

@inproceedings{brakerski,
  author    = {Brakerski, Zvika and Vaikuntanathan, Vinod},
  title     = {Efficient Fully Homomorphic Encryption from (Standard) {LWE}},
  booktitle = {Proceedings of the IEEE 52nd Annual Symposium on Foundations of Computer Science ({FOCS})},
  year      = {2011},
  pages     = {97--106},
  publisher = {IEEE},
  doi       = {10.1109/FOCS.2011.12}
}

@article{lyub,
  author  = {Lyubashevsky, Vadim and Micciancio, Daniele},
  title   = {Generalized compact knapsacks are collision resistant},
  journal = {Lecture Notes in Computer Science},
  year    = {2006},
  pages   = {144--155},
  doi     = {10.1007/11787006_13}
}

@article{micc2,
  author  = {Micciancio, Daniele},
  title   = {Generalized compact knapsacks, cyclic lattices, and efficient one-way functions},
  journal = {Computational Complexity},
  year    = {2007},
  volume  = {16},
  pages   = {365--411},
  doi     = {10.1007/s00037-007-0234-9}
}

@misc{campbell,
  title={{Soliloquy}: A cautionary tale},
  author={Campbell, Peter and Groves, Michael and Shepherd, Dan},
  year={2014},
  howpublished={\href{http://docbox.etsi.org/Workshop/2014/201410_CRYPTO/S07_Systems_and_Attacks/S07_Groves_Annex.pdf}{\nolinkurl{docbox.etsi.org/Workshop/2014/201410_CRYPTO/S07_Systems_and_Attacks/S07_Groves_Annex.pdf}}},
  note={Presented at the ETSI 2nd Quantum-Safe Crypto Workshop}
}

@inproceedings{cramer,
  title={Recovering short generators of principal ideals in cyclotomic rings},
  author={Cramer, Ronald and Ducas, L{\'e}o and Peikert, Chris and Regev, Oded},
  booktitle={Annual International Conference on the Theory and Applications of Cryptographic Techniques},
  pages={559--585},
  year={2016},
  organization={Springer},
  doi={10.1007/978-3-662-49890-3_21}
}

@inproceedings{bauch,
  title={Short generators without quantum computers: {T}he case of multiquadratics},
  author={Bauch, Jens and Bernstein, Daniel J. and Valence, Henry de and Lange, Tanja and Vredendaal, Christine van},
  booktitle={Annual International Conference on the Theory and Applications of Cryptographic Techniques},
  pages={27--59},
  year={2017},
  organization={Springer},
  doi={10.1007/978-3-319-56620-7_2}
}

@article{pellet,
  author  = {Pellet-Mary, Alice and Hanrot, Guillaume and Stehl{\'e}, Damien},
  title   = {{Approx-SVP} in ideal lattices with pre-processing},
  journal = {Lecture Notes in Computer Science},
  year    = {2019},
  pages   = {685--716},
  doi     = {10.1007/978-3-030-17656-3_24}
}

@techreport{vanEmde,
  author      = {van Emde Boas, Peter},
  title       = {Another {NP}-complete problem and the complexity of computing short vectors in a lattice},
  institution = {Mathematical Institute, University of Amsterdam},
  year        = {1981},
  number      = {81-04},
  type        = {Report}
}

@article{dinur,
  author  = {Dinur, Irit and Kindler, Guy and Raz, Ran and Safra, Shmuel},
  title   = {Approximating {CVP} to Within Almost-Polynomial Factors is {NP}-Hard},
  journal = {Combinatorica},
  volume  = {23},
  number  = {2},
  pages   = {205--243},
  year    = {2003},
  publisher = {Springer}
}

@inproceedings{ajtai2,
  author    = {Ajtai, Mikl\'{o}s},
  title     = {The shortest vector problem in {$L_2$} is {NP}-hard for randomized reductions},
  booktitle = {Proceedings of the Thirtieth Annual ACM Symposium on Theory of Computing},
  year      = {1998},
  pages     = {10--19},
  publisher = {ACM},
  doi       = {10.1145/276698.276705}
}

@article{haviv,
  author  = {Haviv, Ishay and Regev, Oded},
  title   = {Tensor-based Hardness of the Shortest Vector Problem to within Almost Polynomial Factors},
  journal = {Theory of Computing},
  volume  = {8},
  number  = {1},
  pages   = {23--31},
  year    = {2012},
  doi     = {10.4086/toc.2012.v008a001}
}

@misc{wan,
    title={{NP}-hardness of {SVP} in {Euclidean} space}, 
    author={Wan, Daqing},
    howpublished = "\href{https://arxiv.org/abs/2603.27398}{\url{arXiv:2603.27398}}",
    year={2026}
}

@article{shoup,
  title={New algorithms for finding irreducible polynomials over finite fields},
  author={Shoup, Victor},
  journal={Mathematics of Computation},
  volume={54},
  number={189},
  pages={435--447},
  year={1990},
  publisher={American Mathematical Society}
}

@book{dickson,
  title={Linear Groups: With an Exposition of the Galois Field Theory},
  author={Dickson, Leonard Eugene},
  year={1901},
  publisher={B. G. Teubner},
  address={Leipzig}
}

@phdthesis{storjohann,
  title={Algorithms for matrix canonical forms},
  author={Storjohann, Arne},
  year={2000},
  school={ETH Zurich}
}

@inproceedings{rosca,
  title={{On the ring-LWE and polyno\-mial-LWE problems}},
  author={Rosca, Miruna and Stehl{\'e}, Damien and Wallet, Alexandre},
  booktitle={Annual International Conference on the Theory and Applications of Cryptographic Techniques},
  pages={146--173},
  year={2018},
  organization={Springer}
}

@inproceedings{rosen,
  title={Efficient collision-resistant hashing from worst-case assumptions on cyclic lattices},
  author={Peikert, Chris and Rosen, Alon},
  booktitle={Theory of Cryptography Conference},
  pages={145--166},
  year={2006},
  organization={Springer}
}

@inproceedings{peik,
  title={Pseudorandomness of ring-LWE for any ring and modulus},
  author={Peikert, Chris and Regev, Oded and Stephens-Davidowitz, Noah},
  booktitle={Proceedings of the 49th Annual ACM SIGACT Symposium on Theory of Computing},
  pages={461--473},
  year={2017}
}

@inproceedings{stehle,
  title={Efficient public key encryption based on ideal lattices},
  author={Stehl{\'e}, Damien and Steinfeld, Ron and Tanaka, Keisuke and Xagawa, Keita},
  booktitle={International Conference on the Theory and Application of Cryptology and Information Security},
  pages={617--635},
  year={2009},
  organization={Springer}
}

@inproceedings{lyub2,
  title={On ideal lattices and learning with errors over rings},
  author={Lyubashevsky, Vadim and Peikert, Chris and Regev, Oded},
  booktitle={Annual International Conference on the Theory and Applications of Cryptographic Techniques},
  pages={1--23},
  year={2010},
  organization={Springer}
}

@inproceedings{bennett,
  title={{Hardness of the (Approximate) Shortest Vector Problem: A Simple Proof via Reed-Solomon Codes}},
  author={Huck Bennett and Chris Peikert},
  booktitle={International Workshop on Approximation, Randomization, and Combinatorial Optimization: Algorithms and Techniques},
  year={2022},
  url={https://api.semanticscholar.org/CorpusID:246867325}
}

@misc{wang,
    title={Square-free discriminants of matrices and the generalized spectral characterizations of graphs}, 
    author={Wang, Wei and Yu, Tao},
    howpublished = "\href{https://arxiv.org/abs/1608.01144}{\url{arXiv:1608.01144}}",
    year={2016}
}

@inproceedings{bolb,
  title={{Order-LWE and the hardness of ring-LWE with entropic secrets}},
  author={Bolboceanu, Madalina and Brakerski, Zvika and Perlman, Renen and Sharma, Devika},
  booktitle={International Conference on the Theory and Application of Cryptology and Information Security},
  pages={91--120},
  year={2019},
  organization={Springer}
}

@article{pepin,
  title={Algebraically Structured LWE, Revisited},
  author={Peikert, Chris and Pepin, Zachary},
  journal={Journal of Cryptology},
  volume={37},
  number={3},
  pages={28},
  year={2024},
  publisher={Springer}
}

@article{bennett2,
  title={The complexity of the shortest vector problem},
  author={Bennett, Huck},
  journal={ACM SIGACT News},
  volume={54},
  number={1},
  pages={37--61},
  year={2023},
  publisher={ACM New York, NY, USA}
}

@article{decade,
  title={A decade of lattice cryptography},
  author={Peikert, Chris},
  journal={Foundations and Trends in Theoretical Computer Science},
  volume={10},
  number={4},
  pages={283--424},
  year={2016},
  publisher={Now Publishers, Inc.}
}

@book{goldwasser,
  title={Complexity of lattice problems: a cryptographic perspective},
  author={Micciancio, Daniele and Goldwasser, Shafi},
  volume={671},
  year={2012},
  publisher={Springer Science \& Business Media}
}

@misc{hair,
    title={{SVP$_p$ is NP-hard for all $p>2$, Even to approximate Within a factor of $\smash{2^{\log^{1-\varepsilon}n}}$}}, 
    author={Hair, Isaac and Sahai, Amit},
    howpublished = "\href{https://arxiv.org/html/2511.04125v1}{\url{arXiv:2511.04125}}",
    year={2026}
}

@book{basu,
  title     = {Algorithms in Real Algebraic Geometry},
  author    = {Basu, Saugata and Pollack, Richard and Roy, Marie-Fran{\c{c}}oise},
  year      = {2006},
  publisher = {Springer},
  address   = {Berlin, Heidelberg},
  doi       = {10.1007/3-540-33099-2}
}

@misc{stack,
    author = {GreginGre (math.stackexchange.com/users/447764/gregingre)},
    title = {Existence of orthogonal base for finite {G}alois extension over characteristic 2},
    howpublished = {Mathematics Stack Exchange},
    year = {2020},
    note = "\href{https://math.stackexchange.com/questions/3534461/existence-of-orthogonal-base-for-finite-galois-extension-over-characteristic-2}{\url{math.stackexchange.com/questions/3534461}} (accessed: 2026-09-02)"
}

@article{eberhard,
  title={The characteristic polynomial of a random matrix},
  author={Eberhard, Sean},
  journal={Combinatorica},
  volume={42},
  number={4},
  pages={491--527},
  year={2022},
  publisher={Springer}
}

@article{ferber,
  title={Random symmetric matrices: rank distribution and irreducibility of the characteristic polynomial},
  author={Ferber, Asaf and Jain, Vishesh and Sah, Ashwin and Sawhney, Mehtaab},
  journal={Mathematical Proceedings of the Cambridge Philosophical Society},
  volume={174},
  number={2},
  pages={257--270},
  year={2023},
  publisher={Cambridge University Press}
}

@article{vander,
  title={Die Seltenheit der Gleichungen mit Affekt},
  author={van der Waerden, Bartel Leendert},
  journal={Mathematische Annalen},
  volume={109},
  number={1},
  pages={13--16},
  year={1934},
  publisher={Springer}
}

@article{bhargava,
  author  = {Bhargava, Manjul and Shankar, Arul and Wang, Xiaoheng},
  title   = {Squarefree values of polynomial discriminants I},
  journal = {Inventiones mathematicae},
  year    = {2022},
  volume  = {228},
  pages   = {1037--1073},
  doi     = {10.1007/s00222-022-01098-w}
}

@article{lenstra,
  title={Algorithms in algebraic number theory},
  author={Lenstra, Hendrik W},
  journal={Bulletin of the American Mathematical Society},
  volume={26},
  number={2},
  pages={211--244},
  year={1992}
}

@article{shor,
  author  = {Shor, Peter W.},
  title   = {Polynomial-time algorithms for prime factorization and discrete logarithms on a quantum computer},
  journal = {SIAM Journal on Computing},
  year    = {1997},
  volume  = {26},
  pages   = {1484--1509},
  doi     = {10.1137/s0097539795293172}
}

@article{dusart,
  title={Explicit estimates of some functions over primes},
  author={Dusart, Pierre},
  journal={The Ramanujan Journal},
  volume={45},
  number={1},
  pages={227--251},
  year={2018},
  publisher={Springer}
}

@article{rosser,
  author  = {Rosser, J. Barkley and Schoenfeld, Lowell},
  title   = {Approximate formulas for some functions of prime numbers},
  journal = {Illinois Journal of Mathematics},
  year    = {1962},
  volume  = {6},
  doi     = {10.1215/ijm/1255631807}
}

@article{mildly,
  title={Mildly short vectors in cyclotomic ideal lattices in quantum polynomial time},
  author={Cramer, Ronald and Ducas, L{\'e}o and Wesolowski, Benjamin},
  journal={Journal of the ACM (JACM)},
  volume={68},
  number={2},
  pages={1--26},
  year={2021},
  publisher={ACM New York, NY, USA}
}

@misc{bhargava2,
    title={The geometric sieve and the density of squarefree values of invariant polynomials}, 
    author={Bhargava, Manjul},
    howpublished = "\href{https://arxiv.org/abs/1402.0031}{\url{arXiv:1402.0031}}",
    year={2014}
}

@article{ekedahl,
  title={An infinite version of the Chinese remainder theorem},
  author={Ekedahl, Torsten},
  journal={Commentarii mathematici Universitatis Sancti Pauli= Rikkyo Daigaku sugaku zasshi},
  volume={40},
  number={1},
  pages={53--59},
  year={1991},
  publisher={dummy publisher}
}

@misc{song,
    title={{Hardness of Euclidean Closest Vector within $n^{1/2-\epsilon}$ and Binary
Nearest Codeword within $n^{1-\epsilon}$}}, 
    author={Song, Zhao},
    howpublished = "Cryptology ePrint Archive, \href{https://eprint.iacr.org/2026/1655.pdf}{\url{iacr:2026/1655}}",
    year={2026}
}

@misc{liuCVP,
    title={{Exact CVP is NP-complete for principal cyclotomic
ideals}}, 
    author={Liu, Jiaqi and Feng, Yansong and Pan, Yanbin},
    howpublished = "Cryptology ePrint Archive, \href{https://eprint.iacr.org/2026/1793.pdf}{\url{iacr:2026/1793}}",
    year={2026}
}

@misc{liuSVP,
    title={{SVP is NP-Hard for some rank-2 cyclotomic modules}}, 
    author={Liu, Jiaqi and Feng, Yansong and Pan, Yanbin},
    howpublished = "Cryptology ePrint Archive, \href{https://eprint.iacr.org/2026/1856.pdf}{\url{iacr:2026/1856}}",
    year={2026}
}

@inproceedings{cai,
  title={{Approximating the SVP to within a factor $(1+1/\text{dim}^{\epsilon})$ is NP-hard under randomized conditions}},
  author={Cai, Jin-Yi and Nerurkar, Ajay},
  booktitle={Proceedings of the Thirteenth Annual IEEE Conference on Computational Complexity},
  pages={46--55},
  year={1998},
  organization={IEEE}
}

@article{miccSVP,
  title={The shortest vector in a lattice is hard to approximate to within some constant},
  author={Micciancio, Daniele},
  journal={SIAM Journal on Computing},
  volume={30},
  number={6},
  pages={2008--2035},
  year={2001},
  publisher={SIAM}
}

@article{khot,
  title={Hardness of approximating the shortest vector problem in high $\ell_p$ norms},
  author={Khot, Subhash},
  journal={Journal of Computer and System Sciences},
  volume={72},
  number={2},
  pages={206--219},
  year={2006}
}

@article{khot2,
  title={Hardness of approximating the shortest vector problem in lattices},
  author={Khot, Subhash},
  journal={Journal of the ACM (JACM)},
  volume={52},
  number={5},
  pages={789--808},
  year={2005}
}

@article{dumer,
  title={Hardness of approximating the minimum distance of a linear code},
  author={Dumer, Ilya and Micciancio, Daniele and Sudan, Madhu},
  journal={IEEE Transactions on Information Theory},
  volume={49},
  number={1},
  pages={22--37},
  year={2003},
  publisher={IEEE}
}

@misc{openai,
  author       = {{OpenAI}},
  title        = {Ten Advances in Mathematics and Theoretical Computer Science},
  note         = {Technical report, released August 1, 2026; updated August 6, 2026. \url{https://cdn.openai.com/pdf/ten-proofs-oai.pdf}}
}

\end{document}